\newtheoremstyle{myplain}{}{}{\it}
{0pt}{\scshape}{}{ }{\thmname{#1}\thmnumber{ #2}\thmnote{ (#3)}}
\newtheoremstyle{mydefinition}{}{}{}
{0pt}{\scshape}{}{ }{\thmname{#1}\thmnumber{ #2}\thmnote{ (#3)}}
\theoremstyle{myplain}
    \newtheorem{Def}{Definition}[section]
        \newtheorem{Lem}[Def]{Lemma}
        \newtheorem{theo}[Def]{Theorem}
        \newtheorem{prop}[Def]{Proposition}
        \newtheorem{rem}[Def]{Remark}
        \newtheorem{hyp}[Def]{Hypothesis}
        \newtheorem{cor}[Def]{Corollary}
\newcommand{\natop}[2]{\genfrac{}{}{0pt}{}{#1}{#2}}
\newcommand{\expord}[1]{O\left(e^{-\frac{#1}{\hbar}}\right)}
\DeclareMathOperator{\dist}{dist}
\DeclareMathOperator{\Span}{span}
\DeclareMathOperator{\supp}{supp}
\DeclareMathOperator{\spec}{spec}
\DeclareMathOperator{\diag}{diag}
\DeclareMathOperator{\dvol}{d\,vol}
\DeclareMathOperator{\Id}{id}
\DeclareMathOperator{\grad}{grad}
\newcommand{\id}{\mathds{1}}
\numberwithin{equation}{section}
\newcommand{\beqa}{\begin{eqnarray*}}
\newcommand{\eeqa}{\end{eqnarray*}}
\renewcommand{\hat}{\widehat}
\newcommand{\bauf}{\begin{itemize}}
\newcommand{\eauf}{\end{itemize}}
\newcommand{\ben}{\begin{enumerate}}
\newcommand{\een}{\end{enumerate}}
\renewcommand{\o}{\omega}
\renewcommand{\O}{\Omega}
\newcommand{\ep}{\varepsilon}
\newcommand{\R}{{\mathbb R} }
\newcommand{\C}{{\mathbb C}}
\newcommand{\N}{{\mathbb N}}
\newcommand{\hZ}{\frac{\mathbb Z}{2}}
\newcommand{\Ce}{\mathscr C}
\newcommand{\De}{\mathscr D}
\newcommand{\Ge}{\mathscr G}
\newcommand{\E}{{\mathcal E}}
\newcommand{\F}{{\mathcal F}}
\newcommand{\cL}{{\mathcal L}}
\newcommand{\End}{\mathrm{End}}
\newcommand{\Eh}{\mathscr{E}}
\newcommand{\dd}{\mathrm{d}}
\newcommand{\<}{\left\langle\!\!\left\langle}
\renewcommand{\>}{\right\rangle\!\!\right\rangle}
\newcommand{\scl}{\langle\!\langle}
\newcommand{\scr}{\rangle\!\rangle}
\title{The Tunneling Effect for Schr\"odinger operators on a vector bundle}
\author{Markus Klein \and Elke Rosenberger*}
\address{
Universit\"at Potsdam\\ Institut f\"ur
Mathematik \\ Karl-Liebknecht-Str. 24-25\\ 14476 Potsdam}
\email{mklein@math.uni-potsdam.de, erosen@uni-potsdam.de}
\date{\today}
\keywords{Laplace-type operator, vector bundle, WKB-expansion, quasimodes, tunneling, spectral gap, complete asymptotics}
\begin{document}

  \begin{abstract}
  In the semiclassical limit $\hbar\to 0$, we analyze a class of self-adjoint Schr\"odinger operators $H_\hbar = \hbar^2 L + \hbar W + V\cdot \mathrm{id}_\Eh$ acting on sections 
of a vector 
bundle $\Eh$ over an oriented Riemannian manifold $M$ where $L$ is a Laplace type operator, $W$ is an endomorphism field
  and the potential energy $V$ has non-degenerate minima at a finite number of points $m^1,\ldots m^r \in M$, called potential wells.
  Using quasimodes of WKB-type near $m^j$ 
  for eigenfunctions associated with the low lying eigenvalues of $H_\hbar$, we analyze the tunneling effect, i.e. the splitting between low lying eigenvalues, which e.g. arises in certain symmetric configurations. 
  Technically, we treat the coupling between different potential wells by an interaction matrix and we consider the case of a single minimal geodesic 
  (with respect to the associated Agmon metric) connecting two potential wells 
  and the case of a submanifold of minimal geodesics of dimension $\ell + 1$. 
  This dimension $\ell$ determines the polynomial prefactor for exponentially small eigenvalue splitting.
  \end{abstract}
  
  \maketitle 

\section{Introduction}

In this paper, we study the low lying spectrum of a
Schr\"odinger operator $H_\hbar$ on a vector bundle $\Eh$ over a smooth oriented Riemannian manifold $M$. More precisely, in the
limit $\hbar \to 0$, we analyze the 
tunneling effect for operators of the form
\[
   H_\hbar = \hbar^2 L + \hbar W + V \cdot \Id_\Eh
\]
acting on the space $\Gamma^\infty(M, \Eh)$ of smooth sections of $\Eh$, where $L$ is a symmetric Laplace type operator (i.e. a
second order differential operator on $\Eh$ with principal symbol $\sigma_L(x,\xi) = |\xi|^2$), 
$W\in \Gamma^\infty(M, \End(\Eh))$ is a smooth symmetric endomorphism field over $M$ and the potential energy $V\in \Ce^\infty(M, \R)$
has a finite number of non-degenerate minima $m^1, \ldots m^r$.

Operators of this type arise e.g.\ in Witten's perturbation of the de Rham complex, where $H_\hbar$ is the square of 
the Dirac type operator
\[
 Q_\phi= \hbar \,  \bigl( \dd_\phi + \dd^*_\phi  \bigr) ,   \qquad \dd_\phi= e^{-\phi/\hbar} \dd  e^{\phi/\hbar}
=\dd + \dd\phi \wedge, 
\]
where $\phi$ is a Morse function.
This operator acts on (the sections of) $\Eh=\Lambda^p M$ taking values in $\bigoplus  \Lambda^p M$, while its square
$P=Q_\phi^2$
maps the sections of $\Eh$ into itself. More explicitly, it is given by
\[
 P= \hbar^2 (\dd \dd^* + \dd^* \dd) + \hbar (\cL_{\grad \phi} + \cL_{\grad \phi}^*) + \|\dd \phi \|^2 \Id_\Eh,
 \]
 where $\cL_X$ denotes the Lie derivative in the direction of the vector field $X$, and $\grad \phi$ is the gradient of $\phi$ with respect  to the Riemannian metric $g$. Then 
$W:= \cL_{\grad \phi} + \cL_{\grad \phi}^*$ actually is $\Ce^\infty(M)$ linear and thus an endomorphism field as described above.
In this particular case, the endomorphism $W$ is non-vanishing, which is the reason for us to include this (somewhat unusual) term in our considerations. 
This operator has been considered in detail in \cite{helffer-sjostrand-4}, giving much mathematical detail to the original paper  \cite{witten};  see also \cite{hkn}, which derives full asymptotic expansions of all low-lying eigenvalues 
using an inductive approach which builds on the results of \cite{begk}, \cite{bgk}, \cite{eck} for generators of reversible diffusion operators, using a 
potential theoretic approach based on estimating capacities. It seems to be open if the latter approach could also be applied to operators as considered in this 
paper, and it is also open if {\em all} of the low-lying spectrum of operators as considered in this paper could be analysed by methods close to \cite{hkn}. Here the Witten-Laplacian seems to be special.

We use semi-classical quasimodes of WKB-type constructed in \cite{ludewig}, which are an important step in discussing tunneling problems, i.e.\
exponentially small splitting of eigenvalues for a self-adjoint realization of $H_\hbar$. In the scalar case, for $\dim M>1$, 
rigorous results in this field start with the seminal paper \cite{helffer-sjostrand-1} (for $M=\R^n$ or
$M$ compact).

In everything what follows, let $(M, g)$ be a (smooth) oriented $n$-dimensional Riemannian manifold and let $\pi:\Eh \rightarrow M$ be a complex vector bundle over 
$M$ equipped with an inner 
product $\gamma$ (i.e. a positive definite Hermitian form). Let $\mathrm{rk}\Eh$ denote the dimension of a fibre of $\Eh$. 
Denoting by $\dvol$ the Radon measure on $M$ induced from the Riemannian metric $g$, 
this allows to introduce the Hilbert space $L^2(M,\Eh)$ of (equivalence classes of) sections in $\Eh$ as the completion of $\Gamma_c^\infty(M, \Eh)$, the space of compactly supported 
smooth sections of $\Eh$\footnote{For $K\subset M$, we write $\Gamma^\infty_c(K, \Eh)$ to denote the sections of $\Eh$ compactly supported in $K$.}, with inner product
\begin{equation}\label{standard_skalar}
  \langle\!\langle u, v\rangle\!\rangle_\Eh = \int_M \gamma_m[ u, v ]\, \dvol (m) \quad\text{ and associated norm }\quad  \|u\|_\Eh^2=\langle\!\langle u,u\rangle\!\rangle_\Eh\; .
\end{equation}

Recall that a differential operator $L$ acting on sections of $\Eh$ is said to be of {\em Laplace type} if, in local coordinates $x$, it has the form
\begin{equation}\label{Laplace-type}
  L = - \Id_{\Eh} \sum_{i,j} g^{ij}(x)\frac{\partial^2}{\partial x^i\partial x^j} + \sum_{j} 
b_j \frac{\partial}{\partial x^j} + c
\end{equation}
where $\bigl(g^{ij}(x)\bigr)$ is the inverse matrix of the metric $\bigl(g_{ij}(x)\bigr)$ and $b_j, c\in \Gamma^\infty (M, \End (\Eh))$ are
endomorphism fields. Examples for Laplace-type operators are the Hodge-Laplacian
$d d^* + d^*d$ on 
$p$-forms (in particular, for $p=0$, this is the Laplace-Beltrami operator) and the square of a generalized Dirac operator acting on spinors.  Also, second order elliptic operators $L$ in 
divergence form, i.e, $L=\sum_{i,j} \partial_i a^{ij} \partial_j$,
on open subsets of $\R^n$  belong to this class. To the best of our knowledge, even for scalar operators of this form the tunneling effect has not been treated explicitly in the literature. 
The geometrically formulated theorems of our paper cover in particular this special case, closing an obvious gap in the literature.

Moreover, for any Laplace type operator $L$ on $\Eh$ which is symmetric on $\Gamma^\infty_c(M, \Eh)$  with respect to the inner product
$\langle\!\langle \,.\,, \,.\,\rangle\!\rangle_{\Eh}$, 
there exists a unique metric connection 
$\nabla^\Eh: \Gamma^\infty(M, \Eh) \rightarrow \Gamma^\infty(M, T^*M\otimes \Eh)$ on $\Eh$ and a symmetric endomorphism field 
$U \in \Gamma^\infty(M, \End_{sym}(\Eh))$ such that
\begin{equation}\label{L-Darstellung}
  L = (\nabla^\Eh)^* \nabla^\Eh + U
\end{equation}
(see \cite{ludewig}, Remark 2.1).
In the following we always assume $\nabla^\Eh$ to be the metric connection such that \eqref{L-Darstellung} holds.\\
We denote by $\langle\,.\,,\,.\,\rangle_m$ and $|\,.\,|_m$ the inner product and norm on $T^*_mM$ induced by $g$ (we feel free to sometimes suppress the index $m \in M$) and we use the same symbols for the extension of the inner product and the norm to the complexified cotangent bundle $T^*_\C M =T^*M \otimes \C$. Then  $T^*_\C M\otimes \Eh$ is well defined as a bundle (since fibrewise both factors are complex vector spaces), and we denote by 
$\langle\!\langle \,.\,, \,.\,\rangle\!\rangle_\otimes$ and $\|\,.\,\|_\otimes$ the inner product and norm on $L^2(M, T^*_\C M\otimes \Eh)$; both are induced from the inner product on the fibres of $T^*_\C M\otimes \Eh$ which for complex one-forms 
$\alpha, \beta$ and sections $u,v$ of $\Eh$ is given by
$$\langle \alpha \otimes u,\beta \otimes v \rangle_\otimes= \langle \alpha, \beta \rangle_1 \gamma [u,v]$$
and then extends to the full tensor product by linearity (see also the beginning of Section 4 for some more standard details on the inner product $\langle \cdot, \cdot \rangle_p$ in the fibres of the complexified exterior bundle $\Lambda_\C^p(M)$). We feel free to sometimes suppress the subscript $\C$. As a general rule, all of our inner products are antilinear in the first and linear in the second factor.

Our setup is the following.

\begin{hyp}\label{setup1}
For $\hbar>0$, we consider Schr\"odinger operators $H_\hbar$ acting on $L^2 (M, \Eh)$
of the form
\begin{equation}\label{schrodinger}
  H_\hbar = \hbar^2 L + \hbar W + V \cdot \Id_\Eh
\end{equation}
where $L$ is a symmetric Laplace type operator as above, 
$W \in \Gamma^\infty(M, \End_{sym}(\Eh))$ is a symmetric endomorphism field and $V \in C^\infty(M, \R)$. Furthermore, we shall assume:
\ben
\item The potential $V$ 
is non-negative and there is a compact subset $K\subset M$ and $\delta>0$ such that $V(m)\geq \delta$ for all $m\in M\setminus K$.
\item $V$ has non-degenerate minima at a finite number of points $m^j \in M,\, j\in\{1, \ldots, r\}:= \mathcal{C}$, i.e., $V(m^j) = 0$ and $\nabla^2V|_{m^j}>0$.
\item If $U$ is the endomorphism field given in equation \eqref{L-Darstellung} with respect to $L$ in \eqref{schrodinger}, the symmetric 
endomorphism field $\hbar U + W$ is bounded from below, i.e. there is a positive constant $C<\infty$ such that
\begin{equation}\label{1_hyp1}
 \scl u, (\hbar U + W)u\scr_\Eh \geq - C \|u\|_\Eh^2\, , \qquad u\in \Gamma^\infty_c(M, \Eh)\, ,
\end{equation}
uniformly for $\hbar\in (0,1]$.
\een
\end{hyp}

It is then clear that (for $\hbar \in (0,1]$) the operator $H_\hbar$ with domain $\Gamma^\infty_c(M, \Eh)$ is a semi-bounded, symmetric and densely defined operator in the Hilbert space 
$L^2(M,\Eh)$. Thus, by a well known result of abstract spectral theory, its associated semi-bounded quadratic form
\begin{equation}\label{quadratic-form}
q_\hbar(u):= \scl u, H_\hbar u\scr_\Eh \, , \qquad u\in \Gamma^\infty_c(M, \Eh)
 \end{equation}
is closable. Passing to the closure of $q_\hbar$ and using the representation theorem for symmetric, semi-bounded, closed forms yields a distinguished self-adjoint operator, the 
Friedrich's extension of $H_\hbar: \Gamma^\infty_c(M, \Eh) \rightarrow L^2(M, \Eh)$ (which by usual abuse of notation we shall also denote by $H_\hbar$).\\
We recall that, if $M$ is assumed to be complete, $H_\hbar: \Gamma^\infty_c(M, \Eh) \rightarrow L^2(M, \Eh)$ is actually essentially self-adjoint\footnote{See \cite{chernoff} and 
\cite{strichartz} for proofs using finite propagation speed and nice partitions of unity, respectively. Both papers do not formally cover precisely the class of operators of Laplace
type considered in the present paper, but both methods generalize to our class of operators on bundles (e.g., the propagation speed in \cite{chernoff} only depends 
on the principal symbol of $H_\hbar$ for fixed $\hbar$, thus being independent of $U$ and $W$).}.
Furthermore, if $M$ in addition is assumed to be of bounded geometry, various different natural approaches to the definition of Sobolev spaces for $\Eh$ all lead to identical results
(see \cite{aubin} and \cite{eichhorn}).
For the purpose of this paper, none of this seems to be relevant. We shall stick to the Friedrich's extension of $H_\hbar$.\\
Similarly, for any open $\Omega\subset M$ with compact closure in $M$ we shall define the Dirichlet realization $H_\hbar^{\overline{\Omega}}$ of $H_\hbar$ in $\Omega$ by
Friedrich's extension of $H_\hbar: \Gamma_c^\infty (\Omega, \Eh)\rightarrow L^2(\Omega, \Eh)$.

We remark that the operator $H_\hbar$ given in \eqref{schrodinger} is not necessarily real, i.e.\ it does not commute with complex conjugation, in contradistinction to the  more special case of the Witten Laplacian $P$ introduced above. Thus, in the general case, the well known  Beurling-Deny criteria do not apply, and even the groundstate of our operator $H_\hbar$ may well be degenerate. It is thus natural to treat tunneling in this degenerate setting and we shall do so in due course.

However, under
semi-classical quantization ($\xi \mapsto -i\hbar \dd$ in some reasonable sense) its principal $\hbar$-symbol 
\begin{equation}\label{hauptsymbol}
 \sigma_H: T^*M \rightarrow \End (\Eh)\, , \quad \sigma_H(m, \xi) = \bigl( |\xi|^2 + V(m)\bigr) \Id_\Eh, \quad (m,\xi) \in T^*_mM
\end{equation}
is both real and scalar. This is crucial for our construction.
Thus our assumptions exclude Schr\"odinger operators with magnetic field (the operator $(i\hbar \dd + \alpha)^*(i\hbar \dd + \alpha)$, with a $1$-form 
$\alpha$ describing the magnetic potential, has non-real
principal $\hbar$-symbol, see e.g.\ \cite{helffer-kondryukov-1}) or with endomorphism valued potential $V$ as needed e.g.\ for molecular Hamiltonians in the
Born-Oppenheimer approximation (see e.g.\ \cite{klein-seiler}).

Defining the hyperregular Hamiltonian
\begin{equation}\label{tildehnull}
\tilde{h}_0 : T^*M \rightarrow \R\, , \quad \tilde{h}_0(m, \xi) = |\xi|^2 - V(m) 
\end{equation}
one has $\sigma_H (m,\xi) = -\tilde{h}_0(m, i\xi) \Id_\Eh$, and one can use the theory developed in \cite{kleinro} (or results given in 
\cite{helffer-sjostrand-1}) to introduce an adapted geodesic distance on $M$. There it is shown that the (Agmon)-distance $d$ on $M$ given by the Agmon-metric $\dd s^2 = V g$ (which is the 
Jacobi metric of classical mechanics on the Riemannian manifold $(M, g)$ for the Hamiltonian function $|\xi|^2 + V$ at energy zero) is Lipschitz everywhere and
smooth near the potential wells $m^j,\, j\in\mathcal{C}$. 

Moreover, defining $d^j(m):= d(m, m^j)$ for $j\in\mathcal{C}$,
by \cite{kleinro}, Theorem 1.6 (or see\\ \cite{helffer-sjostrand-1}, Proposition 3.1) there are neighborhoods $\Omega_j$ of $m^j$ such that
\begin{align}\label{eikonal}
\tilde{h}_0(m, \dd d^j(m)) &= 0 \, , \qquad m\in \Omega_j\\
\tilde{h}_0(m, \dd d^j(m)) &\leq 0 \, , \qquad m\in M \; .\label{eikonalin}
\end{align}
These are the eikonal equation and  the eikonal inequality.

The central result of this paper is a very precise asymptotic formula for the splitting of certain low-lying eigenvalues of the operator $H_h$. 

As we shall recall below, power series expansions of low lying eigenvalues are in leading order given by the 
harmonic approximation and can be derived by a certain perturbation theory. 

Furthermore, in situations of symmetry or near symmetry certain eigenvalues, 
including but not restricted to the groundstate, are almost degenerate in the sense of being 
exponentially close. On the other hand, such almost degeneracy of eigenvalues can be considered as a spectral picture of (possibly almost) 
underlying symmetry in a geometrical sense. In such a situation, approximating the spectrum of $H_h$ by 
the spectrum of appropriately chosen operators with Dirichlet boundary conditions on certain subsets $\O_j$ of $M$ might give a truly degenerate spectrum even for the groundstate, and the true spectrum of $H_h$ can then 
perturbatively be recovered through an (exponentially small) so called interaction matrix. 

Crucial ingredients are the minimal geodesics 
between the potential wells and the distance between the wells in the Agmon metric $\dd s^2 = V g$.  

While even in the scalar case such an analysis of eigenvalue splitting is often restricted to the case of one unique minimal geodesic (and possibly the groundstate), we here analyse the more complex situation where these
minimal geodesics might form submanifolds of dimension $\ell +1$ (the above mentioned case of a unique geodesic is then a special case for $\ell=0$). 
See \cite{kr2,kr3} for somewhat similar results for a class of difference operators on a 
scaled lattice in $\R^n$. Intuitively (in both cases), larger eigenvalue splitting is connected to more tunneling (or larger conductance) between the wells and 
thus to the dimension of connecting geodesics as these provide 
(in some sense which we shall not even try to make precise) optimal tunneling paths for a quantum particle. For the operators considered in this paper, this is 
made precise in our Theorem 6.7 below which is our central result.
 
 The outline of the paper is as follows. In Section 2 we  justify the harmonic approximation in our setting and sketch the proof of the basic spectral stability result. 
 In Section 3 we prove Agmon-type estimates in the semiclassical limit 
 for the decay of eigenfunctions for the Dirichlet operators on certain sets $\O_j$ in $M$ containing only one potential well. 
 As usual, these are a crucial ingredient for subsequent WKB expansions. We emphasize that these estimates 
 only involve certain structural identities for our operator; in particular, computations with a full expansion of everything  in local coordinates are not required. 
 Section 4 introduces the crucial above mentioned interaction matrix, and 
 Section 5 analyzes it in more detail in important special cases. Section 6 is the heart of the paper, culminating in the above mentioned Theorem 6.7 which 
 contains a full asymptotic expansion of the interaction matrix (and thus of 
 the eigenvalue splitting, including but not restricted to the groundstate). As usual for such a precise result this requires additional geometric assumptions. 
 Most importantly, the outgoing manifolds (with starting point $m^j$) for the 
 Hamilton field of $\tilde{h}_0$ have to be parametrized as Lagrange manifolds by the geodesic distance $d^j(m^j, \cdot)$. 
 And it is here that we prove (under appropriate assumptions) that the constructed WKB-expansions are 
 actually very close to the true eigenfunctions of Dirichlet realizations of $H_h$ in $\O_j$ which justifies replacing the Dirichlet eigenfunctions in the interaction 
 matrix by these WKB functions. The proof of our main theorem 
 then requires  combining all this preparatory work with certain explicit calculations in local coordinates involving a form of the Morse Lemma with parameters and stationary phase.

\section{Harmonic Approximation}\label{sec2}

In this section we shall show that the lowest $N$ eigenvalues of $H_\hbar$ are given by the lowest $N$ eigenvalues of the direct sum of associated harmonic 
oscillators at the potential wells, up to an error $O(\hbar^{6/5})$ as $\hbar\to 0$. 

Denoting the fibre over $m\in M$ by $\Eh_{m}:= \pi^{-1}(m)$, we define the harmonic oscillator at $m^j, \, j=1, \ldots r,$ associated to $H_\hbar$ as the operator on $\Ce^\infty (T_{m^j}M, \Eh_{m^j})$ given by
\begin{equation}\label{harmoss}
 H_{m^j, \hbar} f(X) := \Bigl( \hbar ^2 \Delta_{T_{m^j}M} + \hbar W(m^j) + \frac{1}{2} \nabla^2 V|_{m^j} (X, X)\Bigr) f(X)\, , \quad X\in T_{m^j}M\, ,
\end{equation}
where $\Delta_{T_{m^j}M}$ denotes the Laplacian on $T_{m^j}M$ induced by the metric $g_{m^j}$ and $\nabla^2 V|_{m^j}$ denotes the Hessian of $V$ at $m^j$.

\begin{rem}
The spectrum of the harmonic oscillators $H_{m^j, \hbar},\, j\in  \mathcal{C},$ 
consists of the numbers
\begin{equation} \label{LocalEigenvalues}
  \hbar e^j_{\gamma, \ell} = \hbar \Bigl( \mu^j_\ell + \sum_{k=1}^n (2\gamma_k +1) \lambda^j_k  \Bigr)\, , \qquad \gamma \in \N^n, ~~ \ell=1, \dots, \mathrm{rk}\,\Eh
\end{equation}
where $\lambda^j_1, \dots, \lambda^j_n$ are the eigenvalues of $\frac{1}{2}\nabla^2 V|_{m^j}$ and $\mu^j_1, \dots, \mu^j_{\mathrm{rk} \Eh}$ are the eigenvalues 
of $W(m^j)$ (see e.g. \cite{cycon}, \cite[Section 8.10]{reed-simon-1}). \\
\end{rem}

As a preparation, assume $\chi_k$ are non-negative smooth functions with $\sum_k \chi_k^2 = 1$ (i.e. a quadratic partition of unity). Then a short calculation with double commutators gives
\[ (\nabla^\Eh)^*\nabla^\Eh = \sum_k \chi_k (\nabla^\Eh)^*\nabla^\Eh \chi_k - \sum_k |d\chi_k|^2\, , \]
yielding the identity
\begin{equation}\label{Hpartition}
 H_\hbar = \sum_k \chi_k H_\hbar \chi_k - \hbar^2 \sum_k |d\chi_k|^2\, .
\end{equation}
(called the IMS-Localization formula in \cite{cycon}).

\begin{Lem}\label{lemma_harm}
 Assume that $H_\hbar$ is of the form \eqref{schrodinger}, satisfying Hypothesis \ref{setup1}. Then there is $\hbar_0>0$ such that for all $0<\hbar \leq \hbar_0$
 \[  \inf \sigma_{ess} (H_\hbar) \geq \frac{\delta}{2}  \]
 where $\sigma_{ess}(H_\hbar)$ denotes the essential spectrum of $H_\hbar$.
\end{Lem}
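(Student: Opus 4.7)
The plan is to apply the IMS localization formula \eqref{Hpartition} with a cutoff tied to the compact set $K$ from Hypothesis \ref{setup1}(i), and then use the Weyl criterion: a would-be Weyl sequence for essential spectrum below $\delta/2$ must lose all $L^2$-mass on compact sets, so only the piece supported in $\{V \geq \delta\}$ survives in the limit, and the uniform bound \eqref{1_hyp1} keeps the remaining endomorphism contribution negligible as $\hbar \to 0$.

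Using \eqref{L-Darstellung} to rewrite $L = (\nabla^\Eh)^*\nabla^\Eh + U$, the quadratic form \eqref{quadratic-form} becomes
\[
q_\hbar(u) = \hbar^2 \|\nabla^\Eh u\|_\otimes^2 + \hbar \scl u, (\hbar U + W) u\scr_\Eh + \scl u, V u\scr_\Eh,
\]
whose first term is non-negative and whose second term is $\geq -C\hbar \|u\|_\Eh^2$ uniformly in $\hbar \in (0,1]$ by \eqref{1_hyp1}. Choose smooth non-negative $\chi_0, \chi_\infty$ with $\chi_0^2 + \chi_\infty^2 = 1$, $\chi_0 \in \Ce^\infty_c(M)$ identically $1$ on a neighborhood of $K$, so that $\supp \chi_\infty \subset M \setminus K$. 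The IMS identity \eqref{Hpartition} then reads
\[
q_\hbar(u) = q_\hbar(\chi_0 u) + q_\hbar(\chi_\infty u) - \hbar^2 \scl u, R u\scr_\Eh, \qquad R := |d\chi_0|^2 + |d\chi_\infty|^2 \in \Ce^\infty_c(M).
\]
Since $V \geq \delta$ on $\supp \chi_\infty$, one has $q_\hbar(\chi_\infty u) \geq (\delta - C\hbar)\|\chi_\infty u\|_\Eh^2$; picking $\hbar_0 := \delta/(2C)$ makes this $\geq (\delta/2)\|\chi_\infty u\|_\Eh^2$ for $\hbar \leq \hbar_0$.

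To conclude I would invoke the Weyl criterion: for $\lambda \in \sigma_{ess}(H_\hbar)$, take normalized $u_n \rightharpoonup 0$ with $q_\hbar(u_n) \to \lambda$. The bound $\hbar^2\|\nabla^\Eh u_n\|_\otimes^2 \leq q_\hbar(u_n) + C\hbar$ shows $(u_n)$ is bounded in $H^1_\loc(M, \Eh)$, so Rellich–Kondrachov (applied through local trivialisations of $\Eh$ on a compact neighborhood of $\supp \chi_0 \cup \supp R$) gives, along a subsequence, $\chi_0 u_n \to 0$ and $R^{1/2} u_n \to 0$ in $L^2$. Consequently $\|\chi_\infty u_n\|_\Eh \to 1$, $\hbar^2\scl u_n, R u_n\scr_\Eh \to 0$, and $q_\hbar(\chi_0 u_n) \geq -C\hbar \|\chi_0 u_n\|_\Eh^2 \to 0$, so passing to the limit in the IMS decomposition forces $\lambda \geq \delta/2$. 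The only genuine (but routine) obstacle is the bundle version of Rellich–Kondrachov, which reduces to the classical statement via a finite trivialising cover of the compact support region.
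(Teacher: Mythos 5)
Your argument is correct, and it shares the backbone of the paper's proof (the IMS decomposition \eqref{Hpartition} with one cutoff near $K$ and one supported in $\{V\geq\delta\}$, plus Rellich compactness on a compact neighbourhood), but the final step is genuinely different. The paper introduces an auxiliary potential $V_K\in\Ce_0^\infty(M)$ chosen so that $H_\hbar+V_K\geq\tfrac{\delta}{2}$ globally, checks that $V_K$ is $H_\hbar$-relatively compact via the Rellich embedding, and then invokes Weyl's essential spectrum theorem, $\sigma_{ess}(H_\hbar+V_K)=\sigma_{ess}(H_\hbar)$; this has the advantage that the IMS identity and the lower bound only ever need to be evaluated on $u\in\Gamma_c^\infty(M,\Eh)$, where $\inf\sigma(H_\hbar+V_K)$ is computed by the variational characterization built into the Friedrichs extension. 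You instead apply the singular-sequence (Weyl) criterion directly and use Rellich to show that $\chi_0 u_n\to 0$, so the interior piece and the localization error vanish in the limit; this avoids introducing $V_K$ and the relative-compactness verification, at the cost of one point you should make explicit: your Weyl sequence lives in $\De(H_\hbar)$, not in $\Gamma_c^\infty(M,\Eh)$, so the identity \eqref{Hpartition} and the lower bound $q_\hbar(\chi_\infty u)\geq(\delta-C\hbar)\|\chi_\infty u\|_\Eh^2$ must first be extended from $\Gamma_c^\infty$ to the form domain of the Friedrichs extension by density in the form norm (all terms involved, including multiplication by $\chi_0,\chi_\infty$ and the bounded error $R$, are form-norm continuous, so this is routine but should be said). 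With that caveat, and the standard subsequence argument to upgrade ``along a subsequence'' to the desired contradiction, your proof is complete and of essentially the same length as the paper's.
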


\begin{proof}
 By Hypothesis \ref{setup1}, for $\hbar_0$ sufficiently small, there is a compact set $K\subset M$ such that for all $\hbar\leq \hbar_0$
 \begin{equation}\label{1_lemma_harm}
 \scl u, H_\hbar u\scr_\Eh \geq \frac{3}{4}\delta\, , \qquad u\in \Gamma_c^\infty (M\setminus K, \Eh)\, .
  \end{equation}
Now choose a smooth quadratic partition of unity $\chi_0^2 + \chi_1^2 = 1$, subordinate to the open cover $\Omega_0 = M\setminus K$ and $\Omega_1\supset K$, and a 
positive function $V_K\in C^\infty_0(M)$ such that $V_K \Id_\Eh + \hbar^2 U + \hbar W\geq \frac{3}{4}\delta$ on $\Omega_1$. Using \eqref{Hpartition} we
obtain for all $\hbar\leq \hbar_0$ and $u\in\Gamma_c^\infty(M, \Eh)$
\begin{equation}\label{2_lemma_harm}
 \scl u, (H_\hbar + V_K) u\scr_\Eh \geq \scl \chi_0 u, (H_\hbar + V_K) \chi_0 u\scr_\Eh + \scl \chi_1 u, (H_\hbar + V_K) \chi_1 u\scr_\Eh - O(\hbar^2 \|u\|_\Eh^2) \geq \frac{1}{2}\delta \|u\|_\Eh^2\, .
\end{equation}
Clearly, $V_K$ is relatively compact with respect to $H_\hbar$.
In fact, let us define the Sobolev space $H^1(M,\Eh)$ as the set of those $u\in L^2(M, \Eh)$ such that the distributional derivative $\nabla^\Eh u$ is 
in $L^2(M, T^*M\otimes \Eh)$, with norm
$||u||_{H^1} = ||u||_{\Eh} + ||\nabla^\Eh u||_{\otimes} $.  Since the domain $\De(H_\hbar)$ of the Friedrichs extension is contained in the form domain, one easily checks that 
$\De(H_\hbar)$ consists of functions in $H^1_{loc}(M,\Eh)$.  Furthermore, Rellich's compactness theorem holds in the following form: If $\Omega$ is an open 
set in $M$ with compact closure, the embedding $H^1(\Omega,\Eh) \to L^2(\Omega,\Eh)$ is compact\footnote{We remark that, since $M$ is neither complete nor of 
bounded geometry, our definition of the Sobolev space
$H^1(M,\Eh)$ need not necessarily coincide with the usual other natural definitions, see e.g. \cite{eichhorn}. For $H^1(\Omega,\Eh)$, 
however, all these ambiguities disappear by compactness of $\overline{\Omega}$.}. This gives, for $\lambda <0$  in the resolvent set, 
compactness of $V_K (H_\hbar - \lambda)^{-1}$.  
Thus Weyl's essential spectrum theorem gives in view of \eqref{2_lemma_harm}
\begin{equation}
\inf\sigma_{ess} H_\hbar = \inf \sigma_{ess} (H_\hbar + V_K) \geq \inf \sigma (H_\hbar + V_K) = \inf_{u\in\Gamma_c^\infty(M,\Eh), \|u\|_{\Eh}=1}\scl u, (H_\hbar + V_K)u\scr_\Eh \geq \frac{\delta}{2}\, ,
 \end{equation}
proving Lemma \ref{lemma_harm}.
\end{proof}

\begin{theo}\label{theo7}
 Assume that $H_\hbar$ satisfies Hypothesis \ref{setup1}. Denote by $\hbar e_\ell$ the $\ell$-th eigenvalue of $\bigoplus_{j=1}^r H_{m^j, \hbar}$, counting multiplicity. Then, for fixed $m\in\N$ and $\hbar$ sufficiently small, $H_\hbar$ has
 at least $m$ eigenvalues $E_\ell(\hbar), \ell=1, \ldots m,$ below $\inf \sigma_{ess}(H_\hbar)$ and 
 \begin{equation}\label{0_theo7}
   E_\ell(\hbar) = \hbar e_\ell + O(\hbar^{6/5}) \qquad (\hbar \to 0)\, . 
 \end{equation}
 \end{theo}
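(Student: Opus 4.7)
The plan is to establish \eqref{0_theo7} via the min-max principle, proving the upper and lower inequalities separately. The argument follows Simon's strategy for the scalar case, suitably adapted to the vector bundle setting. The $O(\hbar^{6/5})$ rate arises in the lower bound as the balance $\epsilon^3 = \hbar^2/\epsilon^2$ between the cubic Taylor remainder of $V$ and the IMS localization error in \eqref{Hpartition}, giving the optimal cutoff scale $\epsilon = \hbar^{2/5}$.

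\textbf{Upper bound.} For fixed $m$, let $\phi^1, \dots, \phi^m$ be an $L^2$-orthonormal family of eigenfunctions of $\bigoplus_{j=1}^r H_{m^j, \hbar}$ corresponding to the eigenvalues $\hbar e_1, \dots, \hbar e_m$; after the scaling $X = \hbar^{1/2}Y$ each $\phi^k$ is a Gaussian-times-polynomial living on one $T_{m^{j(k)}}M$. I would transport $\phi^k$ to a section $\tilde\phi^k$ of $\Eh$ near $m^{j(k)}$ using $\exp_{m^{j(k)}}$ together with radial parallel transport in $\Eh$, and multiply by a smooth cutoff supported in a fixed geodesic ball where normal coordinates are valid. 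In normal coordinates with the radial gauge, $g^{ij}(x) = \delta^{ij} + O(|x|^2)$, the connection one-form is $O(|x|)$, and $V(x) - \tfrac12 \nabla^2 V|_{m^{j(k)}}(x,x) = O(|x|^3)$. Since the Gaussian concentrates at scale $\hbar^{1/2}$, a direct computation yields $\scl \tilde\phi^k, H_\hbar \tilde\phi^k\scr_\Eh = \hbar e_k + O(\hbar^{3/2})$, and the $\tilde\phi^k$ are mutually orthogonal up to exponentially small error. Min-max together with Lemma \ref{lemma_harm} yields $E_k(\hbar) \leq \hbar e_k + O(\hbar^{3/2})$.

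\textbf{Lower bound.} I would choose a quadratic partition of unity $\chi_0^2 + \sum_{j=1}^r \chi_j^2 = 1$ on $M$ with $\chi_j$ ($j \ge 1$) supported in the geodesic ball of radius $\epsilon := \hbar^{2/5}$ around $m^j$, and $\chi_0$ supported in the complement of $\bigcup_j B(m^j, \epsilon/2)$. By \eqref{Hpartition}, $H_\hbar = \sum_{j=0}^r \chi_j H_\hbar \chi_j - O(\hbar^2/\epsilon^2)$, with localization error $O(\hbar^{6/5})$. On the support of $\chi_0$, non-degeneracy of the wells gives $V \geq c\hbar^{4/5}$, which combined with Hypothesis \ref{setup1}(c) and $\hbar e_m = O(\hbar)$ ensures that $\chi_0 H_\hbar \chi_0$ contributes only to the spectrum above $\hbar e_m + C\hbar^{6/5}$. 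For $j \ge 1$, transferring $\chi_j H_\hbar \chi_j$ to an operator on $L^2(T_{m^j}M, \Eh_{m^j})$ via $\exp_{m^j}$ and radial parallel transport in $\Eh$, and applying the same Taylor expansions as in the upper bound, yields the form inequality $\chi_j H_\hbar \chi_j \geq \chi_j H_{m^j, \hbar} \chi_j - C\hbar^{6/5}\chi_j^2$. Since the eigenfunctions of $H_{m^j, \hbar}$ are Gaussians of scale $\hbar^{1/2}$, Dirichlet truncation at radius $\epsilon$ changes its eigenvalues only by $O(\exp(-c\hbar^{-1/5}))$. Min-max applied to the resulting direct sum then yields $E_k(\hbar) \geq \hbar e_k - C\hbar^{6/5}$.

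\textbf{Main obstacle.} The delicate step is the uniform form inequality $\chi_j H_\hbar \chi_j \geq \chi_j H_{m^j, \hbar}\chi_j - C\hbar^{6/5}\chi_j^2$. One must control, for $u$ in the form domain of $H_\hbar$, each of the correction terms: metric deviation $O(|x|^2)$, connection one-form $O(|x|)$, endomorphism deviation $W(m) - W(m^j) = O(|x|)$, and Taylor remainder $O(|x|^3)$ of $V$, when tested in $\scl u, \cdot\, u\scr_\Eh$. The derivative-containing terms are handled by $|\scl u, \hbar^2 a \nabla^\Eh u\scr_\Eh| \leq \hbar^2 \|a\|_\infty \|u\|_\Eh\,\|\nabla^\Eh u\|_\otimes$ together with the a priori bound $\hbar^2\|\nabla^\Eh u\|_\otimes^2 \leq \scl u, H_\hbar u\scr_\Eh + O(\hbar)\|u\|_\Eh^2$ coming from Hypothesis \ref{setup1}(c) and non-negativity of $V$. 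These estimates show that each correction is indeed bounded by $\hbar^{6/5}$ times the relevant norms, completing the lower bound.
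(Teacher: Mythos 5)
Your proposal is correct and follows essentially the same route as the paper: IMS localization at the scale $\hbar^{2/5}$, Taylor expansion of $V$ to third and of $W$ to first order, and the min-max principle for both bounds, with the exponent $6/5$ arising exactly from balancing the localization error $\hbar^2\epsilon^{-2}$ against the cubic remainder $\epsilon^3$. The only (harmless) deviation is your use of a fixed-scale cutoff in the upper bound, which yields the sharper error $O(\hbar^{3/2})$ there, whereas the paper uses the scaled cutoff throughout and is content with $O(\hbar^{6/5})$ in both directions.
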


\begin{proof}
 
The proof follows closely the arguments of
\cite{simon-1}, \cite{simon-1-e},  which are also used in \cite{cycon}. As already noted in
\cite{cycon}, only minor changes are required to adapt 
the proof for a scalar operator on $M=\R^n$ to a more complicated operator on a bundle $\Eh$. 
We shall sketch the main idea for our operator $H_\hbar$ which is slightly different from the operator in \cite{cycon}. \\
Let $\chi \in \Ce^\infty_0(\R^n)$
be a cut-off function with $0 \leq \chi \leq 1$  (and such that $\sqrt{1-\chi^2}$ is smooth), let $(\Omega_j, \phi_j)$ be centered local charts based 
at $m^j \in M$ (i.e. satisfying $\phi_j(m^j)=0$) and consider the pull-back 
$\chi_j= \phi_j^*(\chi(\hbar^{-2/5} \cdot ))$ of the scaled cut-off function. For $\hbar$ sufficiently small, we have $\chi_j \in \Ce^\infty_0(M)$ (for all $1 \leq j \leq r$), and 
$\chi_0:= \sqrt{1 - \sum_{j=1}^r \chi_j^2}$ is smooth. Thus, the localization formula \eqref{Hpartition} holds. Then, on  the support of $\chi_j$, one uses 
Taylor expansion at $m^j$ of $V$ up to third order terms and of $W$ up to
 linear terms. In operator  norm, all remainder terms of the Taylor expansion
 in $\chi_j H_\hbar \chi_j$ for $1 \leq j \leq r$ are of order $ O(\hbar^{6/5})$, and so is the localization error $\hbar^2 \sum_j |d\chi_j|^2$.  
 One now fixes $n$ such that $e_n < e_{n+1}$ (for the eigenvalues of the harmonic oscillators in 
$\bigoplus H_{m^j,\hbar}$) and denotes by $g_k$, for $1 \leq k \leq n$, the corresponding normalized eigenfunctions of the appropriate harmonic 
oscillators $H_{m^j,\hbar}$, pulled back from $T_{m^j}M$ to $M$ and cut-off by multiplication with $\chi_j$. A straightforward computation then gives
 $$ \scl g_k, H_h g_k \scr_{\Eh} = \hbar e_k + O(\hbar^{\frac{6}{5}}), $$
which in view of the mini-max formula establishes the upper bound on $E_k(\hbar)$ in \eqref{0_theo7}. To establish the lower bound, one uses the mini-max 
formula again and derives a lower bound in terms of a suitable symmetric finite rank operator (constructed from the restrictions of all localized operators $H_{m^j,\hbar}$ 
to the spectral subspace of enery below $e \in (e_n,e_{n+1}$), which then implies \eqref{0_theo7}. These arguments belong to abstract spectral theory (provided the error 
bounds for localization and Taylor expansion have been established) and do not depend on the geometry encoded in $H_\hbar$ and $\Eh$.

\end{proof}

\section{Agmon-estimates}\label{sec1}

In this section we prove exponential decay of eigensections of Dirichlet realizations of $H_\hbar$, in the limit $\hbar \to 0$. 
These estimates allow to decouple the wells and are crucial for establishing good error estimates. 
Technically, the main point of our subsequent discussion is to verify that abstract properties of the metric connection $\nabla^\Eh$ are always sufficient. 
Computations in local coordinates are not required. 

\begin{prop}\label{prop1}

Let $\phi\in \Ce^0(M, \R)$ be Lipschitz, 
$E\in\R$, $\Omega\subset M$ be open and bounded and assume $v\in\Gamma_c^\infty(\Omega, \Eh)$. Then
\begin{equation}\label{prop1_eq0}
\Re \< v, e^{\frac{\phi}{\hbar}} (H_\hbar - E) e^{-\frac{\phi}{\hbar}}v \>_\Eh = \hbar^2 \bigl\| \nabla^\Eh v \bigr\|^2_{\otimes} 
+  \< v, (\hbar^2 U + \hbar W + (V - |\dd\phi|^2 - E) \Id_\Eh ) v \>_\Eh.
\end{equation}
Moreover, let $F_\pm: \Omega \rightarrow [0,\infty)$ be defined such that $F_+^2 - F_-^2 = V - |\dd\phi|^2 - E$ and $F:= F_+ + F_-$. Then 
\begin{equation}\label{prop1_eq2}
\hbar^2\bigl\| \nabla^\Eh v\bigr\|_{\otimes}^2 + \frac{1}{2} \bigl\| F_+ v\bigr\|_\Eh^2 
+ \< v, (\hbar^2 U + \hbar W) v \>_\Eh
\leq \Bigl\| \frac{1}{F}e^{\frac{\phi}{\hbar}} (H_\hbar - E) e^{-\frac{\phi}{\hbar}}v \Bigr\|_\Eh^2  + \frac{3}{2} \bigl\|F_- v\bigr\|_\Eh^2
\end{equation}
and, for some $C_1>0$,
\begin{equation}\label{prop1_eq1}
\hbar^2 \bigl\|\nabla^\Eh v\bigr\|_{\otimes }^2 + \frac {1}{4}\bigl\|F v\bigr\|_\Eh^2 \leq 
\Bigl\| \frac{1}{F}e^{\frac{\phi}{\hbar}}(H_\hbar - E) e^{-\frac{\phi}{\hbar}}v\Bigr\|_\Eh^2 + 2\bigl\|F_-v\bigr\|_\Eh^2 + 
C_1 \hbar \bigl\|v\bigr\|_\Eh^2\; .
\end{equation}
\end{prop}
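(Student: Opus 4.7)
The identity \eqref{prop1_eq0} is the workhorse; both inequalities \eqref{prop1_eq2} and \eqref{prop1_eq1} will follow from it by Cauchy--Schwarz with carefully tuned weights. To establish \eqref{prop1_eq0}, the plan is to decompose $H_\hbar = \hbar^2(\nabla^\Eh)^*\nabla^\Eh + \hbar^2 U + \hbar W + V\Id_\Eh$ via \eqref{L-Darstellung}. The multiplicative parts commute with the real scalar $e^{\pm\phi/\hbar}$ and contribute the stated terms unchanged, so the real work lies in the Bochner term. Interpreting the conjugation through the associated quadratic form,
$$\scl v,\, e^{\phi/\hbar}(\nabla^\Eh)^*\nabla^\Eh e^{-\phi/\hbar} v\scr_\Eh = \scl \nabla^\Eh(e^{\phi/\hbar} v),\, \nabla^\Eh(e^{-\phi/\hbar} v)\scr_\otimes,$$
and using the Leibniz rule $\nabla^\Eh(e^{\pm\phi/\hbar}v) = e^{\pm\phi/\hbar}(\nabla^\Eh v \pm \hbar^{-1}\dd\phi\otimes v)$, valid a.e.\ because $\phi$ is Lipschitz and $v\in\Gamma^\infty_c(\Omega,\Eh)$, the exponential factors cancel in the pairing; the two cross terms $\pm\hbar^{-1}\scl\nabla^\Eh v,\dd\phi\otimes v\scr_\otimes$ are complex conjugates of opposite sign and thus drop out under $\Re$; what remains is $\|\nabla^\Eh v\|_\otimes^2 - \hbar^{-2}\scl v, |\dd\phi|^2 v\scr_\Eh$. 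Multiplying by $\hbar^2$ and re-adding the multiplicative contributions yields \eqref{prop1_eq0}.

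For \eqref{prop1_eq2}, abbreviate $A_\phi := e^{\phi/\hbar}(H_\hbar - E)e^{-\phi/\hbar}$ and substitute $V - |\dd\phi|^2 - E = F_+^2 - F_-^2$ into \eqref{prop1_eq0}, rewriting the identity as
$$\hbar^2\|\nabla^\Eh v\|_\otimes^2 + \|F_+ v\|_\Eh^2 + \scl v, (\hbar^2 U + \hbar W) v\scr_\Eh = \Re\scl v, A_\phi v\scr_\Eh + \|F_- v\|_\Eh^2.$$
Factoring $\scl v, A_\phi v\scr_\Eh = \scl F v,\, F^{-1} A_\phi v\scr_\Eh$ and applying Cauchy--Schwarz followed by the weighted AM--GM inequality $ab \le \tfrac{1}{4} a^2 + b^2$ bounds the real part by $\tfrac{1}{4}\|F v\|_\Eh^2 + \|F^{-1} A_\phi v\|_\Eh^2$. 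The pointwise estimate $(F_+ + F_-)^2 \le 2(F_+^2 + F_-^2)$ then permits $\tfrac{1}{4}\|F v\|^2$ to be replaced by $\tfrac{1}{2}\|F_+ v\|^2 + \tfrac{1}{2}\|F_- v\|^2$, and rearrangement produces \eqref{prop1_eq2}.

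Finally, \eqref{prop1_eq1} follows from \eqref{prop1_eq2} in two arithmetic moves. Hypothesis \ref{setup1}(c) supplies $\scl v, (\hbar^2 U + \hbar W) v\scr_\Eh \ge -C\hbar\|v\|_\Eh^2$; moving this term to the right of \eqref{prop1_eq2} accounts for the $C_1\hbar\|v\|_\Eh^2$ contribution, with $C_1 = C$. To convert $\tfrac{1}{2}\|F_+ v\|^2$ on the left into $\tfrac{1}{4}\|F v\|^2$, the elementary inequality $(F_+ - F_-)^2 \ge 0$, rewritten as $2 F_+^2 \ge (F_+ + F_-)^2 - 2 F_-^2$, yields $\tfrac{1}{2}\|F_+ v\|^2 \ge \tfrac{1}{4}\|F v\|^2 - \tfrac{1}{2}\|F_- v\|^2$; absorbing the extra $\tfrac{1}{2}\|F_- v\|^2$ into the existing $\tfrac{3}{2}\|F_- v\|^2$ on the right produces the claimed $2\|F_- v\|^2$, hence \eqref{prop1_eq1}. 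The main obstacle I anticipate is the careful justification of the Leibniz/integration-by-parts step underlying \eqref{prop1_eq0} for a merely Lipschitz (rather than smooth) weight $\phi$; this reduces to the standard fact that the metric connection obeys Leibniz with Lipschitz scalar factors on smooth compactly supported sections, and after that the argument is entirely algebraic.
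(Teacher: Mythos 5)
Your proposal is correct and follows essentially the same route as the paper: conjugate the Bochner term through the quadratic form, apply the Leibniz rule so the exponentials cancel and the cross terms drop out under $\Re$ (with mollification handling the merely Lipschitz $\phi$), then derive both inequalities from the identity via weighted Cauchy--Schwarz and the elementary pointwise inequalities relating $F$, $F_+$, $F_-$. The constants and the bookkeeping with $\tfrac12\|F_+v\|^2$, $\tfrac32\|F_-v\|^2$, $2\|F_-v\|^2$ all check out against the paper's argument.
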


\begin{proof}
In order to prove \eqref{prop1_eq0}, we first assume $\phi\in\Ce^2(M, \R)$ and use \eqref{L-Darstellung} and \eqref{schrodinger} to write
\begin{multline}
\Re \< v, e^{\frac{\phi}{\hbar}} (H_\hbar - E) e^{-\frac{\phi}{\hbar}}v \>_\Eh \\
 = \Re \Bigl\{ \hbar^2 \<\nabla^\Eh e^{\frac{\phi}{\hbar}} v, \nabla^\Eh e^{-\frac{\phi}{\hbar}} v \>_\otimes  +
\<e^{\frac{\phi}{\hbar}} v, (\hbar^2 U + \hbar W + (V - E)\Id_\Eh) e^{-\frac{\phi}{\hbar}}v \>_\Eh\, \Bigr\}.
\end{multline}
We write
\begin{equation}\label{prop1_p2}
\<\nabla^\Eh e^{\frac{\phi}{\hbar}} v, \nabla^\Eh e^{-\frac{\phi}{\hbar}} v \>_\otimes = 
\< e^{-\frac{\phi}{\hbar}}\nabla^\Eh e^{\frac{\phi}{\hbar}} v, e^{\frac{\phi}{\hbar}}\nabla^\Eh e^{-\frac{\phi}{\hbar}} v \>_\otimes
\end{equation}
and since $\nabla^\Eh (f v) = \dd f \otimes v + f \nabla^\Eh v$ for $f\in\Ce^\infty(M,\R)$ and $v\in\Gamma^\infty(M, \Eh)$, we get, using
$\dd e^{\frac{\phi}{\hbar}} =\frac{1}{\hbar} e^{\frac{\phi}{\hbar}} \dd\phi$ in the second step,
\begin{align}
\text{rhs} \eqref{prop1_p2} &=  \< e^{-\frac{\phi}{\hbar}}\Bigl(\dd e^{\frac{\phi}{\hbar}}\otimes v + e^{\frac{\phi}{\hbar}}\nabla^\Eh v\Bigr),
e^{\frac{\phi}{\hbar}}\Bigl(\dd e^{-\frac{\phi}{\hbar}}\otimes v + e^{-\frac{\phi}{\hbar}}\nabla^\Eh v\Bigr)\>_\otimes \nonumber\\
&= \< \tfrac{1}{\hbar}\dd\phi\otimes v + \nabla^\Eh v,
-\tfrac{1}{\hbar}\dd\phi\otimes v +  \nabla^\Eh v\Bigr)\>_\otimes \nonumber\\
&= 
\bigl\|\nabla^\Eh v\bigr\|_\otimes^2 - \tfrac{1}{\hbar^2}\bigl\|\dd\phi \otimes v\bigr\|_\otimes^2 + 
\< \tfrac{1}{\hbar}\dd\phi\otimes v, \nabla^\Eh v\>_\otimes -
\<  \nabla^\Eh v, \tfrac{1}{\hbar}\dd\phi\otimes v\>_\otimes\; .\label{prop1_p3}
\end{align}
Since $\Re \Bigl(\< \tfrac{1}{\hbar}\dd\phi\otimes v, \nabla^\Eh v\>_\otimes -
\<  \nabla^\Eh v, \tfrac{1}{\hbar}\dd\phi\otimes v\>_\otimes\Bigr) =0$,
equation \eqref{prop1_eq0} follows from \eqref{prop1_p3}, using 
that $\langle\!\langle v, Sv\rangle\!\rangle_\Eh$ is real for $S$ symmetric. 
The case where $\phi$ is only Lipschitz can be deduced from the above as in the scalar case (see \cite{helffer-sjostrand-1}, Prop 1.1) using convolution with a standard
mollifier and the dominated convergence theorem.\\ 

In order to prove \eqref{prop1_eq2}, we use the definition of $F_+$ and $F_-$ to write
\begin{multline}
\text{rhs} \eqref{prop1_eq0} = \hbar^2\bigl\| \nabla^\Eh v\bigr\|_\otimes^2 + 
 \< e^{\frac{1}{\hbar}\phi} u, (F_+^2 - F_-^2)\Id_\Eh e^{\frac{1}{\hbar}\phi}u \>_\Eh
+  \< v, (\hbar^2 U + \hbar W) v \>_\Eh\\
= \hbar^2\bigl\| \nabla^\Eh v\bigr\|_\otimes^2 + \bigl\| F_+ v\bigr\|_\Eh^2 - 
\bigl\|F_- v\bigr\|_\Eh^2
+  \< v, (\hbar^2 U +\hbar W) v \>_\Eh \; .   
\end{multline}
Thus we have
\begin{equation}\label{prop1_p4}
 \text{lhs}\eqref{prop1_eq2} = \text{rhs}\eqref{prop1_eq0} - \frac{1}{2}\bigl\| F_+ v\bigr\|_\Eh^2 + \bigl\|F_- v\bigr\|_\Eh^2\; .
\end{equation}
Using $ab\leq a^2 + \tfrac{b^2}{4}$ and $\frac{1}{4}(a+b)^2 \leq \frac{1}{2}(a^2 + b^2)$ we get 
\begin{align}\label{prop1_p5}
\text{lhs} \eqref{prop1_eq0} \leq \Bigl\| \frac{1}{F}e^{\frac{\phi}{\hbar}} (H_\hbar - E) e^{-\frac{\phi}{\hbar}}v \Bigr\|_\Eh \cdot \bigl\|F v\bigr\|_\Eh &\leq
\Bigl\| \frac{1}{F}e^{\frac{\phi}{\hbar}} (H_\hbar - E) e^{-\frac{\phi}{\hbar}}v \Bigr\|_\Eh^2 + \frac{1}{4} \bigl\|F v\bigr\|_\Eh^2 \text{ and}\\
\frac{1}{4}\bigl\|F v\bigr\|_\Eh^2 &\leq  \frac{1}{2} \bigl\|F_+ v\bigr\|_\Eh^2 + \frac{1}{2} \bigl\|F_- v\bigr\|_\Eh^2\label{prop1_p9}
\end{align}
Inserting \eqref{prop1_p5} and \eqref{prop1_p9} into \eqref{prop1_p4} proves \eqref{prop1_eq2}.

In order to prove \eqref{prop1_eq1}, we use \eqref{prop1_p9} to write
\begin{align}
 \text{lhs}\eqref{prop1_eq1} &\leq \hbar^2 \bigl\|\nabla^\Eh v\bigr\|_{\otimes }^2 + \frac{1}{2} \bigl\|F_+ v\bigr\|_\Eh^2 + \frac{1}{2} \bigl\|F_- v\bigr\|_\Eh^2 \nonumber\\
 &\leq \Bigl\| \frac{1}{F}e^{\frac{\phi}{\hbar}} (H_\hbar - E) e^{-\frac{\phi}{\hbar}}v \Bigr\|_\Eh^2 + 2 \bigl\|F_- v\bigr\|_\Eh^2 - \< v, (\hbar^2 U + \hbar W) v \>_\Eh \label{prop1_p8}
\end{align}
where in the second step we used \eqref{prop1_eq2}. For some $C_1>0$ (independent of $u$ and $\phi$) one has, using \eqref{1_hyp1},
\begin{equation}\label{prop1_p7}
 -\< v, (\hbar^2 U + \hbar W) v \>_\Eh \leq  C_1 \hbar \bigl\|v\bigr\|_\Eh^2\; .
\end{equation}
Inserting \eqref{prop1_p7} into \eqref{prop1_p8} proves the estimate \eqref{prop1_eq1}.
\end{proof}

We set 
\begin{equation}\label{Snull}
 S_{j,k}:= d(m^j, m^k)\quad\text{for}\quad j,k\in \mathcal{C},\, j\neq k \quad\text{and}\quad  S_0 := \min_{j,k\in \mathcal{C}, j\neq k} S_{j,k} \, .
\end{equation}

\begin{hyp}\label{Mj}
For $S_0$ given in \eqref{Snull}, there exists $S\in (0, S_0)$ such that for all $j\in\mathcal{C}$, the ball
$B_S(m^j) :=\{m\in M\,|\, d(m,m^j) < S\}$ has compact closure $\overline{B_S(m^j)}$ in $M$\footnote{Thus, in particular, any (in the topology of $M$)
closed subset of $B_S(m^j)$ is compact. Global compactness of $M$, however, is irrelevant. This is close (but not equivalent) to one of the equivalent 
statements in the Hopf-Rhinow Theorem, namely: All closed and (with respect to the $g$-distance) bounded subsets of the Riemannian manifold $B_S(m^j)$ 
are compact. Here, of course, closed is taken in the relative topology of $B_S(m^j)$ and $B_S(m^j)$ itself is bounded but not compact. 
Correspondingly, geodesics (for $g$) may reach the boundary of $B_S(m^j)$ in finite time, violating geodesic completeness of $B_S(m^j)$. 
Similarly, taking the closure of $\Gamma_c^\infty(B_S(m^j), \Eh)$ in the graph norm of $H_\hbar$ gives $H_0^2(B_S(m^j), \Eh)$ - defined by use of 
the metric connection $\nabla^\Eh$ - which is not the operator domain of the Friedrich's extension of $H_\hbar$ defined on $\Gamma_c^\infty(B_S(m^j), \Eh)$.}. 
In the following, we fix such an $S$ with the additional property that $S+\ep$ for $\ep>0$ sufficiently small still satisfies this condition. For each $j \in \mathcal{C}$, we choose a
compact manifold $M_j\subset M$ (with smooth boundary) such that $\overline{B_S(m^j)} \subset {\mathring M}_j$ and $m^k\notin M_j$ for $k\neq j$. 
Let $H_\hbar^{M_j}$ denote the operator restricted to $M_j$ with Dirichlet boundary conditions.
\end{hyp}

By standard arguments (using compact embedding theorems for Sobolev spaces), $H_\hbar^{M_j}$ has compact resolvent and thus purely discrete spectrum.

\begin{prop}\label{prop2}
For $j\in\mathcal{C}$, let $H_\hbar^{M_j}$ and $H_{m^j, \hbar}$ be given in Hypothesis \ref{Mj} and \eqref{harmoss}
respectively and 
assume that $I=[0, \hbar R_0]$, where $\hbar R_0$ is not in the spectrum of 
$H_{m^j, \hbar}$. Let $u$ be an eigenfunction of $H_\hbar^{M_j}$ with eigenvalue $E \in I$. Then there exist constants $\hbar_0, C, B>0$ such that
for all $\hbar \in (0, \hbar_0)$
\begin{equation}\label{prop2_eq0}
\hbar^2 \Bigl\|   \Bigl(\bigl( 1 + \frac{d^j}{\hbar}\bigr)^{-B} \nabla^\Eh e^{\frac{d^j}{\hbar}} u\Bigr)\Bigr\|_\otimes^2 + \hbar 
\Bigl\| \Bigl( 1 + \frac{d^j}{\hbar}\Bigr)^{-B} e^{\frac{d^j}{\hbar}}u \Bigr\|_\Eh^2 \leq C \hbar\; .
\end{equation}
\end{prop}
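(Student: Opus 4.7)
The plan is to apply the Agmon-type estimate \eqref{prop1_eq1} from Proposition \ref{prop1} to $v:=e^{\phi_\hbar/\hbar}u$, with the weight
\[
\phi_\hbar(m) := d^j(m) - B\hbar\log\!\Bigl(1+\frac{d^j(m)}{\hbar}\Bigr)
\]
chosen so that $e^{\phi_\hbar/\hbar} = (1+d^j/\hbar)^{-B}e^{d^j/\hbar}$ is exactly the weight appearing in \eqref{prop2_eq0}. The positive constant $B$ will be fixed sufficiently large in terms of $R_0$, the Hessian $\nabla^2V|_{m^j}$, and the ambient geometry. Since $u$ is a Dirichlet eigenfunction of $H_\hbar^{M_j}$, elliptic regularity gives $u\in\Ce^\infty({\mathring M}_j,\Eh)$, and $u$ can be approximated in the form norm by elements of $\Gamma_c^\infty({\mathring M}_j,\Eh)$; the conjugated right-hand side becomes $e^{\phi_\hbar/\hbar}(H_\hbar-E)u=0$, so \eqref{prop1_eq1} reduces to
\[
\hbar^2\|\nabla^\Eh v\|_\otimes^2 + \tfrac14\|Fv\|_\Eh^2 \;\leq\; 2\|F_- v\|_\Eh^2 + C_1\hbar\|v\|_\Eh^2,
\]
with $F_+^2-F_-^2 = V-|\dd\phi_\hbar|^2-E$.

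The next step is to split $V-|\dd\phi_\hbar|^2-E$. A direct computation gives $\dd\phi_\hbar = \bigl(1-\tfrac{B}{1+d^j/\hbar}\bigr)\dd d^j$, hence
\[
V-|\dd\phi_\hbar|^2 = (V-|\dd d^j|^2) + |\dd d^j|^2\cdot\frac{B\bigl(2(1+d^j/\hbar)-B\bigr)}{(1+d^j/\hbar)^2}.
\]
The first summand is $\geq 0$ by the eikonal inequality \eqref{eikonalin}. On $\overline{B_S(m^j)}$, where the eikonal equation \eqref{eikonal} holds and $V=|\dd d^j|^2$, the non-degeneracy of $\nabla^2V|_{m^j}$ gives the comparison $V(m)\geq c_*\,d^j(m)$ near $m^j$; together with the fact that the eikonal inequality is uniformly strict outside a neighborhood of $m^j$ (by continuity and compactness of $M_j$), this lets me choose $B$ large enough so that both the ``bad set'' $A:=\{V-|\dd\phi_\hbar|^2-E<0\}$ is contained in $\{d^j\leq C_0\hbar\}$ and also $F_+\geq c_0(\hbar)$ on $M_j\setminus A'$ for a slightly enlarged set $A'\supset A$ still satisfying $A'\subset\{d^j\leq C_0'\hbar\}$, with $c_0(\hbar)^2\geq 16 C_1\hbar$. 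On $A'$ the factor $e^{\phi_\hbar/\hbar}\leq e^{C_0'}$ is uniformly bounded, while on $A$ one has $F_-^2\leq E\leq\hbar R_0$; consequently
\[
\|F_- v\|_\Eh^2 \leq C_2\hbar\|u\|_\Eh^2, \qquad \|v\|_{A'}^2\leq C_3\|u\|_\Eh^2,\qquad \|v\|_{M_j\setminus A'}^2\leq c_0(\hbar)^{-2}\|Fv\|_\Eh^2.
\]

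Substituting $\|v\|_\Eh^2 \leq C_3\|u\|_\Eh^2 + c_0(\hbar)^{-2}\|Fv\|_\Eh^2$ into the displayed inequality and absorbing $C_1\hbar c_0(\hbar)^{-2}\|Fv\|_\Eh^2\leq\tfrac{1}{16}\|Fv\|_\Eh^2$ into $\tfrac14\|Fv\|_\Eh^2$, I obtain $\hbar^2\|\nabla^\Eh v\|_\otimes^2 + \tfrac3{16}\|Fv\|_\Eh^2\leq C_4\hbar\|u\|_\Eh^2$, whence $\|v\|_\Eh^2\leq C_5\|u\|_\Eh^2$ and $\hbar\|\nabla^\Eh v\|_\otimes^2\leq C_5\|u\|_\Eh^2$. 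The statement \eqref{prop2_eq0} then follows from the pointwise identity
\[
(1+d^j/\hbar)^{-B}\nabla^\Eh\bigl(e^{d^j/\hbar}u\bigr) = \nabla^\Eh v + \frac{B}{\hbar(1+d^j/\hbar)}\,\dd d^j\otimes v,
\]
combined with the uniform bound $|\dd d^j|^2/(1+d^j/\hbar)^2\leq C_6\hbar$ on $M_j$ (near $m^j$, use $|\dd d^j|^2\leq V= O(d^j)= O(\hbar)$ on the region $d^j\lesssim\hbar$; away from $m^j$, use $(1+d^j/\hbar)^{-2}=O(\hbar^2)$ and $|\dd d^j|^2\leq V_{\max}$). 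The main obstacle is the quantitative calibration in the middle paragraph: choosing $B$ large enough to simultaneously confine $A$ to an $O(\hbar)$-Agmon neighborhood of $m^j$ and to give $F_+$ a lower bound of order $\sqrt{\hbar}$ strong enough to absorb the $C_1\hbar\|v\|_\Eh^2$ term. This is the one place where the harmonic-oscillator scaling of $V$ and $d^j$ at the well, together with the eikonal equation on $\overline{B_S(m^j)}$, really enters the proof; everything else is either an abstract consequence of Proposition \ref{prop1} or a routine algebraic manipulation.
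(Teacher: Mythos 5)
Your proposal is correct and follows essentially the same route as the paper: conjugate by a logarithmically corrected Agmon weight (you use the smooth $d^j-B\hbar\log(1+d^j/\hbar)$ where the paper uses the piecewise weight $\Phi$ truncated at $d^j=B\hbar$, an immaterial difference), apply \eqref{prop1_eq1} with $(H_\hbar-E)u=0$, use the eikonal (in)equality together with $V\asymp d^j$ on $M_j$ to get $V-|\dd\phi_\hbar|^2-E\gtrsim B\hbar/C_0-E$ outside an $O(\hbar)$-Agmon neighborhood of $m^j$, confine $F_-$ to that neighborhood where the weight is $O(1)$, and absorb the $C_1\hbar\|v\|^2$ term by taking $B$ large. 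The only blemish is the claim $F_-^2\leq E$ on the bad set --- for $d^j\ll B\hbar$ one has $B/(1+d^j/\hbar)>2$ and $V-|\dd\phi_\hbar|^2$ is itself negative --- but since $V=|\dd d^j|^2=O(d^j)=O(\hbar)$ there, one still gets $F_-^2=O(B^2\hbar)$ and the required bound $\|F_-v\|_\Eh^2=O(\hbar)\|u\|_\Eh^2$, so the argument stands.
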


\begin{proof}
We fix $j\in\mathcal{C}$ and set for any $B>0$
\begin{equation}\label{prop2_p1}
\Phi (x) := \begin{cases} d^j(x) - B\hbar \ln \frac{d^j(x)}{\hbar}\, & , \quad d^j(x) > B\hbar \\ 
d^j(x) - B\hbar \ln B \, & , \quad d^j(x) \leq  B\hbar \end{cases}\, .
\end{equation}
Then for $\hbar$ sufficiently small it follows from the eikonal equation \eqref{eikonal} that
\begin{equation}\label{prop2_p2}
V(x) - |\dd\Phi(x)|^2 = V(x)  - |\dd d^j(x)|^2 = 0\, , \qquad \text{for } d^j(x) \leq B\hbar
\end{equation}
Since $\dd \Phi (x) = \bigl( 1-\frac{B\hbar}{d^j(x)}\bigr) \dd d^j(x)$ for $d^j(x) >B\hbar$, the eikonal inequality \eqref{eikonalin} yields
\begin{equation}\label{prop2_p3}
V(x) - |\dd\Phi(x)|^2 \geq V(x) \Bigl( 1 - \bigl(1-\frac{B\hbar}{d^j(x)}\bigr)\Bigr) \geq V(x) \frac{B\hbar}{d^j(x)}\geq \frac{B}{C_0}\hbar
\end{equation}
where for the last step we used that $\frac{1}{C_0} \leq \frac{V(x)}{d^j(x)} \leq C_0$ for some $C_0>0$.
In order to use \eqref{prop1_eq1}, we 
choose $B$ such that $\frac{B}{C_0}\hbar - E \geq 4\hbar (C_1 + 1)$ and
set $F=F_+ + F_-$ with
\begin{align}\label{prop2_p4}
F_+ (x) &= \id_{\{d^j\geq \hbar B\}}( V(x) - |\dd\Phi(x)|^2 - E)^{1/2} +  \id_{\{d^j < \hbar B\}} ( 4\hbar(C_1+1))^{1/2}\\
F_-(x) &= \id_{\{d^j < \hbar B\}} ( 4\hbar(C_1+1) + E)^{1/2}\nonumber
\end{align}
Then by \eqref{prop2_p2} and \eqref{prop2_p3}
\begin{equation}\label{prop2_p5}
F\geq \sqrt{4\hbar(C_1+1)} >0\,,\quad F_- = O(\sqrt{\hbar}) \quad\text{and}\quad F_+^2 - F_-^2 = V - |\dd\Phi|^2 - E,
\end{equation}
yielding
\begin{equation}\label{prop2_p7}
\frac{1}{4}\bigl\| F e^{\frac{\Phi}{\hbar}} u \bigr\|_\Eh^2 - C_1 \hbar \bigl\| e^{\frac{\Phi}{\hbar}} u \bigr\|_\Eh^2 \geq 
\hbar \bigl\| e^{\frac{\Phi}{\hbar}} u \bigr\|_\Eh^2 
\end{equation}
and since $\supp F_-=\{d^j<\hbar B\}$ for some $C>0$
\begin{equation}\label{prop2_p8}
\bigl\| F_- e^{\frac{\Phi}{\hbar}} u \bigr\|_\Eh^2 \leq C \hbar  \|u\|_\Eh^2\; .
\end{equation}
Moreover $e^{\frac{\Phi}{\hbar}}$ is of the same order of magnitude as $\bigl(1 + \frac{d^j}{\hbar}\bigr)^{-B} e^{\frac{d^j}{\hbar}}$,
thus \eqref{prop1_eq1} yields for some constants $C, C', \tilde{C}>0$
\begin{align}
\text{lhs}\eqref{prop2_eq0} &\leq 
C \Bigl(\hbar^2 \bigl\|  \nabla^\Eh e^{\frac{\Phi}{\hbar}}  u\bigr\|_\otimes^2 +  \hbar 
\bigl\| e^{\frac{\Phi}{\hbar}}u \bigr\|_\Eh^2 \Bigr)
\leq C\Bigl( \hbar^2 \bigl\| \nabla^\Eh\bigl( e^{\frac{\Phi}{\hbar}}  u\bigr)\bigr\|_\otimes^2 +  \hbar 
\bigl\| e^{\frac{\Phi}{\hbar}} u \bigr\|_\Eh^2 \Bigr) \nonumber\\
&\leq C\Bigl( \hbar^2 \bigl\| \nabla^\Eh\bigl( e^{\frac{\Phi}{\hbar}}  u\bigr)\bigr\|_\otimes^2 +   
\frac{1}{4}\bigl\| F e^{\frac{\Phi}{\hbar}} u \bigr\|_\Eh^2 - C_1 \hbar \bigl\| e^{\frac{\Phi}{\hbar}} u \bigr\|_\Eh^2\Bigr) \nonumber \\
&\leq C' \bigl\| F_- e^{\frac{\Phi}{\hbar}} u \bigr\|_\Eh^2 \leq \tilde{C} \hbar \|u\|_\Eh^2\, . \label{prop2_p9}
\end{align}

\end{proof}

\begin{cor}\label{cor1}
Under the assumptions given in Proposition \ref{prop2}, there exists $N_0\in\N$ such that 
\[
\bigl\| e^{\frac{d^j}{\hbar}} u \bigr\|_\Eh^2 +  \bigl\| \nabla^\Eh e^{\frac{d^j}{\hbar}} u \bigr\|_\otimes^2 = O\bigl(\hbar^{-N_0}\bigr)\qquad
\text{as}\quad \hbar \to 0\, .
\]
\end{cor}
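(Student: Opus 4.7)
The plan is to derive this directly from Proposition~\ref{prop2} by removing the polynomial weight $\bigl(1 + d^j/\hbar\bigr)^{-B}$ from the weighted estimate via a crude uniform pointwise bound. The key observation is that the Dirichlet eigenfunction $u$ is supported in the compact manifold $M_j$, so the Agmon distance $d^j$ is bounded on $\supp u$ by some $\hbar$-independent constant $D := \sup_{M_j} d^j < \infty$; this finiteness is where Hypothesis~\ref{Mj} enters.

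First I would record the trivial bound
\[
\Bigl(1 + \frac{d^j(x)}{\hbar}\Bigr)^{2B} \leq \Bigl(1 + \frac{D}{\hbar}\Bigr)^{2B} = O(\hbar^{-2B}) \qquad (\hbar \to 0), \quad x \in M_j.
\]
Then, inserting $1 = \bigl(1+d^j/\hbar\bigr)^{B}\bigl(1+d^j/\hbar\bigr)^{-B}$ inside the two norms of interest and pulling the non-negative factor $\bigl(1+d^j/\hbar\bigr)^{B}$ out by its supremum gives
\[
\bigl\| e^{d^j/\hbar} u \bigr\|_\Eh^2 \leq \Bigl(1 + \frac{D}{\hbar}\Bigr)^{2B} \Bigl\|\Bigl(1+\tfrac{d^j}{\hbar}\Bigr)^{-B} e^{d^j/\hbar} u\Bigr\|_\Eh^2,
\]
and likewise for $\bigl\|\nabla^\Eh e^{d^j/\hbar} u\bigr\|_\otimes^2$. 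Proposition~\ref{prop2} bounds the weighted expressions on the right by $O(1)$ and $O(\hbar^{-1})$ respectively, so the displayed estimate yields $O(\hbar^{-2B})$ for the first term and $O(\hbar^{-2B-1})$ for the gradient term. Setting $N_0 := 2B + 1$ then proves the corollary.

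There is no substantive obstacle here: the statement is a bookkeeping consequence of Proposition~\ref{prop2}, and $B$ is an $\hbar$-independent positive constant already fixed in the proof of that proposition (in terms of $C_0$, $C_1$, and the upper bound $R_0$ on the spectral parameter). The only point requiring a moment of care is to make sure one appeals to the compactness of $M_j$ rather than of $M$, since $M$ itself is not assumed compact; $u$ however lives on $M_j$, where all suprema in $d^j$ are harmless.
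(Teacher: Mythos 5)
Your argument is correct and is precisely the (unwritten) deduction the paper intends: the weight $\bigl(1+d^j/\hbar\bigr)^{-B}$ in Proposition \ref{prop2} is a scalar function bounded below by $\bigl(1+D/\hbar\bigr)^{-B}$ on the compact set $M_j$ carrying $u$, with $B$ fixed independently of $\hbar$ in the proof of that proposition, so stripping the weight costs only a factor $O(\hbar^{-2B})$ and any integer $N_0\geq 2B+1$ works. The paper states the corollary without proof, and your bookkeeping, including the correct appeal to compactness of $M_j$ rather than of $M$, supplies exactly the missing step.
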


\section{Interaction matrix}\label{sec3}

In this section we introduce the interaction matrix, which under appropriate spectral conditions allows to compute full
asymptotics of eigenvalue splitting on an exponentially small scale. It is our main technical tool.

We start with notations and recall some standard facts.
By $\Lambda^p_\C (M) = \Lambda^p(M)\otimes \C$ we denote the complexified exterior bundle; its smooth sections are the complex differential $p$-forms  in $\Omega_\C^p(M)$.

We extend the hermitian form $\gamma$ to a sesquilinear fibrewise pairing
$$ \gamma:  \Eh \times (T^*_\C M \otimes \Eh) \to \Lambda^1_\C (M), \qquad 
\gamma[u,\alpha \otimes v] := \gamma[u,v] \alpha,$$ 
where $u,v$ are in $\Eh_m$ and $\alpha \in T^*_{\C,m} M$ which in the standard way extends to the full tensor product by linearity. Similarly, we define an extension in the first factor, giving 
$$ \gamma: ((T^*_\C M  \otimes  \Eh)    \times  \Eh  \to \Lambda^1_\C (M), \qquad 
\gamma[\alpha \otimes u, v] := \gamma[u,v] \overline{\alpha} ,$$ 

 We feel free to (often) suppress the subscript $m$. 
In particular, for $u,v\in \Gamma^\infty (M, \Eh)$, 
we then have 
$ \gamma [\nabla^\Eh u, v]\in \Omega^1_\C(M)$, satisfying 
\begin{equation}\label{form-gamma}  
\gamma [\nabla^\Eh u, v] (X) = \gamma [\nabla^\Eh_X u, v]\, , \quad X\in  \Gamma(TM)\,, 
\end{equation}
and similarly for $ \gamma [ u, \nabla^\Eh v]$.
Since $\nabla^\Eh$ is a metric connection,
\[
\dd \gamma [u,v] (X) = X(\gamma [u,v]) = \gamma [\nabla^\Eh_X u, v] + \gamma [u,\nabla^\Eh_X v]   , \quad X\in  \Gamma(TM)\,. 
\]
Combined with \eqref{form-gamma} this yields
\begin{equation}\label{form1}
\dd \gamma [u,v] =  \gamma [\nabla^\Eh u, v] + \gamma [u,\nabla^\Eh v],
\end{equation}
which is just an equivalent (and shorter) expression for $\nabla^\Eh$ being metric.
As in the real case, the Hodge star operator $*:\Omega^p_\C (M) \rightarrow \Omega^{n-p}_\C (M)$ associates to any $\mu\in \Omega^p_\C (M)$ 
the $(n-p)$-form 
$*\mu$ given by 
\begin{equation}  \label{hodgestar}
 *\mu_m(v_{p+1}, \ldots v_n) = \mu_m (v_1, \ldots v_p)\, , \qquad m\in M,
\end{equation}
where $v_1, \ldots v_n$ are oriented orthonormal vectors in $T_mM$. 
In particular, if $i: \Sigma \to M$ denotes an embedded regular hypersurface in $M$, oriented by its outer normal field $N$ we have
\begin{equation}  \label{oneform}
i^*(* \omega)= \omega(N) d\sigma,\qquad \omega\in \Omega^1_\C(M),
\end{equation}
where $d \sigma$ is the induced volume form on $\Sigma$.
Furthermore,  for each $m\in M$ an inner 
product on 
$\Lambda^p_\C (M)$ is defined fibrewise by
\begin{equation}\label{innerproductp}  
\langle \mu,\nu \rangle \, \dvol (m) := \overline{\mu} \wedge *\nu
\end{equation}
where $\dvol$ denotes the volume form associated to $g$. Thus we can define an $L^2$-inner product on the compactly supported sections of $\Lambda^p_\C(\Omega)$ for $\Omega \subset M$ by 
\begin{equation}\label{innerproductpM} 
\scl \mu, \nu \scr_{p,\Omega} := \int_\Omega \overline{\mu} \wedge *\nu = \int_\Omega \langle \mu,\nu \rangle\, \dvol \, .
\end{equation}
In the case $\Omega = M$ we shall simply drop the subscript.

\begin{Lem}\label{lemma1}
Let $\Omega$ be an $n$-dimensional submanifold of $M$ with smooth boundary $\partial \Omega$, then for any $\beta\in \Lambda^{k-1}_\C(M)$ and 
$\gamma\in \Lambda^k_\C(M) $ with compact support
\begin{equation} \label{lemma1_0}
\int_{\partial \Omega} \overline{\beta}\wedge *\gamma = \scl \dd\beta, \gamma \scr_{k,\Omega} - \scl \beta, \delta \gamma \scr_{k-1, \Omega} 
\end{equation}
where 
\begin{equation}\label{codiff}
\delta := (-1)^{n(k+1) +1} *\dd*: \Lambda^k_\C(M) \rightarrow \Lambda^{k-1}_\C(M) 
\end{equation}
denotes the codifferential operator.
\end{Lem}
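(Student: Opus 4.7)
The plan is to apply Stokes' theorem to the compactly supported $(n-1)$-form $\overline{\beta}\wedge *\gamma$ on $\Omega$ and then identify each of the two resulting bulk terms as one of the inner products on the right-hand side, using the definition of $\delta$ and the standard sign identity $**\mu = (-1)^{p(n-p)}\mu$ for $\mu\in\Omega^p_\C(M)$.

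First I would note that, since $\beta$ is a $(k-1)$-form and $*\gamma$ is an $(n-k)$-form, the wedge product $\overline{\beta}\wedge *\gamma$ is a complex $(n-1)$-form, whose exterior derivative is an $n$-form and can therefore be integrated against the orientation of $\Omega$. Applying Stokes' theorem (which passes through complex scalars without change) and the Leibniz rule yields
\begin{equation*}
\int_{\partial\Omega}\overline{\beta}\wedge *\gamma
= \int_\Omega \dd\overline{\beta}\wedge *\gamma + (-1)^{k-1}\int_\Omega \overline{\beta}\wedge \dd *\gamma .
\end{equation*}
Since $\dd$ commutes with complex conjugation, the first bulk integral equals $\int_\Omega \overline{\dd\beta}\wedge *\gamma = \scl \dd\beta,\gamma\scr_{k,\Omega}$ by the definition \eqref{innerproductpM}.

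For the second term I would rewrite $\dd *\gamma$ in terms of $\delta\gamma$. From \eqref{codiff} and the fact that $\dd*\gamma$ is an $(n-k+1)$-form, together with $**=(-1)^{p(n-p)}\id$ on $p$-forms, one computes
\begin{equation*}
*\delta\gamma = (-1)^{n(k+1)+1}\,**\dd *\gamma = (-1)^{n(k+1)+1+(n-k+1)(k-1)}\,\dd *\gamma .
\end{equation*}
A direct parity check shows $n(k+1)+1+(n-k+1)(k-1) \equiv k \pmod 2$, so $\dd*\gamma = (-1)^k *\delta\gamma$, hence $(-1)^{k-1}\overline{\beta}\wedge \dd*\gamma = -\overline{\beta}\wedge *\delta\gamma$. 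Integrating over $\Omega$ recognises this as $-\scl \beta,\delta\gamma\scr_{k-1,\Omega}$, and combining with the first term gives \eqref{lemma1_0}.

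The only step that demands any care is the parity count in the reduction of $\dd*\gamma$ to $*\delta\gamma$; everything else is a direct application of Stokes' theorem and the definitions. I would also verify at the outset that compact support of $\beta,\gamma$ ensures $\overline{\beta}\wedge *\gamma$ is compactly supported on $\overline{\Omega}$, so that the boundary integral and the application of Stokes' theorem are justified without further decay assumptions.
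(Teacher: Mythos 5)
Your proof is correct and follows essentially the same route as the paper's: Stokes' theorem plus the Leibniz rule, then the identity $**\mu=(-1)^{p(n-p)}\mu$ and a parity count to convert $\dd*\gamma$ into $*\delta\gamma$ (your intermediate identity $\dd*\gamma=(-1)^k*\delta\gamma$ is just a repackaging of the single sign exponent the paper computes). The parity check $n(k+1)+1+(n-k+1)(k-1)\equiv k \pmod 2$ is accurate, so nothing is missing.
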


We remark that, if the boundary $\partial \Omega$ is empty, equation  \eqref{lemma1_0} shows in particular that 
$\delta$ actually coincides with the adjoint $ \dd^*$, i.e.
\begin{equation}\label{dconjugiert}
\scl \dd\beta, \gamma \scr_{k,\Omega} = \scl \beta, \delta \gamma \scr_{k-1, \Omega} \, , \qquad \beta\in \Omega^{k-1}_\C(M), \gamma\in \Omega^k_\C(M).
\end{equation}

\begin{proof}
By Stokes Theorem, we have 
\[ \int_{\partial \Omega} \overline{\beta}\wedge *\gamma = \int_\Omega \dd(\overline{\beta}\wedge *\gamma) =
\int_\Omega \Bigl[(\dd\overline{\beta}) \wedge *\gamma + (-1)^{k-1} \overline{\beta}\wedge (\dd*\gamma)\Bigr] \]
Using $**\alpha = (-1)^{k(n-k)} \alpha$ for any $\alpha\in\Lambda^k_\C(M)$ and $\dd*\gamma \in \Lambda^{n-k+1}_\C(M)$, the right hand side is equal to

\[ \scl \dd\beta, \gamma\scr_{k,\Omega} + (-1)^{k-1 + (n-k+1)(k-1) - n(k+1) -1}\int_\Omega \overline{\beta}\wedge *\delta\gamma = 
\scl \dd\beta, \gamma \scr_{k,\Omega} - \scl \beta, \delta\gamma \scr_{k-1, \Omega}\, ,\]
since the exponent of $(-1)$ is equal to $2(k-n-1) - k(k-1) - 1$ and thus impair.
\end{proof}

The following hypothesis ensures that there is no spectrum exponentially close to the boundary of the spectral interval we shall consider later on.

\begin{hyp}\label{hyp1}
For $M_j,\,  j\in\mathcal{C}$ as given in Hypothesis \ref{Mj}, let $I_\hbar = [\alpha (\hbar),\beta (\hbar)]$ be an interval, such that
$\alpha (\hbar),\beta (\hbar)$ are $O(\hbar)$ for $\hbar\to 0$. Furthermore we assume that there
exists a function $a(\hbar)>0$ with the property $|\log a(\hbar)| =
o\left(\frac{1}{\hbar}\right),\, \hbar\to 0$, such that none of the
operators $H_\hbar,H_\hbar^{M_1},\ldots H_\hbar^{M_r}$ given in Hypotheses \ref{setup1} and \ref{Mj} has spectrum in
$[\alpha(\hbar)-2a(\hbar),\alpha(\hbar)[$ or $]\beta(\hbar),\beta(\hbar)+2a(\hbar)]$.
\end{hyp}

Given a spectral interval $I_\hbar$ as above, let
\begin{eqnarray}\label{specHepusw}
\spec (H_\hbar) \cap I_\hbar = \{ \lambda_1,\ldots , \lambda_N\} \,
,&\quad&
u_1,\ldots ,u_N\in L^2(M, \Eh)\\
\F := \Span \{u_1,\ldots u_N\} \nonumber\\
\spec \left(H_\hbar^{M_j}\right) \cap I_\hbar = \{ \mu_{j,1},\ldots,
\mu_{j,n_j}\} \, ,
&\quad& v_{j,1},\ldots,v_{j,n_j}\in L^2 (M_j, \Eh) ,\, j\in {\mathcal C} \nonumber\\
\end{eqnarray}
denote the eigenvalues of $H_\hbar$ and of the Dirichlet operators
$H_\hbar^{M_j}$ inside of the spectral interval  $I_\hbar$ and corresponding orthonormal systems
of eigenfunctions. We write 
\begin{equation}\label{valpha}
 v_\alpha\quad\text{with}\quad \alpha =(\alpha_1, \alpha_2)\in \mathcal{J}:=\{(j,\ell)\,|\,j\in\mathcal{C},\, 1\leq \ell \leq n_j\}
\quad \text{and}\quad j(\alpha):= \alpha_1\; .
\end{equation}

Let $\chi_j\in \Ce_0^\infty(M_j,\R)$ be such that $\chi_j=1$ in an open neighborhood of $\overline{B_S(m^j)}$. We set
\begin{equation}\label{defpsi}
\psi_{j,\ell} := \chi_j v_{j,\ell}\, , \quad 
\E_j := \Span \{ \psi_{j,1},\ldots, \psi_{j,n_j} \} \quad\text{and}\quad  \E := \bigoplus \E_j \, .
\end{equation}

For closed subspaces $\E$
and $\F$ of any Hilbert space $\mathscr{H}$, we denote by $\Pi_{\E}$ and $\Pi_{\F}$ the orthogonal
projections on $\E$ and $\F$ respectively. Then we define the
nonsymmetric distance $\vec{\dist}(\E,\F)$ between $\E$ and $\F$
by
\[ \vec{\dist}(\E,\F) := \| \Pi_{\E} - \Pi_{\F}\Pi_{\E} \| \; .\]

The following theorem is analog to \cite{helffer-sjostrand-1}, Theorem 2.4, Lemma 2.8 and is crucial for the construction of the interaction matrix.

\begin{theo} \label{theo1}
Under the assumptions given in Hypotheses \ref{setup1}, \ref{Mj} and \ref{hyp1} and with the notation given above, 
there exists $S_2 \in (S, S_0)$ such that for $\hbar$ sufficiently small and for all $s<S_2$ and for all $\alpha\in \mathcal{J}$
\ben
\item $H_\hbar \psi_\alpha = \mu_\alpha \psi_\alpha + O\Bigl(e^{-\frac{S_2}{\hbar}}\Bigr)$ in $L^2(M_{j(\alpha)}, \Eh)$.
\item $\vec{\dist}(\E,\F) = \vec{\dist}(\F,\E) = O\Bigl(e^{-\frac{s}{\hbar}}\Bigr)$.
\item there exists a bijection $b: \sigma (H_\hbar)\cap I_\hbar \rightarrow \bigcup_{j=1}^r\sigma(H_\hbar^{M_j}) \cap I_\hbar$ such that 
$|b(\lambda) - \lambda| = O\Bigl(e^{-\frac{s}{\hbar}}\Bigr)$.
\item $\|\Pi_0 - \Pi_\E\| = O\Bigl(e^{-\frac{s}{\hbar}}\Bigr)$ where $\Pi_0$ denotes the projection onto $\E$ along $\F^\perp$. 
\een
\end{theo}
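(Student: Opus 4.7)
The plan is to handle the four statements in order, using only Proposition 3.1 and the Agmon estimates of Section 3 as analytic input; the remainder is spectral theory plus elementary linear algebra of two almost-coinciding finite-dimensional subspaces. For (a), since $v_{j,\ell}$ solves the Dirichlet problem on $M_j$,
\[
H_\hbar \psi_\alpha = \chi_j H_\hbar v_{j,\ell} + [H_\hbar,\chi_j] v_{j,\ell} = \mu_\alpha \psi_\alpha + [H_\hbar,\chi_j] v_{j,\ell},
\]
and the commutator $[H_\hbar,\chi_j]=-\hbar^2[L,\chi_j]$ is first-order with coefficients supported in $\supp \dd\chi_j \subset \{d^j\geq S+\eta\}$ for some $\eta>0$ (by the choice of $\chi_j$ and the slack built into Hypothesis \ref{Mj}). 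Proposition \ref{prop2} and Corollary \ref{cor1} bound $v_{j,\ell}$ and $\nabla^\Eh v_{j,\ell}$ on this set by $\hbar^{-N}e^{-(S+\eta)/\hbar}$, and choosing $S_2\in(S,S+\eta)\subset(S,S_0)$ absorbs the polynomial loss to yield the claimed $O(e^{-S_2/\hbar})$.

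For (b), I first check that $(\psi_\alpha)_{\alpha\in\mathcal{J}}$ is almost orthonormal: when $j(\alpha)=j(\beta)$, this follows from the orthonormality of the $v_{j,\ell}$ together with the Agmon decay of $(1-\chi_j)v_{j,\ell}$; for distinct wells the overlap region satisfies $d^{j(\alpha)}+d^{j(\beta)}\geq S_0$, again exponentially small. Since $\Pi_\F$ is the spectral projection of $H_\hbar$ onto $I_\hbar$, Hypothesis \ref{hyp1} makes $(H_\hbar-\mu_\alpha)$ invertible on $\ran(1-\Pi_\F)$ with norm at most $(2a(\hbar))^{-1}$. Commuting $1-\Pi_\F$ through $H_\hbar$ and using (a) gives
\[
\|(1-\Pi_\F)\psi_\alpha\|_\Eh \leq (2a(\hbar))^{-1}\|(H_\hbar-\mu_\alpha)\psi_\alpha\|_\Eh = O\bigl(a(\hbar)^{-1}e^{-S_2/\hbar}\bigr) = O\bigl(e^{-s/\hbar}\bigr)
\]
for any $s<S_2$, thanks to $|\log a(\hbar)|=o(1/\hbar)$. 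Combined with almost orthonormality this bounds $\vec{\dist}(\E,\F)$. For the reverse inequality I would first prove the analogous Agmon decay for the genuine eigenfunctions $u_k$ of $H_\hbar$ with weight $d(\cdot,\mathcal{C}):=\min_j d^j$; the cut-offs $\chi_j u_k$ then become approximate eigenfunctions of $H_\hbar^{M_j}$ and the mirror spectral-gap argument applied to $H_\hbar^{M_j}$ delivers $\vec{\dist}(\F,\E)=O(e^{-s/\hbar})$.

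With both directional distances small, $\Pi_\F|_\E$ and $\Pi_\E|_\F$ are bijections, so $N=\dim\F=\dim\E=\sum_j n_j$. For (c), diagonalising the symmetric finite-rank operator $\Pi_\F H_\hbar\Pi_\F$ on $\F$ produces the spectrum $\{\lambda_k\}$; transporting it to $\E$ via the almost-unitary $\Pi_\F|_\E$ gives an operator that differs by $O(e^{-s/\hbar})$ in operator norm from the diagonal operator with entries $\mu_\alpha$, and Weyl's perturbation inequality yields the bijection $b$. For (d), the decomposition $\Hi=\E\oplus\F^\perp$ is valid by the dimension count together with $\E\cap\F^\perp=\{0\}$ (immediate from $\vec{\dist}(\E,\F)<1$), so $\Pi_0$ is well defined with $\|\Pi_0\|$ uniformly bounded. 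Because $\ker\Pi_0=\F^\perp=\ker\Pi_\F$ we have $\Pi_0=\Pi_0\Pi_\F$, and because $\Pi_0|_\E=\id_\E=\Pi_\E|_\E$ the difference $\Pi_0-\Pi_\E$ vanishes on $\E$; hence
\[
\Pi_0-\Pi_\E = \Pi_0\Pi_\F(1-\Pi_\E), \qquad \|\Pi_0-\Pi_\E\| \leq \|\Pi_0\|\cdot\|\Pi_\F(1-\Pi_\E)\| = O\bigl(e^{-s/\hbar}\bigr),
\]
since $\|\Pi_\F(1-\Pi_\E)\|=\|(1-\Pi_\E)\Pi_\F\|=\vec{\dist}(\F,\E)$ by taking adjoints.

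The main obstacle is the Agmon decay of the true eigenfunctions $u_k$ of $H_\hbar$, which underlies the reverse direction in (b) and thus the dimension count in (c), but is not supplied by Section \ref{sec1}. Adapting Proposition \ref{prop2} to this global setting requires the non-smooth weight $d(\cdot,\mathcal{C})$, a logarithmic regularisation in the spirit of \eqref{prop2_p1}, and an IMS partition of unity subordinate to the wells so that the eikonal inequality \eqref{eikonalin} can be applied cell-by-cell. Once this ingredient is in hand, every remaining step above is essentially bookkeeping.
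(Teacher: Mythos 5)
Your argument follows the paper's proof of this theorem: part (a) is the same commutator-plus-Agmon estimate (the paper additionally records the bundle identity $\bigl[(\nabla^\Eh)^*\nabla^\Eh,\chi\bigr]=(\Delta\chi)-2\nabla^\Eh_{\grad\chi}$ to justify the first-order structure of $[H_\hbar,\chi_j]$, but your weaker observation suffices), and for (b)--(d) the paper merely establishes the almost-orthonormality estimate \eqref{theo2_11} and defers to Helffer--Sj\"ostrand, Theorem 2.4 and Lemma 2.8, which is exactly the spectral-gap and oblique-projection reasoning you reconstruct. The one ingredient you flag as missing --- Agmon decay of the true eigenfunctions $u_k$ of $H_\hbar$ with weight $\min_j d^j$, needed for $\vec{\dist}(\F,\E)$ and the dimension count --- is likewise left implicit in the paper's citation of Helffer--Sj\"ostrand, and your sketch of how to derive it from Proposition \ref{prop1} is the standard and correct route.
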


\begin{proof} 
In order to show (a), we write 
\begin{equation}\label{theo1_1} 
H_\hbar \psi_\alpha = \mu_\alpha \psi_\alpha + \big[H_\hbar, \chi_{j(\alpha)}\big] v_\alpha 
\end{equation} 
and observe that 
\begin{equation}  \label{theo1_2}
\bigl[H_\hbar, \chi_{j(\alpha)}\bigr]= \hbar^2  \bigl[ \big(\nabla^\Eh\bigr)^*\nabla^\Eh, \chi_{j(\alpha)}\bigr] \, .
\end{equation}
We claim that, for any  $\chi\in \Ce_0^\infty(M,\R)$,   one has as operator on $\Gamma^\infty (M, \Eh)$ 
\begin{equation}  \label{new2}
\bigl[\big(\nabla^\Eh\bigr)^* \nabla^\Eh, \chi \bigr] = (\Delta \chi) - 2 
\nabla^\Eh_{\grad \chi}
\end{equation}
where $\Delta=\dd^* \dd= \delta d$ 
(see \eqref{codiff}). 

In fact, dropping momentarily  for reasons of brevity the superscript $\Eh$ in $\nabla^\Eh$, one readily computes
\begin{equation} \label{new3}
\bigl[\nabla^* \nabla, \chi \bigr]= T - T^*,
\end{equation}
where
\begin{equation} \label{new4}
T= [\nabla^*, \chi] \nabla \quad\text{and}\quad T^*= - \nabla^* (\dd \chi \otimes\quad).
\end{equation}
We need a few identities on linear algebra in the fibres of $\Eh$ and $T^*_\C M \otimes \Eh$ which we include for the convenience of the reader. In view of the usual musical isomorphism $\langle \dd \chi, \o\rangle_1= \o(\grad \chi)$,
which is standard at least for {\em real} $\chi$, one gets the first equality in
\begin{equation}  \label{musical}
\gamma[u, \nabla_{\grad \chi} v] =\langle \dd \chi, \gamma[u,\nabla v ]\rangle_1 =
\langle \dd \chi \otimes u, \nabla v \rangle_{\otimes},
\end{equation}
for $u,v \in \Gamma^\infty (M, \Eh)$.     
For the second equality, it suffices to consider the special case $\nabla v= \alpha \otimes w$ for $\alpha \in T^*_{\C,m} M$ and $w \in \Eh_m$ and then use linearity for the general case. But in this case
$$ \mbox{rhs } \eqref{musical} = \langle \dd \chi, \gamma[u,w]\alpha \rangle_1 =  
\langle \dd \chi, \gamma[u,\alpha \otimes w] \rangle_1, $$
establishing \eqref{musical}.
We now find
\begin{equation} \label{new5}
T= -  \nabla_{\grad \chi},
\end{equation} 
since for smooth sections $u,v$ of $\Eh$ (at least one compactly supported) one has,
momentarily dropping the subscript $\Eh$,
\begin{equation}
\langle\!\langle Tu,v \rangle\!\rangle  = \langle\!\langle 
\nabla u,  - \dd \chi \otimes v \rangle\!\rangle_\otimes 
 = - \langle\!\langle  \nabla_{\grad  \chi} u,v \rangle\!\rangle,
\end{equation}
using \eqref{musical} (complex conjugated and integrated over $M$) in the last step.
Similarly, 
using both the first and second equality in \eqref{musical}  combined with $ \nabla$ being metric, one finds
\begin{multline}  \label{new7}
\scl -T^* u,v \scr = \scl \dd \chi \otimes u, \nabla v \scr_\otimes = \scl \dd \chi, \gamma[u,\nabla v] \scr_1
 = \scl \dd \chi,\dd (\gamma[u,v])  -\gamma[\nabla u,v] \scr_1 \\
 = \scl (\dd^* \dd \chi) u,v \scr - \scl \nabla_{\grad \chi}  u,v \scr .
\end{multline}
This actually proves
\begin{equation} \label{new8}
-T^* =   \Delta -\nabla_{\grad \chi},
\end{equation}
and combining equation \eqref{new8} with \eqref{new5} gives \eqref{new2}.
 Using the commutator formula \eqref{new2} we can now estimate the commutator term on the rhs of \eqref{theo1_1}. 
The assumption on the cut-off function $\chi_j$ gives that $d^{j} \geq S_1 $ on the support of $\dd \chi_{j}$ for some $S_1\in (S, S_0)$. Thus one finds
\begin{equation}  \label{theo1_13}
\big\| \big[H_\hbar, \chi_{j(\alpha)}\big] v_\alpha \big\|_\Eh 
\leq 
C e^{-\frac{S_1}{\hbar}} \Bigl( \bigl\|e^{\frac{d^{j(\alpha)}}{\hbar}} v_\alpha\bigr\|_\Eh + \bigl\|e^{\frac{d^{j(\alpha)}}{\hbar}} \nabla v_\alpha\bigr\|_\otimes\Bigr)\, 
\end{equation}  
for some $C>0$,  since the estimate on the summand involving $\Delta \chi_j$, where $j=j(\alpha)$,  is trivial and the summand involving 
$\nabla_{\grad \chi_j}$ satisfies, using \eqref{musical},
\begin{align}
\Big| \scl u,    \nabla_{\grad \chi_{j}} v_{\alpha} \scr_\Eh \Big| &
\leq \Big| \scl e^{-\frac{d^{j}}{\hbar}} \dd \chi_{j} \otimes u,  e^{\frac{d^{j}}{\hbar}} \nabla v_{\alpha} \scr_\otimes \Big| 
\leq e^{-\frac{S_1}{\hbar}} \bigl\|\dd \chi_{j} \otimes u\bigr\|_\otimes\bigl\|e^{\frac{d^{j}}{\hbar}} \nabla v_\alpha\bigr\|_\otimes \nonumber\\
& \leq C e^{-\frac{S_1}{\hbar}}\bigl\|u\bigr\|_\Eh \bigl\|e^{\frac{d^{j}}{\hbar}} \nabla v_\alpha\bigr\|_\otimes \label{theo1_9}
\end{align}
for some $ C >0$. Taking the supremum over $\|u\|=1$ gives \eqref{theo1_13}.
Since
\[ \bigl\| e^{\frac{d^{j}}{\hbar}} \nabla^\Eh v_\alpha \bigr\|_\otimes \leq \bigl\| \nabla^\Eh\bigl(e^{\frac{d^{j}}{\hbar}} v_\alpha\bigr) \bigr\|_\otimes + 
\bigl\| e^{\frac{d^{j}}{\hbar}} \tfrac{1}{\hbar}\dd d^{j}\otimes  v_\alpha \bigr\|_\otimes,  \]
equation \eqref{theo1_13} together with Corollary \ref{cor1}  shows (a) for any $S_2< S_1$.

In order to prove (b), we recall that at each well $m^j,\, j\in \mathcal{C},$ the Dirichlet eigenfunctions $v_{j,\ell}$ are orthonormal and thus
by Corollary \ref{cor1} and since $d^{j(\alpha)} + d^{j(\beta)} \geq S_{j(\alpha),j(\beta)}$
\begin{equation}\label{theo1_11}
\scl v_\alpha, v_\beta\scr_\Eh = \delta_{\alpha\beta} + \bigl(1-\delta_{j(\alpha) j(\beta)}\bigr) O\bigl(\hbar^{-N_0}e^{-\frac{S_{j(\alpha),j(\beta)}}{\hbar}}\bigr)\, .
\end{equation}
for some $N_0\in \N$. Since moreover $d^{j} \geq S_1$ on the support of $1- \chi_{j}$ for some $S_1\in (S, S_0)$ it follows from \eqref{theo1_11}, using again Corollary \ref{cor1}, that for some $N_0\in \N$ and 
for all $\alpha,\beta\in\mathcal{J}$
\begin{equation}\label{theo2_11} 
\langle\!\langle \psi_\alpha, \psi_\beta\rangle\!\rangle_\Eh = \delta_{\alpha \beta} + \delta_{j(\alpha) j(\beta)} O\bigl(\hbar^{-N_0}e^{-\frac{2 S_1}{\hbar}}\bigr) 
+ \bigl(1-\delta_{j(\alpha) j(\beta)}\bigr) O\bigl(\hbar^{-N_0}e^{-\frac{S_{j(\alpha),j(\beta)}}{\hbar}}\bigr) \, . 
\end{equation}
Then the proofs of (b) and (c) proceed exactly along the lines of \cite{helffer-sjostrand-1}, Theorem 2.4.

(d) can be seen as in \cite{helffer-sjostrand-1}, Lemma 2.8.
\end{proof}

In the following theorem, we introduce the notion of interaction matrix, refining the analysis of the error term above.

\begin{theo}\label{theo2}
In the setting of Theorem \ref{theo1}, for all $s < S_2$ and $\hbar$ sufficiently small, the matrix of $\Pi_0 H_\hbar|_\E$ in the basis $\psi_\alpha, \, \alpha\in \mathcal{J},$ is given by
\begin{equation}\label{malphabeta} 
\Bigl( m_{\alpha\beta}\Bigr)_{\alpha,\beta\in\mathcal{J}} +
O\left(e^{-\frac{2 s}{\hbar}}\right)\, , \qquad m_{\alpha\beta} = \delta_{\alpha\beta}\mu_\alpha + w_{\alpha\beta} 
\end{equation}
where the interaction matrix is given by 
\begin{align}\label{intact}
w_{\alpha\beta} &= \< \psi_\alpha, [H_\hbar, \chi_{j(\beta)}] v_\beta\>_\Eh \\
&= \hbar^2 \Bigl( \< \chi_{j(\alpha) }\nabla^\Eh v_\alpha, \dd\chi_{j(\beta)} \otimes v_\beta\>_\otimes - 
\< \chi_{j(\alpha)} \dd\chi_{j(\beta)} \otimes v_\alpha, \nabla^\Eh v_\beta\>_\otimes \Bigr) + O\left(e^{-\frac{2 s}{\hbar}}\right)
\, .\label{intact2}
\end{align}
In particular, we have for some $N_0\in\N$
\begin{equation}\label{theo2_0} 
w_{\alpha\beta}=\delta_{j(\alpha) j(\beta)} O\bigl(e^{-\frac{2 s}{\hbar}}\bigr) + \bigl(1-\delta_{j(\alpha) j(\beta)}\bigr) O\bigl(\hbar^{-N_0}e^{-\frac{S_{j(\alpha),j(\beta)}}{\hbar}}\bigr)\, .
\end{equation}
\end{theo}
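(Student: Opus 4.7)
The plan is to reduce the matrix element $\scl \psi_\alpha, \Pi_0 H_\hbar \psi_\beta\scr_\Eh$ to an expression involving only the commutator $[H_\hbar, \chi_{j(\beta)}]$, then rewrite this commutator using the adjoint structure of the connection, and finally derive the decay estimate from Corollary \ref{cor1}. The key inputs are Theorem \ref{theo1}(a) (which yields $(H_\hbar - \mu_\beta)\psi_\beta = [H_\hbar, \chi_{j(\beta)}] v_\beta =: R_\beta$ with $\|R_\beta\|_\Eh = O(e^{-S_2/\hbar})$), Theorem \ref{theo1}(d) (giving $\|\Pi_0 - \Pi_\E\| = O(e^{-s/\hbar})$), the commutator identity \eqref{new2}, and the Agmon-weighted bounds of Corollary \ref{cor1}.

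Since $\psi_\beta \in \E$, I would write $\Pi_0 H_\hbar \psi_\beta = \mu_\beta \psi_\beta + \Pi_0 R_\beta$. Inserting $\Pi_0 = \Pi_\E + (\Pi_0 - \Pi_\E)$ and using $\Pi_\E \psi_\alpha = \psi_\alpha$ together with the two bounds above would give
\[
\scl \psi_\alpha, \Pi_0 H_\hbar \psi_\beta\scr_\Eh = \mu_\beta \scl \psi_\alpha, \psi_\beta\scr_\Eh + \scl \psi_\alpha, R_\beta\scr_\Eh + O\bigl(e^{-(s+S_2)/\hbar}\bigr),
\]
whose residual error is $o(e^{-2s/\hbar})$ since $s<S_2$. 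Plugging in \eqref{theo2_11} and identifying $\scl \psi_\alpha, R_\beta\scr_\Eh$ as $w_{\alpha\beta}$ yields \eqref{malphabeta}--\eqref{intact}: the diagonal contribution produces $\delta_{\alpha\beta}\mu_\alpha$ up to $O(\hbar \cdot \hbar^{-N_0} e^{-2S_1/\hbar}) \subset O(e^{-2s/\hbar})$, and the off-diagonal piece $\mu_\beta(\scl \psi_\alpha,\psi_\beta\scr_\Eh - \delta_{\alpha\beta})$ has the same exponential order as the $w_{\alpha\beta}$-bound claimed in \eqref{theo2_0}.

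For \eqref{intact2} I would invoke the identities from the proof of Theorem \ref{theo1}: combining \eqref{new3}--\eqref{new5} with \eqref{new8} shows that $\hbar^{-2}[H_\hbar, \chi] = -\nabla^\Eh_{\grad\chi} + (\nabla^\Eh_{\grad\chi})^*$ as operators. Pairing against $\psi_\alpha$ and applying the two musical identities of \eqref{musical} (the stated one for the $\nabla_{\grad\chi}$ term, its complex conjugate for the adjoint term), with no boundary contributions because $\dd\chi_{j(\beta)}\otimes v_\beta$ has compact support inside $M_{j(\beta)}$, converts $w_{\alpha\beta}$ into
\[
\hbar^2\Bigl(\scl \nabla^\Eh\psi_\alpha,\dd\chi_{j(\beta)}\otimes v_\beta\scr_\otimes - \scl \dd\chi_{j(\beta)}\otimes\psi_\alpha,\nabla^\Eh v_\beta\scr_\otimes\Bigr).
\]
Expanding $\nabla^\Eh\psi_\alpha = \chi_{j(\alpha)}\nabla^\Eh v_\alpha + \dd\chi_{j(\alpha)}\otimes v_\alpha$ produces the two explicit terms in \eqref{intact2} together with a cross term $\hbar^2\scl \dd\chi_{j(\alpha)}\otimes v_\alpha, \dd\chi_{j(\beta)}\otimes v_\beta\scr_\otimes$; on its support both $d^{j(\alpha)} \geq S_1$ and $d^{j(\beta)} \geq S_1$ hold, so Corollary \ref{cor1} bounds it by $O(\hbar^{-N_0+2} e^{-2S_1/\hbar}) \subset O(e^{-2s/\hbar})$.

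Finally, \eqref{theo2_0} is a Cauchy--Schwarz argument applied to \eqref{intact2} combined with Agmon weights. On $\supp\dd\chi_{j(\beta)}$ one has $d^{j(\beta)} \geq S_1$, so in the same-well case both factors carry the weight $e^{-S_1/\hbar}$ via Corollary \ref{cor1} and the bound $O(e^{-2S_1/\hbar}) \subset O(e^{-2s/\hbar})$ follows. In the different-well case, inserting weights $e^{d^{j(\alpha)}/\hbar}$ on the $v_\alpha$-factor and $e^{d^{j(\beta)}/\hbar}$ on the $v_\beta$-factor produces a pointwise prefactor $e^{-(d^{j(\alpha)} + d^{j(\beta)})/\hbar}$ which by the triangle inequality $d^{j(\alpha)}(m) + d^{j(\beta)}(m) \geq S_{j(\alpha),j(\beta)}$ is at most $e^{-S_{j(\alpha),j(\beta)}/\hbar}$; this is the geometric heart of the off-diagonal estimate. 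The main obstacle is the careful organization of the integration by parts leading to \eqref{intact2}: one must exploit the adjoint decomposition of $[H_\hbar,\chi]$ rather than the pointwise formula \eqref{new2}, so that both cut-off derivatives land on $\chi_{j(\beta)}$ and the resulting bilinear expression exposes simultaneous Agmon decay of $v_\alpha$ and $v_\beta$ against the $\dd\chi_{j(\beta)}$-weighted transition region.
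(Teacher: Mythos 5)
Your reduction to the commutator term, your rewriting of $[H_\hbar,\chi_{j(\beta)}]$ via \eqref{new2} and \eqref{musical} to obtain \eqref{intact2}, and your Agmon/Cauchy--Schwarz argument for \eqref{theo2_0} all coincide in substance with the paper's proof. The gap is in the very first step: you identify the matrix of $\Pi_0 H_\hbar|_\E$ in the basis $\psi_\alpha$ with the matrix of inner products $\scl \psi_\alpha, \Pi_0 H_\hbar \psi_\beta\scr_\Eh$. Since the $\psi_\alpha$ are not orthonormal, these differ by the Gram matrix: if $\Pi_0 H_\hbar \vec{\psi} = \vec{\psi} M$, then $\bigl(\scl \psi_\alpha, \Pi_0 H_\hbar \psi_\beta\scr_\Eh\bigr)_{\alpha\beta} = \Ge_{\vec{\psi}} M = (\id+T)M$, and the contribution $T\,\diag(\mu_\alpha)$, whose $(\alpha,\beta)$ entry is $\mu_\beta t_{\alpha\beta}$, is exactly the ``off-diagonal piece'' you mention. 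For $j(\alpha)\neq j(\beta)$ this term is $O\bigl(\hbar^{1-N_0}e^{-S_{j(\alpha),j(\beta)}/\hbar}\bigr)$, i.e.\ of the \emph{same} exponential order as $w_{\alpha\beta}$ itself, and it is \emph{not} $O(e^{-2s/\hbar})$: the regime in which the theorem has content is precisely $S_{j(\alpha),j(\beta)} < 2s$ (otherwise $w_{\alpha\beta}$ is swallowed by the error), and there $e^{-S_{j(\alpha),j(\beta)}/\hbar} \gg e^{-2s/\hbar}$. So observing that the extra term ``has the same exponential order as the $w_{\alpha\beta}$-bound'' is the reason it cannot be discarded, not a reason it can: left in place it would falsify \eqref{malphabeta} with $m_{\alpha\beta}=\delta_{\alpha\beta}\mu_\alpha + w_{\alpha\beta}$ and would contaminate the leading asymptotics on which Theorem \ref{theo6} rests.

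There are two ways to close the gap. (i) The paper's route: one shows $\Pi_0 = \vec{\psi}\vec{\psi}^* + O(e^{-s/\hbar})$ (combining $\Pi_\E = \vec{\psi}\,\Ge_{\vec{\psi}}^{-1}\vec{\psi}^* = \vec{\psi}\vec{\psi}^* + O(e^{-s/\hbar})$ with Theorem \ref{theo1}(d)) and applies this operator identity only to the exponentially small vector $R_\beta$. Then $\Pi_0 H_\hbar \psi_\beta = \mu_\beta \psi_\beta + \sum_\alpha \scl\psi_\alpha, R_\beta\scr_\Eh\, \psi_\alpha + O(e^{-(s+S_2)/\hbar})$ is already an expansion in the basis $\psi_\alpha$; the exact summand $\mu_\beta\psi_\beta$ contributes $\delta_{\alpha\beta}\mu_\beta$ with no Gram correction, and \eqref{malphabeta}--\eqref{intact} follow. (ii) Keep your inner-product computation but multiply on the left by $\Ge_{\vec{\psi}}^{-1} = \id - T + O(e^{-2s/\hbar})$: the product $-T\,\diag(\mu_\alpha)$ cancels the offending term, while $T\cdot(w_{\alpha\beta}) = O(e^{-2s/\hbar})$. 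Either way, this piece of bookkeeping is essential rather than cosmetic; the remainder of your argument (the adjoint decomposition of the commutator, the cross term $\hbar^2\scl \dd\chi_{j(\alpha)}\otimes v_\alpha, \dd\chi_{j(\beta)}\otimes v_\beta\scr_\otimes$ being $O(e^{-2s/\hbar})$, and the weighted estimate for \eqref{theo2_0}) is correct and essentially identical to the paper's.
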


\begin{proof}
 Writing
\begin{equation}\label{theo2_1}
\Pi_0 H_\hbar \psi_\beta = \mu_\beta \psi_\beta + B \psi_\beta\quad \text{where}\quad 
B\psi_\beta:= \Pi_0 \bigl[ H_\hbar, \chi_{j(\beta)}\bigr] v_\beta
\end{equation}
the statement on the matrix representation means that we have to determine $w_{\alpha\beta}$ such that 
\begin{equation}\label{theo2_2}
B \psi_\beta = \sum_{\alpha\in \mathcal{J}} w_{\alpha\beta }\psi_\alpha + O\bigl(e^{-\frac{2s}{\hbar}}\bigr)\, .
\end{equation}
For any ordered basis of a finite dimensional Hilbert space, we use the notation
\begin{equation}\label{theo2_10}
\vec{x}:= (x_1, \ldots, x_n) \quad\text{and}\quad \Ge_{\vec{x}}:= \vec{x}^*\vec{x} = \Big( \langle x_j, x_k\rangle \Bigr)_{1\leq j,k\leq n}\, .
\end{equation}
Then, setting $\vec{\psi}:= (\psi_{1,1}, \psi_{1,2} \ldots, \psi_{r, n_r-1}, \psi_{r,n_r})$ for the basis of the Hilbert space 
$\E=\bigoplus_{j\in{\mathcal C}} \E_j$, estimate \eqref{theo2_11} yields
\begin{equation}\label{theo2_6}
\Ge_{\vec{\psi}}=: \id + T\quad\text{with}\quad T=: \bigl(t_{\alpha\beta}\bigr)_{\alpha,\beta\in \mathcal{J}} = O\bigl(e^{-\frac{s}{\hbar}}\bigr)
\end{equation}
for any $s< S_1$. We define 
\begin{equation}\label{theo2_5}
\tau := \vec{\psi}\vec{\psi}^* =  \sum_{\alpha\in\mathcal{J}} \langle\!\langle\psi_\alpha, \,\cdot\,\rangle\!\rangle_\Eh \psi_\alpha\;.
\end{equation}
Since $\Ge_{\vec{\psi}}$ is self-adjoint and positive, an orthonormal system of $\E$ is given by 
\begin{equation}\label{theo2_9}
\vec{\phi}:= \vec{\psi}\Ge_{\vec{\psi}}^{-\frac{1}{2}}\, ,
\end{equation}
and \eqref{theo2_6} yields
\begin{equation}\label{theo2_9a}
\Pi_\E = \vec{\phi}\vec{\phi}^* = \vec{\psi}\Ge_\psi^{-1}\vec{\psi}^* = \tau + O\bigl(e^{-\frac{\sigma}{\hbar}}\bigr)\, .
\end{equation}
Thus, combining equation \eqref{theo2_9a}  and Theorem \ref{theo1}(d)
we get
$\|\Pi_0 - \tau\| = O\bigl(e^{-\frac{s}{\hbar}}\bigr)$ for any $s<S_2$.
Together with Theorem \ref{theo1}(a), this yields
\begin{equation}\label{theo2_7}
\Pi_0 \bigl[ H_\hbar, \chi_{j(\beta)}\bigr] v_\beta = 
 \sum_{\alpha\in\mathcal{J}} \<\psi_\alpha,\bigl[ H_\hbar, \chi_{j(\beta)}\bigr] v_\beta\>_\Eh 
\psi_\alpha + O\bigl(e^{-\frac{2s}{\hbar}}\bigr)
\end{equation}
and therefore by \eqref{theo2_1} and \eqref{theo2_2} we get \begin{equation}\label{theo2_12}
\Pi_0 H_\hbar \vec{\psi} = \vec{\psi} M  + O\bigl(e^{-\frac{2s}{\hbar}}\bigr) \quad\text{for} \quad M=\bigl(\delta_{\alpha\beta}\mu_\alpha + w_{\alpha\beta}\bigr)_{\alpha,\beta\in \mathcal{J}}
\end{equation}
for $w_{\alpha\beta}$ given in \eqref{intact}. 
This proves \eqref{malphabeta} and \eqref{intact}. To see \eqref{intact2}, we write
\begin{align}
w_{\alpha\beta} &= \hbar^2 \<\psi_\alpha,\bigl[  \big(\nabla^\Eh\bigr)^*\nabla^\Eh, \chi_{j(\beta)}\bigr] v_\beta\>_\Eh\nonumber\\
&= \hbar^2 \Big\{ \<\nabla^\Eh \bigl(\chi_{j(\alpha)} v_\alpha\bigr),\nabla^\Eh \bigl( \chi_{j(\beta)} v_\beta\bigr)\>_{\otimes} - 
\<\nabla^\Eh \bigl(\chi_{j(\beta)}\chi_{j(\alpha)} v_\alpha\bigr),\nabla^\Eh v_\beta\>_{\otimes} \Bigr\}\label{theo2_8}
\end{align}
Using product rule, some of the terms cancel and we get
\begin{align*}
\text{rhs}\eqref{theo2_8} &= \hbar^2 \Bigl\{ \<\dd\chi_{j(\alpha)}\otimes v_\alpha,\dd\chi_{j(\beta)}\otimes v_\beta\>_{\otimes} +
\<\chi_{j(\alpha)}\nabla^\Eh v_\alpha, \dd\chi_{j(\beta)} \otimes v_\beta\>_{\otimes}\\
&\quad - 
\<\chi_{j(\alpha)}\dd\chi_{j(\beta)}\otimes v_\alpha\bigr),\nabla^\Eh v_\beta\bigr)\>_{\otimes}\Bigr\}\, .
\end{align*}
Using again that $d^j\geq S_1$ on the support of $\dd\chi_j$, it follows from Corollary \ref{cor1} that the first term on the right hand side is $O \bigl(e^{-\frac{2s}{\hbar}}\bigr)$. This proves \eqref{intact2}.
Equation \eqref{theo2_0}  follows from \eqref{intact2}, using \eqref{theo1_11} together with Corollary \ref{cor1} and the fact $d^j>S_2$ on the support of $\dd\chi_j$.
\end{proof}

Since $H_\hbar$ is self-adjoint, it should have a symmetric matrix representation. Moreover, we want to give a 
matrix representation for $H_\hbar|\F$. 

\begin{theo}\label{theo3}
In the setting of Theorem \ref{theo1}, let $\vec{\phi}$ denote the orthonormalization of $\vec{\psi}$ in $\E$ as given in \eqref{theo2_9}, choose 
$\vec{f}:= \Pi_\F \vec{\phi}$ as basis in $\F$ 
and denote by $\vec{g}:= \vec{f} \Ge_{\vec{f}}^{-\frac{1}{2}}$ its orthonormalization. Then for all $s < S_2$ and $\hbar$ sufficiently small the matrix of 
$H_\hbar|_\F$ with respect to $\vec{g}$ is 
\begin{equation}\label{theo3_0}
\tilde{M} + O\Bigl(e^{-\frac{2s}{\hbar}}\Bigr)\quad\text{with}\quad \tilde{M} = \big(\tilde{m}_{\alpha\beta}\bigr) = 
\Bigl( \mu_\alpha \delta_{\alpha\beta} \Bigr) + \frac{1}{2} \Bigl( w_{\alpha\beta} + \overline{w}_{\beta\alpha} \Bigr) 
\end{equation}
for $w_{\alpha\beta}$ given in \eqref{intact}.
\end{theo}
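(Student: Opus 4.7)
The plan is to push the matrix representation from Theorem \ref{theo2} through the chain of bases $\vec\psi \rightsquigarrow \vec\phi \rightsquigarrow \vec f \rightsquigarrow \vec g$, keeping all corrections at the exponentially small scale $O(e^{-2s/\hbar})$ by systematically exploiting the self-adjointness of $H_\hbar$ and of $\Pi_\F$. The guiding principle is that any quantity $\scl a, (\id - \Pi_\F) b\scr$ with $a,b \in \E$ equals $\scl (\id - \Pi_\F)a, (\id - \Pi_\F) b\scr$ by $\Pi_\F = \Pi_\F^* = \Pi_\F^2$, and each factor is $O(e^{-s/\hbar})$ by Theorem \ref{theo1}(b). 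I would first establish $\Ge_{\vec f} = \id + O(e^{-2s/\hbar})$, since $(\Ge_{\vec f})_{\alpha\beta} - \delta_{\alpha\beta} = -\scl(\id - \Pi_\F)\phi_\alpha, (\id - \Pi_\F)\phi_\beta\scr$. Hence $\Ge_{\vec f}^{-1/2} = \id + O(e^{-2s/\hbar})$; since $\|\vec f^* H_\hbar \vec f\| = O(\hbar)$, sandwiching yields $\vec g^* H_\hbar \vec g = \vec f^* H_\hbar \vec f + O(e^{-2s/\hbar})$. For the next reduction, writing $\vec\eta := (\id - \Pi_\F)\vec\phi$ and $\vec f = \vec\phi - \vec\eta$, the cross terms become $\scl\phi_\alpha, H_\hbar \eta_\beta\scr = \scl(\id - \Pi_\F) H_\hbar \phi_\alpha, (\id - \Pi_\F)\phi_\beta\scr$ by self-adjointness of $H_\hbar$ combined with $[H_\hbar, \Pi_\F] = 0$. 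Using Theorem \ref{theo1}(a) together with $\vec\phi = \vec\psi \Ge_{\vec\psi}^{-1/2}$ shows that $H_\hbar \phi_\alpha$ is an $O(e^{-S_2/\hbar})$-perturbation (in $L^2$) of a vector in $\E$, so $\|(\id - \Pi_\F) H_\hbar \phi_\alpha\| = O(\hbar e^{-s/\hbar}) + O(e^{-S_2/\hbar})$, and since $s < S_2$ both cross and quadratic terms contribute $O(e^{-2s/\hbar})$. This gives $\vec f^* H_\hbar \vec f = \vec\phi^* H_\hbar \vec\phi + O(e^{-2s/\hbar})$.

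To compute $\vec\phi^* H_\hbar \vec\phi = G\,\vec\psi^* H_\hbar \vec\psi\,G$ with $G := \Ge_{\vec\psi}^{-1/2}$ self-adjoint, I would use $H_\hbar \psi_\beta = \mu_\beta \psi_\beta + [H_\hbar,\chi_{j(\beta)}] v_\beta$ together with the definition \eqref{intact} of $w_{\alpha\beta}$ to obtain
\[
\vec\psi^* H_\hbar \vec\psi = \Ge_{\vec\psi} D + W, \qquad D := \diag(\mu_\alpha), \quad W := (w_{\alpha\beta}).
\]
This matrix is Hermitian, so averaging with its adjoint $D \Ge_{\vec\psi} + W^*$ puts it in the manifestly symmetric form $\tfrac12(\Ge_{\vec\psi} D + D\Ge_{\vec\psi}) + \tfrac12(W + W^*)$. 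Setting $T := \Ge_{\vec\psi} - \id$ and expanding $G = \id - T/2 + O(T^2)$, the two linear-in-$T$ contributions to $\tfrac12(\Ge_{\vec\psi}^{1/2} D \Ge_{\vec\psi}^{-1/2} + \Ge_{\vec\psi}^{-1/2} D \Ge_{\vec\psi}^{1/2})$ cancel symmetrically, leaving $D + O(\hbar e^{-2s/\hbar})$; likewise $G(W + W^*)G = (W + W^*) + O(e^{-2s/\hbar})$, using the bound $\|W\| = O(e^{-2s/\hbar} + e^{-S_0/\hbar})$ from \eqref{theo2_0} together with $s < S_0$. Assembling the three reduction steps yields exactly $\tilde m_{\alpha\beta} = \mu_\alpha\delta_{\alpha\beta} + \tfrac12(w_{\alpha\beta} + \overline{w_{\beta\alpha}}) + O(e^{-2s/\hbar})$.

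The main technical point is precisely this symmetric cancellation: without first averaging $\vec\psi^* H_\hbar \vec\psi$ into its Hermitian form, a naive conjugation of $\Ge_{\vec\psi} D$ by $G$ produces a linear-in-$T$ off-diagonal error of size $O(\hbar e^{-s/\hbar})$, which is worse than the target $O(e^{-2s/\hbar})$. Conceptually this reflects the fact that the asymmetry of $\Pi_0 H_\hbar|_\E$ in Theorem \ref{theo2} is due solely to the non-orthogonality of $\Pi_0$; passing to the orthogonal projection $\Pi_\F$ symmetrizes the off-diagonal entries of the interaction matrix and is what produces the Hermitian part $\tfrac12(w_{\alpha\beta} + \overline{w_{\beta\alpha}})$.
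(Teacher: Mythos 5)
Your proof is correct. It shares the paper's essential algebraic core but reorganizes the transport from $\E$ to $\F$. The exact Hermiticity of $\vec{\psi}^*H_\hbar\vec{\psi}=\Ge_{\vec{\psi}}\diag(\mu_\alpha)+W$ that you invoke is precisely the paper's identity \eqref{theo3_1}, $w_{\alpha\beta}-\overline{w}_{\beta\alpha}=(\mu_\alpha-\mu_\beta)t_{\alpha\beta}$, and your symmetric cancellation of the linear-in-$T$ terms is the same mechanism as the paper's computation $\Ge_{\vec{\psi}}^{1/2}M\Ge_{\vec{\psi}}^{-1/2}\equiv M+\tfrac12[T,\diag(\mu_\alpha)]$ in \eqref{theo3_2}--\eqref{theo3_7}; you merely symmetrize before conjugating rather than after. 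The genuine difference is the step from $\E$ to $\F$: the paper never estimates cross terms, but instead uses the oblique projection $\Pi_0$ and the \emph{exact} intertwining $H_\hbar\vec{f}=\vec{f}\hat{M}$ (from $\Pi_\F\Pi_0=\Pi_\F$, $[\Pi_\F,H_\hbar]=0$ and $\Pi_0=\vec{\phi}\vec{\phi}^*\Pi_0$), so that only the final renormalization by $\Ge_{\vec{f}}^{\pm 1/2}$ costs an $O(e^{-2s/\hbar})$ error. You bypass $\Pi_0$ and Theorem \ref{theo1}(d) entirely and compare the forms $\vec{f}^*H_\hbar\vec{f}$ and $\vec{\phi}^*H_\hbar\vec{\phi}$ directly; this requires the extra (correct) observation that $\|(\id-\Pi_\F)H_\hbar\phi_\alpha\|=O(\hbar e^{-s/\hbar})+O(e^{-S_2/\hbar})$, obtained by combining Theorem \ref{theo1}(a) with (b), so that cross and quadratic terms are $O(e^{-2s/\hbar})$ for $s<S_2$. (Minor point: moving $H_\hbar$ across the inner product in your cross term gives $\scl(\id-\Pi_\F)\phi_\alpha,(\id-\Pi_\F)H_\hbar\phi_\beta\scr$ rather than the expression you wrote, but the two factors have the same sizes either way, so the estimate is unaffected.) Your route makes the Hermiticity manifest from the start and avoids the projection gymnastics; the paper's route buys an exact identity at the $\E\to\F$ step. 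Both land on the same $\tilde{M}$.
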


\begin{proof}
First we compute
\begin{align}
w_{\alpha\beta} - \overline{w}_{\beta\alpha} &= \scl \psi_\alpha, [H_\hbar, \chi_{j(\beta)}] v_\beta\scr_\Eh - 
\overline{\scl \psi_\beta , [H_\hbar, \chi_{j(\alpha)}] v_\alpha\scr}_\Eh\nonumber\\
&= \scl \psi_\alpha, H_\hbar \psi_\beta\scr_\Eh - \scl \psi_\alpha, \chi_{j(\beta)} \mu_\beta v_\beta\scr_\Eh -
\overline{\scl \psi_\beta , H_\hbar \psi_\alpha\scr}_\Eh + \overline{\scl \psi_\beta , \chi_{j(\alpha)} \mu_\alpha v_\alpha\scr}_\Eh\nonumber\\
&= \bigl(\mu_\alpha - \mu_\beta\bigr) \scl\psi_\alpha, \psi_\beta\scr_\Eh = \bigl(\mu_\alpha - \mu_\beta\bigr)  t_{\alpha\beta}\label{theo3_1}
\end{align}
where in the last step we used \eqref{theo2_6}. By \eqref{theo2_12} and \eqref{theo2_10} we can write 
$\vec{\psi}^*\Pi_0H_\hbar \vec{\psi} \equiv \Ge_{\vec{\psi}} M$ where here and in the following $\equiv$ is equality 
modulo $ O\Bigl(e^{-\frac{2s}{\hbar}}\Bigr)$. 
Since $\vec{\phi}$ is orthonormal, by \eqref{theo2_9} and \eqref{theo2_6} the matrix of $\Pi_0 H_\hbar|_\E$ with respect to $\vec{\phi}$ is  given by
\begin{align}
\hat{M}:= \vec{\phi}^*\Pi_0H_\hbar \vec{\phi} &\equiv \Ge_{\vec{\psi}}^{\frac{1}{2}} M \Ge_{\vec{\psi}}^{-\frac{1}{2}} = 
(\id + T)^{\frac{1}{2}} M (\id + T)^{-\frac{1}{2}}\nonumber\\
&\equiv (\id + \tfrac{1}{2}T) M (\id - \tfrac{1}{2}T) \equiv M + \tfrac{1}{2}[T,M] \equiv M + \tfrac{1}{2} [T, \diag(\mu_\alpha)] \label{theo3_2}
\end{align}
where we used Taylor expansion and that both $T$ and $(w_{\alpha\beta})$ are of order $ O\Bigl(e^{-\frac{s}{\hbar}}\Bigr)$. 
By \eqref{theo3_1} we can write
\begin{align}
\text{rhs}\eqref{theo3_2} &= \bigl(\mu_\alpha \delta_{\alpha\beta}\bigr) + \bigl(w_{\alpha\beta}\bigr) + 
\tfrac{1}{2}\bigl(t_{\alpha\beta} (\mu_\beta - \mu_\alpha)\bigr)\nonumber\\
&= \bigl(\mu_\alpha \delta_{\alpha\beta}\bigr) + \bigl(w_{\alpha\beta}\bigr) -
\tfrac{1}{2}\bigl(w_{\alpha\beta} - \overline{w}_{\beta\alpha}\bigr) = \bigl(\mu_\alpha \delta_{\alpha\beta}\bigr) + 
\tfrac{1}{2}\bigl(w_{\alpha\beta} + \overline{w}_{\beta\alpha}\bigr) \, .\label{theo3_7}
\end{align}
Since $\Pi_0$ is the projection on $\E$ along $\F^\perp$, we have $\ker \Pi_0 = \ker \Pi_\F$ and $\Pi_\F \Pi_0 = \Pi_\F$ and the eigenspaces $\E$ and $\F$ are in bijection  
via $\Pi_0|_\F$ and $\Pi_\F|_\E$. Moreover $\F$ and $\F^\perp$ are invariant under the action of $H_\hbar$ and therefore $\Pi_\F H_\hbar = H_\hbar \Pi_\F$,
thus 
\begin{equation}\label{theo3_3}
H_\hbar \vec{f} =  H_\hbar \Pi_\F \vec{\phi} = \Pi_\F \Pi_0 H_\hbar \vec{\phi} = \Pi_\F \vec{\phi}\hat{M} = \vec{f}\hat{M}
\end{equation}
where we used that by \eqref{theo2_9a} $\Pi_0 = \Pi_\E \Pi_0 = \vec{\phi}\vec{\phi}^* \Pi_0$ and the definition of $\hat{M}$.
Writing $\phi_\alpha = f_\alpha + h_\alpha$ for $f_\alpha\in \F$ and $h_\alpha\in\F^\perp$, we get by Theorem \ref{theo1}
\begin{equation}\label{theo3_4}
\bigl\|h_\alpha\bigr\|_\Eh = \bigl\|\phi_\alpha - f_\alpha\bigr\|_\Eh = \bigl\|\bigl(\Pi_\E - \Pi_\F\Pi_\E\bigr)\phi_\alpha\bigr\|_\Eh \leq \vec{\dist}(\E, \F) 
= O\bigl(e^{-\frac{s}{\hbar}}\bigr)
\end{equation}
and therefore
\begin{equation}\label{theo3_5}
 \id = \bigl( \langle\!\langle \phi_\alpha, \phi_\beta\rangle\!\rangle\bigr)_{\alpha,\beta} =\bigl( \langle\!\langle f_\alpha, f_\beta\rangle\!\rangle + 
\langle\!\langle h_\alpha, h_\beta\rangle\!\rangle\bigr)_{\alpha,\beta} = 
\Ge_{\vec{f}} + O\bigl(e^{-\frac{2s}{\hbar}}\bigr)\; .
\end{equation}
Thus, analog to \eqref{theo3_2}, using \eqref{theo3_3}, \eqref{theo3_5} and that $\vec{g}$ is orthonormal, the matrix of $H_\hbar|_\F$ in the basis 
$\vec{g}$ is given by
\begin{equation}\label{theo3_6}
\tilde{M} = \vec{g}^*H_\hbar \vec{g}= \Ge_{\vec{f}}^{-\frac{1}{2}} \vec{f}^* H_\hbar \vec{f}\Ge_{\vec{f}}^{-\frac{1}{2}} \equiv
\Ge_{\vec{f}}^{\frac{1}{2}}\hat{M}\Ge_{\vec{f}}^{-\frac{1}{2}} \equiv \hat{M}
\end{equation}
where $\equiv$ means equality modulo $O\Bigl(e^{-\frac{2s}{\hbar}}\Bigr)$. Combining \eqref{theo3_6} with \eqref{theo3_2} and \eqref{theo3_7} 
proves the theorem.
\end{proof}

As in \cite{helffer-sjostrand-1}, Thm.2.12, it follows that

\begin{cor}\label{cor2}
For $\hbar$ sufficiently small, there is a bijection
\[ b: \sigma (H_\hbar|_\F) \longrightarrow \sigma(\tilde{M}) \quad\text{such that}\quad |b(\lambda) - \lambda| = O\Bigl(e^{-\frac{2s}{\hbar}}\Bigr)\, . \]
\end{cor}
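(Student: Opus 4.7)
The plan is to derive Corollary \ref{cor2} as a direct consequence of Theorem \ref{theo3} via a standard perturbation argument for Hermitian matrices. By Theorem \ref{theo3}, the matrix $A := \vec{g}^*\, H_\hbar\, \vec{g}$ of $H_\hbar|_\F$ in the orthonormal basis $\vec{g}$ satisfies $\|A - \tilde{M}\|_{op} = O(e^{-2s/\hbar})$. Both matrices are Hermitian: the operator $H_\hbar$ is self-adjoint and $\F$ is a finite-dimensional invariant subspace (spanned by eigenfunctions $u_1,\dots,u_N$), so $A$ is Hermitian; and the explicit form $\tilde{m}_{\alpha\beta} = \mu_\alpha \delta_{\alpha\beta} + \tfrac{1}{2}(w_{\alpha\beta} + \overline{w}_{\beta\alpha})$ from \eqref{theo3_0} is Hermitian by construction, since swapping $\alpha\leftrightarrow\beta$ and conjugating exchanges the two summands $w_{\alpha\beta}$ and $\overline{w}_{\beta\alpha}$.

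Next I would invoke Weyl's inequality (equivalently, the min-max characterization of eigenvalues) for Hermitian matrices: if $\lambda_1(A) \leq \cdots \leq \lambda_N(A)$ and $\lambda_1(\tilde{M}) \leq \cdots \leq \lambda_N(\tilde{M})$ denote the eigenvalues listed in increasing order (with multiplicity), then
\[
|\lambda_j(A) - \lambda_j(\tilde{M})| \;\leq\; \|A - \tilde{M}\|_{op} \;=\; O\bigl(e^{-2s/\hbar}\bigr), \qquad j=1,\dots,N.
\]
Since $\sigma(H_\hbar|_\F) = \sigma(A)$, one obtains the desired bijection $b\colon \sigma(H_\hbar|_\F) \to \sigma(\tilde{M})$ by pairing $\lambda_j(A) \leftrightarrow \lambda_j(\tilde{M})$ in this ordering.

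There is essentially no obstacle; the entire analytic content has been packaged into Theorem \ref{theo3}. The only minor point worth verifying is that the two spectra have the same cardinality $N$, which is automatic since $A$ and $\tilde{M}$ act on the same $N$-dimensional space (identified with $\F$ via $\vec{g}$). If one prefers to read $\sigma$ as a set rather than a multiset, the ordered Weyl pairing still produces a well-defined bijection satisfying the stated estimate, irrespective of any degeneracies that may occur for small $\hbar$.
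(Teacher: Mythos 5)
Your proposal is correct and is essentially the argument the paper intends: the paper simply cites \cite{helffer-sjostrand-1}, Thm.\ 2.12, whose content is precisely this Hermitian perturbation step (the matrix of $H_\hbar|_\F$ in the orthonormal basis $\vec{g}$ and the Hermitian matrix $\tilde{M}$ differ by $O(e^{-2s/\hbar})$ in norm, so Weyl's inequality pairs their ordered eigenvalues). Your additional remarks on Hermiticity of $\tilde M$ and on multiplicities are accurate.
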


\section{Interaction matrix in special cases}\label{2wells}

In this section we give an explicit formula for the interaction matrix element $w_{\alpha\beta}$ in the case that the two wells
$m^{j(\alpha)}, m^{j(\beta)}$ are near and the Dirichlet operators have very close eigenvalues inside the chosen spectral 
interval $I_\hbar$.
We start with some properties of the one-form $\gamma [\nabla^\Eh u, v]$ introduced in \eqref{form-gamma}.

\begin{Lem}\label{lemma2}
For $u, v\in\Gamma^\infty (M, \Eh)$ and $\delta$ the codifferential operator defined in \eqref{codiff}
\begin{equation}\label{lemma2_7}
\delta \gamma [ u, \nabla^\Eh v] = \gamma [ u,  (\nabla^\Eh)^*\nabla^\Eh v] - \langle\nabla^\Eh u, \nabla^\Eh v\rangle_\otimes \in \Ce^\infty (M, \C)\, .
\end{equation}

\end{Lem}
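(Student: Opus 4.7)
The plan is to prove the identity by testing it against arbitrary real compactly supported test functions $f \in \Ce^\infty_c(M,\R)$ and using integration by parts twice. Both sides of the claimed identity are a priori smooth complex-valued functions on $M$, so it suffices to establish the identity in the weak sense, whence pointwise equality follows.

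First I would start from the defining adjoint relation
\[
 \scl fu, (\nabla^\Eh)^*\nabla^\Eh v \scr_\Eh \;=\; \scl \nabla^\Eh(fu), \nabla^\Eh v\scr_\otimes ,
\]
valid since $fu \in \Gamma^\infty_c(M,\Eh)$. Applying the Leibniz rule $\nabla^\Eh(fu) = \dd f \otimes u + f\,\nabla^\Eh u$ and using that $f$ is real (so that $f$ commutes with the sesquilinear pairings) yields
\[
 \scl \nabla^\Eh(fu), \nabla^\Eh v\scr_\otimes \;=\; \scl \dd f\otimes u, \nabla^\Eh v\scr_\otimes + \int_M f\,\langle \nabla^\Eh u, \nabla^\Eh v\rangle_\otimes \,\dvol .
\]

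Next I would rewrite the first summand using identity \eqref{musical} (the fibrewise musical relation $\langle \dd f \otimes u, \nabla^\Eh v\rangle_\otimes = \langle \dd f, \gamma[u,\nabla^\Eh v]\rangle_1$ valid for real $f$) to obtain
\[
 \scl \dd f\otimes u, \nabla^\Eh v\scr_\otimes \;=\; \scl \dd f, \gamma[u,\nabla^\Eh v]\scr_1 .
\]
Now invoke \eqref{dconjugiert} (that $\delta = \dd^*$ on complex forms since we test against compactly supported objects), giving $\scl \dd f, \gamma[u,\nabla^\Eh v]\scr_1 = \scl f, \delta\gamma[u,\nabla^\Eh v]\scr$. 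Combining and using $\scl fu, (\nabla^\Eh)^*\nabla^\Eh v\scr_\Eh = \int_M f\,\gamma[u,(\nabla^\Eh)^*\nabla^\Eh v]\,\dvol$ (again by reality of $f$), we obtain
\[
 \int_M f\,\gamma[u,(\nabla^\Eh)^*\nabla^\Eh v]\,\dvol \;=\; \int_M f\,\delta\gamma[u,\nabla^\Eh v]\,\dvol + \int_M f\,\langle\nabla^\Eh u,\nabla^\Eh v\rangle_\otimes \,\dvol .
\]

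Since this equality holds for every real $f \in \Ce^\infty_c(M,\R)$ and the integrands are continuous (in fact smooth), the fundamental lemma of the calculus of variations yields the desired pointwise identity \eqref{lemma2_7}. The smoothness of the right-hand side as a complex-valued function on $M$ is clear from the smoothness of $u$, $v$, and the endomorphism/connection data, which confirms the final assertion that $\delta\gamma[u,\nabla^\Eh v]\in\Ce^\infty(M,\C)$. The only delicate point is the careful bookkeeping of antilinear-linear conventions in $\gamma$ and in $\langle\cdot,\cdot\rangle_1$, which is automatically handled by restricting to real test functions $f$; otherwise the proof is a clean application of the two integration-by-parts identities already established.
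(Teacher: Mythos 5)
Your proof is correct and follows essentially the same route as the paper's: both test the identity against compactly supported test functions, use the adjoint relation for $(\nabla^\Eh)^*$ together with the Leibniz rule, convert $\scl \dd f\otimes u,\nabla^\Eh v\scr_\otimes$ via the musical identity \eqref{musical} and $\delta=\dd^*$ into $\scl f,\delta\gamma[u,\nabla^\Eh v]\scr$, and conclude by the fundamental lemma. Your restriction to \emph{real} test functions is a harmless (indeed slightly cleaner) variant, since it sidesteps the antilinearity bookkeeping while still separating real and imaginary parts of the continuous integrands.
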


\begin{proof}
First we recall that $\delta=\dd^*$  (since $M$ has no boundary). For reasons of brevity we drop the superscript (and later the subscript)  $\Eh$.
Thus Lemma \ref{lemma1} yields for any $\phi\in\Ce_0^\infty (M, \C)$ 
\begin{equation}\label{lemma2_1}
\scl \phi, \delta \gamma [\nabla u, v] \scr_0 = \scl \dd\phi, \gamma [\nabla u, v] \scr_{1} = \scl \dd \phi \otimes u, \nabla v \scr _\otimes,
\end{equation}
using \eqref{musical}.
Moreover, 
for such $\phi$ we have
\begin{equation} \label{lemma2_4}
\scl \phi u, \nabla^* \nabla v \scr = \scl \dd \phi \otimes u, \nabla v \scr_\otimes + 
\scl \phi \nabla u, \nabla v \scr_\otimes.
\end{equation}

Combining \eqref{lemma2_1} and  \eqref{lemma2_4} one gets
\begin{equation}
\scl \phi, \mbox{lhs } \eqref{lemma2_7} \scr_0 = \scl \phi, \mbox{rhs } \eqref{lemma2_7} \scr_0,
\end{equation}
which finishes the proof since $\phi$ was arbitrary.
\end{proof}

Similarly we get, using an equation analogous to \eqref{musical}, with $u$ and $\nabla v$ interchanged as arguments of $\gamma$,
\begin{equation}\label{lemma2_0}
\delta \gamma [\nabla^\Eh u, v] = \gamma [(\nabla^\Eh)^*\nabla^\Eh u, v] - \langle\nabla^\Eh u, \nabla^\Eh v\rangle_\otimes \in \Ce^\infty (M, \C)
\end{equation}

We now give assumptions leading to a more explicit form for the interaction matrix.

\begin{hyp}\label{hypalphabeta}
 Under the assumptions given in Hypotheses \ref{setup1}, \ref{Mj} and \ref{hyp1} and with the notation given at the beginning of Section \ref{sec3}
 we assume that $\alpha, \beta\in\mathcal{J}$ are pairs such that for some constant $0<a < 2S - S_0$ 
\begin{equation}\label{prop4_00}
S_{j(\alpha),j(\beta)}< S_0 + a \qquad\text{and}\qquad |\mu_\alpha - \mu_\beta |= \expord{a} \; .
\end{equation}
Setting $j=j(\alpha), k=j(\beta)$ to shorten the notation, we define the closed ``ellipse'' 
\begin{equation}\label{prop4_01}
 G_{j,k} := \{ m\in M\,|\, d^j(m)+ d^k(m) \leq S_0 + a \}
\end{equation}
We remark that $ G_{j,k} $ is contained in the union $B_S(m^j) \cup B_S(m^k)$ which is compact by assumption. Thus,
in particular, $ G_{j,k} $ (and $\Sigma_{j,k}$ to be defined below) are compact in $\stackrel{\circ}{M}_{j}\cup\stackrel{\circ}{M}_{k}$.
We choose $\Omega_{j,k}\subset M$ open with smooth boundary, such that 
\begin{equation} \label{prop4_02}
 m^j\in\Omega_{jk}\, , \quad m^k\notin \overline{\Omega}_{j,k}\, , \quad G_{j,k}\cap \overline{\Omega}_{j,k} \subset \stackrel{\circ}{M}_{j}\, , 
\quad G_{j,k} \cap \Omega_{j,k}^c \subset \stackrel{\circ}{M}_{k}
\end{equation}
and set $\Sigma_{j,k} := \partial \Omega_{j,k} \cap G_{j,k}$.
\end{hyp}

The following proposition gives an explicit formula for the interaction term by means of a surface integral.

\begin{prop}\label{prop4}
Under the assumptions on the pairs $\alpha, \beta\in \mathcal{J}$ given in Hypothesis \ref{hypalphabeta} the elements $w_{\alpha\beta}$ of the 
interaction matrix, modulo $O\Bigl(\hbar^{-N_0}e^{-\frac{1}{\hbar}(S_0 + a)}\Bigr)$ for some $N_0\in\N$,  are given by
 \begin{align}\label{prop4_0}
 \frac{1}{\hbar^2} w_{\alpha\beta} &\equiv \int_{\Sigma_{j,k}} *\omega_{\alpha\beta} 
\\
  &=  \int_{\Sigma_{j,k}} \Bigl( \gamma_m[\nabla^\Eh_N v_\alpha, v_\beta] - 
\gamma_m[v_\alpha, \nabla^\Eh_N v_\beta] \Bigr)\, d\sigma (m)  \label{prop4_0b}
   \end{align}
where  
 $\omega_{\alpha\beta}\in \Lambda^1_\C(M)$ is defined by
\begin{equation}\label{lemma2a_0}
 \omega_{\alpha\beta}:= \gamma [ \nabla^\Eh v_\alpha,  v_\beta ] - \gamma [ v_\alpha, \nabla^\Eh v_\beta]\; .
 \end{equation}
 and $N$ is the outward unit normal on $\partial\Omega_{j,k}$, i.e. the unit normal on $\Sigma_{j,k}$ pointing from $m^j$ to $m^k$.
\end{prop}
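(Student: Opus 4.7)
The overall strategy is to combine the formula \eqref{intact2} for $w_{\alpha\beta}$ with Stokes' theorem and a Green-type identity for $H_\hbar$, reducing the bulk integral to a surface integral over $\Sigma_{j,k}$ while absorbing all remaining contributions into an Agmon-small error. First, I rewrite the two inner products in \eqref{intact2} fibrewise using \eqref{musical} and its complex-conjugate analogue: each pairs $d\chi_k$ with a piece of $\gamma[\nabla^\Eh v,w]$, and after subtraction they combine into the 1-form $\omega_{\alpha\beta}$ of \eqref{lemma2a_0}, giving
\[
  \frac{w_{\alpha\beta}}{\hbar^2} \equiv \int_M \chi_j\, d\chi_k \wedge *\omega_{\alpha\beta} \pmod{O\bigl(\hbar^{-2}e^{-2s/\hbar}\bigr)}.
\]

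Second, Lemma~\ref{lemma2} together with \eqref{lemma2_0} give $\delta\omega_{\alpha\beta} = \gamma[(\nabla^\Eh)^*\nabla^\Eh v_\alpha, v_\beta] - \gamma[v_\alpha, (\nabla^\Eh)^*\nabla^\Eh v_\beta]$ on $M_j \cap M_k$. Substituting $\hbar^2 (\nabla^\Eh)^*\nabla^\Eh = H_\hbar - \hbar^2 U - \hbar W - V\,\id$, using the eigenvalue equations $H_\hbar v_\alpha = \mu_\alpha v_\alpha$, $H_\hbar v_\beta = \mu_\beta v_\beta$, and invoking the symmetry of $U, W$ and the reality of $V$, all non-eigenvalue contributions cancel, leaving the Green-type identity $\hbar^2\,\delta\omega_{\alpha\beta} = (\mu_\alpha - \mu_\beta)\,\gamma[v_\alpha, v_\beta]$. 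I then choose the cutoffs adapted to $\Omega_{j,k}$: $\chi_j \equiv 1$ on a neighborhood of $\overline{\Omega_{j,k}} \cap G_{j,k}$ (possible since this set lies in $\mathring M_j$ by \eqref{prop4_02}), and $\chi_k$ as a smooth approximation of $\mathbf{1}_{\Omega_{j,k}^c}$, supported in $M_k$, equal to $1$ near $m^k$, with transition region close to $\Sigma_{j,k}\subset \mathring M_k$. On the part of $\mathrm{supp}(d\chi_k)$ lying outside $G_{j,k}$ one has $d^j + d^k > S_0+a$, so Corollary~\ref{cor1} bounds the contribution there by $O(\hbar^{-N_0}e^{-(S_0+a)/\hbar})$; hence $\int_M \chi_j d\chi_k \wedge *\omega_{\alpha\beta} \equiv \int_M d\chi_k \wedge *\omega_{\alpha\beta}$ modulo the required error, and Stokes on $M$ (with $\chi_k *\omega_{\alpha\beta}$ compactly supported) converts the latter into $\int_M \chi_k\, \delta\omega_{\alpha\beta}\,\dvol$.

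Finally, Lemma~\ref{lemma1} applied to $\omega_{\alpha\beta}$ on $\Omega_{j,k}$ yields $\int_{\partial\Omega_{j,k}} *\omega_{\alpha\beta} = -\int_{\Omega_{j,k}} \delta\omega_{\alpha\beta}\,\dvol$; combined with Step 2, the Agmon bound $\int_{\Omega_{j,k}} |\gamma[v_\alpha, v_\beta]|\,\dvol = O(\hbar^{-N_0} e^{-S_{j,k}/\hbar})$ (using $d^j + d^k \geq S_{j,k}\geq S_0$), the hypothesis $|\mu_\alpha - \mu_\beta| = O(e^{-a/\hbar})$, and the global smallness $\int_M \gamma[v_\alpha,v_\beta]\,\dvol = O(\hbar^{-N_0}e^{-S_{j,k}/\hbar})$ (from \eqref{theo1_11} via Corollary~\ref{cor1}), the sharp-cutoff limit $\chi_k \to \mathbf{1}_{\Omega_{j,k}^c}$ identifies $\int_M \chi_k \delta\omega_{\alpha\beta}\,\dvol$ with $\int_{\partial\Omega_{j,k}} *\omega_{\alpha\beta}$ up to an $O(\hbar^{-N_0}e^{-(S_0+a)/\hbar})$ error. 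Splitting $\partial\Omega_{j,k} = \Sigma_{j,k} \cup (\partial\Omega_{j,k}\setminus G_{j,k})$ and using Agmon on the exterior part (where $d^j+d^k > S_0+a$) yields \eqref{prop4_0}; then \eqref{prop4_0b} follows at once by restricting $*\omega_{\alpha\beta}$ to $\Sigma_{j,k}$ via \eqref{oneform} together with \eqref{lemma2a_0} and \eqref{form-gamma}. The main obstacle I anticipate is the careful sign and orientation bookkeeping in the Stokes and sharp-cutoff limiting arguments, and verifying uniformly that all smoothing errors are truly of order $O(\hbar^{-N_0}e^{-(S_0+a)/\hbar})$.
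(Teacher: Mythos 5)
Your overall strategy---rewriting \eqref{intact2} as a pairing of $\dd\chi_k$ with $\omega_{\alpha\beta}$, deriving the Green-type identity $\hbar^2\,\delta\omega_{\alpha\beta}=(\mu_\alpha-\mu_\beta)\,\gamma[v_\alpha,v_\beta]$, and exploiting it together with $|\mu_\alpha-\mu_\beta|=O(e^{-a/\hbar})$ and the Agmon estimates to move the integration onto $\Sigma_{j,k}$---is exactly the paper's. But there is a genuine gap at the step ``Stokes on $M$ converts $\int_M \dd\chi_k\wedge*\omega_{\alpha\beta}$ into $\int_M\chi_k\,\delta\omega_{\alpha\beta}\,\dvol$'' and the subsequent identification of the latter with $\int_{\partial\Omega_{j,k}}*\omega_{\alpha\beta}$. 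Your own Step 2 makes $\delta\omega_{\alpha\beta}$ pointwise of size $\hbar^{-2}|\mu_\alpha-\mu_\beta|\,|\gamma[v_\alpha,v_\beta]|=O\bigl(\hbar^{-N}e^{-(a+S_{j,k})/\hbar}\bigr)$ wherever the Green identity holds; if it held on all of $\supp\chi_k$, your chain of equalities would force $w_{\alpha\beta}\equiv 0$ modulo the admissible error, contradicting \eqref{prop4_0}. The point is that $v_\alpha$ is an eigenfunction only on $M_j$ (it vanishes on $\partial M_j$, but its normal derivative does not), while $\supp\chi_k$ contains a neighbourhood of $m^k\notin M_j$, and $\partial M_j$ necessarily crosses the minimal geodesics, i.e.\ the set $\{d^j+d^k=S_{j,k}\}$, at points where $d^k<S$, hence inside $\{\chi_k=1\}$. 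Consequently the zero-extension of $\omega_{\alpha\beta}$ jumps across $\partial M_j$, and the contribution of that jump to the distributional codifferential is of the \emph{leading} order $e^{-S_{j,k}/\hbar}$: your global Stokes silently discards precisely the boundary term that carries the answer. The paper avoids this by inserting the auxiliary cutoff $\chi_G$ supported near $G_{j,k}$ (so that, by \eqref{prop4_02}, the integrand stays inside $\mathring{M}_j\cap M_k$) and by integrating by parts over $\Omega_{j,k}$ rather than over $M$, so the surface term appears honestly as $\int_{\partial\Omega_{j,k}}\chi_k*\omega$ with $\chi_k=1$ on $\Sigma_{j,k}$.

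A second, related problem is your choice of $\chi_k$ with transition region near $\Sigma_{j,k}$. Formula \eqref{intact2} and its $O(e^{-2s/\hbar})$ error were derived under the requirement from \eqref{defpsi} that $\chi_k=1$ on a neighbourhood of $\overline{B_S(m^k)}$, i.e.\ $d^k\geq S_1>S$ on $\supp\dd\chi_k$; on $\Sigma_{j,k}$ near the minimal geodesics one has $d^k\approx S_{j,k}/2<S$, so with your cutoff the error bounds of Theorems \ref{theo1} and \ref{theo2} degrade to the same order $e^{-S_{j,k}/\hbar}$ as the quantity being computed. Keeping an admissible $\chi_k$ and then deforming the surface of integration from $\supp\dd\chi_k$ to $\Sigma_{j,k}$ by means of the Green identity is exactly the content of the paper's proof and cannot be bypassed by redefining $\chi_k$.
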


\begin{proof}
 We fix a pair $\alpha,\beta\in\mathcal{J}$ satisfying \eqref{prop4_00} and write $G=G_{j,k}$, $\Omega = \Omega_{j,k}$ and $\Sigma=\Sigma_{j,k}. $
 Let $\chi_G\in \Ce_0^\infty (M)$ be a cut-off function such that $\chi_G =1$ on $G$ and with support close to $G$, then 
$\supp\chi_G \cap \overline{\Omega} \subset \stackrel{\circ}{M}_{j}$ and $\supp\chi_G \cap \Omega^c \subset \stackrel{\circ}{M}_{k}$.

We choose the cut-off functions $\chi_j$ and $\chi_k$ in the definition of $\psi_\alpha$ and $\psi_\beta$ (see \eqref{defpsi}) such that 
$\chi_j=1$ on $\supp\chi_G \cap \overline{\Omega}$ and $\chi_k=1$ on $\supp\chi_G \cap \Omega^c$.
 
 Then the definition of $G$ together with Corollary \ref{cor1} (the exponential decay of the Dirichlet eigenfunctions) allow 
 modulo $O\Bigl(\hbar^{-N_0}e^{-\frac{1}{\hbar}(S_0 + a)}\Bigr)$ for some $N_0\in\N$ (which we denote by $\equiv$) to insert the additional 
 cut-off function $\chi_G$ into the formula \eqref{intact2} for $w_{\alpha\beta}$, thus (using also that $\chi_j=1$ in $\Omega\cap \supp \chi_G$ by the 
assumptions above, using \eqref{musical} and dropping the superscript $\Eh$)
 \begin{align}
 \frac{1}{\hbar^2} w_{\alpha\beta} &\equiv 
\scl \chi_G \nabla v_\alpha, \dd \chi_k \otimes v_\beta\scr_\otimes
- \scl \chi_G \dd \chi_k \otimes v_\alpha, \nabla v_\beta \scr_\otimes\\
& =\scl \dd \chi_k, \o \scr_{1,\Omega}, \qquad \omega := \chi_G \, \omega_{\alpha\beta}
  \label{prop4_1}
 \end{align}
Lemma \ref{lemma1} together with \eqref{form1} and $\chi_k=1$ on $\Sigma$ leads to 
\begin{equation}\label{prop4_3}
  \frac{1}{\hbar^2} w_{\alpha\beta} = \int_{\Sigma} *\omega + (\chi_k, \delta \omega)_{0, \Omega} 
\end{equation}
Since $\chi_G=1$ on $\Sigma$, the first term on the right hand side of \eqref{prop4_3} is equal to the right hand side of \eqref{prop4_0}.
We now claim that
 \begin{equation}\label{lemma2a_1}
  (\chi_k, \delta \omega)_{0, \Omega}  = O\Bigl(\hbar^{-N_0}e^{-\frac{1}{\hbar}(S_0 + a)}\Bigr)\, ,
   \end{equation}
   proving \eqref{prop4_0}. To see \eqref{lemma2a_1} we first note that for some constant $C>0$
   \[
  \bigl|  (\chi_k, \delta \omega)_{0, \Omega}\bigr| \leq C \bigl\|\delta \omega\bigr\|_\Eh
  \]
 We use \eqref{lemma2_0} and \eqref{lemma2_7} to write
    \begin{equation}\label{prop4_4}
\bigl\|\delta \omega\bigr\|_\Eh = \bigl\| \delta\gamma [\nabla^\Eh v_\alpha, \chi_G v_\beta] -  \delta\gamma [\chi_G v_\alpha,  \nabla^\Eh v_\beta] \bigr\|_\Eh 
\leq A_1 + A_2 
 \end{equation}
where 
\begin{align*}
A_1 &=  \bigl\|  \gamma [(\nabla^\Eh)^*\nabla^\Eh v_\alpha, \chi_G v_\beta] - \gamma [ \chi_G v_\alpha, (\nabla^\Eh)^*\nabla^\Eh v_\beta]\bigr\|_\Eh \\
A_2 &= \bigl\| \langle \nabla^\Eh (\chi_G v_\alpha) , \nabla^\Eh v_\beta \rangle_{\otimes} - \langle \nabla^\Eh v_\alpha , \nabla^\Eh (\chi_G v_\beta) \rangle_{\otimes} \bigr\|_\Eh
\end{align*}
By product rule
\begin{equation}\label{prop4_6}
A_2 \leq \bigl| \scl \dd \chi_G\otimes v_\alpha , \nabla^\Eh v_\beta\scr_\otimes \bigr| + \bigl| \scl  \nabla^\Eh v_\alpha ,\dd \chi_G\otimes v_\beta\scr_\otimes \bigr| \equiv 0
\end{equation}
where the last estimate follows from the fact that $d^j + d^k > S_0 + a$ on the support of $\dd\chi_G$ together with the exponential decay properties of $v_\alpha$ and $v_\beta$ (Corollary \ref{cor1}).
To analyze $A_1$, we remark that by Hypothesis \ref{setup1} the endomorphism fields $U$ and $W$ are symmetric on $\Eh$ and $\hbar^2 U + \hbar W + V\Id_\Eh$ commutes with $\chi_G \Id_\Eh$, thus
\begin{equation}\label{prop4_7}
A_1 = \bigl\|  \gamma [ H_\hbar v_\alpha, \chi_G v_\beta] - \gamma [ \chi_G v_\alpha, H_\hbar v_\beta]\|_\Eh = 
|\mu_\alpha - \mu_\beta| \bigl| \scl  v_\alpha, \chi_G v_\beta \scr_\Eh\bigr| = O\bigl(e^{-\frac{a}{\hbar}}\bigr)O\bigl(\hbar^{N_0} e^{-\frac{S_{j,k}}{\hbar}}\bigr) \equiv 0\, .
\end{equation}
The last two estimates follow from assumption \eqref{prop4_00} together with Corollary \ref{cor1}. Inserting \eqref{prop4_6} and \eqref{prop4_7} into \eqref{prop4_4} proves \eqref{lemma2a_1} and thus  \eqref{prop4_0}.

Applying \eqref{oneform} to the hypersurface $\Sigma$ proves \eqref{prop4_0b}.
\end{proof}

\begin{rem}\label{rem1}
\ben
\item 
If $S_{jk}>S_0+a$ and $|\mu_\alpha - \mu_\beta |= \expord{a}$, then it follows at once from \eqref{theo2_0} that 
$w_{\alpha\beta}=O\bigl(\hbar^{-N_0}e^{-\frac{S_0 + a}{\hbar}}\bigr)$. Thus the formula \eqref{prop4_0} is 
relevant only if the Agmon distance $S_{jk}$ between the wells and the difference of the 
Dirichlet eigenvalues in the assumptions of Proposition \ref{prop4} are related by \eqref{prop4_00}.\\
If  $a$ is large, then $S_{jk}$ is nearly $2S_0$, but $|\mu_\alpha-\mu_\beta|$ must be very small.
If on the contrary $a$ is small, then $S_{jk}$ must be near to $S_0$, but $|\mu_\alpha-\mu_\beta|$ is comparatively large (though still exponentially small of 
course). 
\item 
It is possible to treat the limiting case
\begin{equation}\label{prop4_00a}
 d(m^{j(\alpha)}, m^{j(\beta)})=S_0\qquad\text{and}\qquad |\mu_\alpha-\mu_\beta|=O\bigl(\hbar^\infty\bigr)
\end{equation}
along the lines of the above proof,
choosing $a$ in the construction of $\Sigma_{jk}$ arbitrarily small,
yielding
\begin{equation}\label{prop4_0a}
 \frac{1}{\hbar^2} w_{\alpha\beta} =  \int_{\Sigma_{j(\alpha),j(\beta)}} \Bigl( \gamma_m[\nabla^\Eh_N v_\alpha, v_\beta] - 
\gamma_m[v_\alpha, \nabla^\Eh_N v_\beta] \Bigr)\, d\sigma (m) + 
O\Bigl(\hbar^{\infty}e^{-\frac{1}{\hbar}S_0}\Bigr)
   \end{equation}
where $N$ is the outward unit normal on $\partial\Omega_{j(\alpha),j(\beta)}$.
\een
\end{rem}

\section{Asymptotic expansion}

Using the quasimodes for the Dirichlet operators constructed in \cite{ludewig}, we will give asymptotic expansions for the interaction term $w_{\alpha\beta}$ in the
case considered in Section \ref{2wells}.

We start with some additional hypotheses:

\begin{hyp}\label{hypomega}
Let $X_{\tilde{h}_0}$ denote the Hamiltonian vector field on $T^*M$ with respect to $\tilde{h}_{0}$ defined in \eqref{tildehnull}, 
$F_{t}$ denote its flow and for $j\in\mathcal{C}$ set
\begin{equation}\label{Lambdaplus}
\Lambda_{\pm}^j := \bigl\{ (m,\xi)\in T^*M\, |\, F_{t}(m,\xi) \rightarrow (m^j,0)\quad \text{for}
\quad t \rightarrow \mp \infty \bigr\}  \; .
\end{equation}
Let $M_j$ satisfy Hypothesis \ref{Mj}. We assume that there is $\O^j\subset\subset M_j$, open and containing $m^j$, such that the following
holds.
\ben
\item For $\tau : T^*M \rightarrow M$ denoting the bundle projection $\tau (m, \xi) = m$, we have 
\[ \Lambda_+(\Omega^j):=\tau^{-1}(\Omega^j) \cap \Lambda_+^j = \bigl\{ (m, \dd d^j(m)) \in T^*M\, |\, m\in \Omega^j\bigr\} \; . \]
\item $\tau\bigl(F_t(m,\xi)\bigr) \in \Omega^j$ for all $(m,\xi)\in \tau^{-1}(\Omega^j) \cap \Lambda_+^j$ and all $t\leq0$. 
\een
\end{hyp}

By \cite{kleinro}, Theorem 1.5, the base integral curves of $X_{\tilde{h}_0}$ on
$M\setminus\{m^1,\ldots m^r\}$ with energy $0$ are geodesics with respect to
$d$ and vice versa.
Thus the above hypothesis implies in particular that there is a unique minimal geodesic
between any point in $\Omega^j$ and $m^j$.

Clearly, $\Lambda_+ (\Omega^j)$ is a Lagrange manifold (by (a)) and since the flow $F_t$ preserves $\tilde{h}_0$, we have
$\Lambda_+(\Omega^j)\subset \tilde{h}_0^{-1}(0)$ by \eqref{Lambdaplus}. Thus the eikonal equation 
$\tilde{h}_0(m, \dd d^j(m)) =0$ holds for all $m\in\Omega^j$.
Since in our setting the (in $\Omega^j$) unique solution $d^j(\cdot)$ of the eikonal equation is defined by following 
the flow of the Hamiltonian field and projecting to the base $\Omega^j$, it follows that in fact $d^j\in \Ce^\infty(\O^j)$.

Geometrically speaking,
Hypothesis \ref{hypomega} (a) means that $\Lambda_{\pm}\bigl(\Omega^j\bigr)$ projects diffeomorphically to $\Omega^j$.\\

The projection of $X_{\tilde{h}_0}$, evaluated on $\Lambda_+ (\Omega^j)$, onto the configuration space $\Omega^j$ is given by
$\partial_\xi \tilde{h}_0 (m, \xi = \dd d^j(m)) = 2 \grad d^j(m)$. Thus the pair $(d^j, \Omega^j)$ is, for each $j\in\mathcal{C}$, an admissible pair in the sense of
\cite{ludewig}, Def. 2.6, i.e. $d^j$ is the unique non-negative solution of the eikonal equation $|\dd d^j(m)|^2 = V(m)$ for $m\in \O^j$ and
$\Phi_t(\O^j)\subset \O^j$ for all $t\leq 0$, where $\Phi_t$ denotes the flow of the vector field $2\grad d^j$.
In particular, $\Omega^j$ is star-shaped with respect to the vector field $2\grad d^j$.

By straightforward calculations (compare \cite{ludewig}) we have on $\Omega^j$

\begin{equation} \label{LocalFormHPhi}
  H_{d^j,\hbar} := e^{d^j/\hbar} H_\hbar e^{-d^j/\hbar} = \hbar^2 L + \hbar \bigl( \nabla^\Eh_{2\grad d^j} + W + \Delta d^j \bigr)
\end{equation}
where $\nabla^\Eh$ is the unique metric connection determined by $L$ given in \eqref{L-Darstellung} and $\Delta$ denotes the Laplace-Beltrami operator 
acting on functions.

The next theorem is a version of the results given in \cite{ludewig}, Theorem 2.7 and Corollary 2.10, adapted to the case of more than one potential well. 

\begin{theo} \label{theo4}
Let $H_\hbar$ be as described in Hypothesis \ref{setup1}. For $j\in\mathcal{C}$ let $\O^j$ and $M_j$ satisfy Hypothesis \ref{hypomega} and fix $K$ compact in $\Omega^j$. 
Furthermore, we assume that $\hbar E_j$ denotes an eigenvalue of multiplicity $\ell_j$ of the local harmonic oscillator 
$H_{m^j, \hbar}$ at $m^j$ as given in \eqref{harmoss}. Then, for $\hbar_0$ sufficiently small and for $ \alpha=(j, k)\,,\; k =1,\ldots,\ell_j\, , \;
\ell\in\frac{\mathbb{Z}}{2}\,,\; \ell \geq -N_j$ for some $N_j\in \frac{\N}{2}$, there are functions
$a_{\alpha} \in C^\infty((0, \hbar_0), \Gamma_c^\infty(M, \Eh))$ and sections $a_{\alpha,\ell} \in \Gamma_c^\infty(M, \Eh)$, compactly supported in 
$\Omega^j$, such that for all
$N\in \frac{\mathbb{Z}}{2}$ there are
$C_N <\infty$ satisfying
\begin{equation}\label{arep}
\Bigl| a_\alpha(m; \hbar) - \sum_{\natop{\ell\in\hZ}{-N_j\leq \ell\leq N}}  \hbar^\ell
 a_{\alpha,\ell}(m)\Bigr| \leq C_N \hbar^{N+\frac{1}{2}}\, , \quad (m\in M)
\end{equation}
and
real functions $E_\alpha(\hbar)$ with asymptotic expansion
\begin{equation}\label{Erep}
E_\alpha(\hbar) \sim \hbar \Bigl(E_j + \sum_{s\in\frac{\N^*}{2}} \hbar^s E_{\alpha,s}\Bigr) \qquad \text{as}\quad \hbar\to 0
\end{equation}
such that 
the following holds:
\ben
\item 
the sections
\begin{equation}\label{hatvjk}
 \hat{v}_\alpha(\hbar):= \hbar^{-\frac{n}{4}} e^{-\frac{d^{j}}{\hbar}}  a_\alpha(\hbar)
\end{equation}
are approximate eigensections for $H_\hbar$ with respect to the approximate eigenvalues 
$E_\alpha(\hbar)$ given in \eqref{Erep}, i.e.
\begin{equation}
H_\hbar  \hat{v}_\alpha(\hbar) -  E_\alpha(\hbar) \hat{v}_\alpha(\hbar) = \begin{cases} o(\hbar^\infty)\quad \text{uniformly on} \;M\\
o(\hbar^\infty e^{-d^{j(\alpha)}/\hbar})\quad \text{uniformly on}\; K
\end{cases}
\end{equation}
\item for $\alpha=(j,k), \beta=(i,\ell)$ as above, the approximate eigensections given in \eqref{hatvjk} are 
almost orthonormal in the sense that 
\begin{equation}\label{orthokont}
\scl\hat{v}_\alpha(\hbar), \hat{v}_\beta(\hbar)\scr_\Eh = \delta_{\alpha\beta} + \delta_{ji}O(\hbar^\infty) + 
O\bigl( \hbar^{-(N_i+N_j+\frac{n}{2})} e^{-\frac{S_{ji}}{\hbar}}\bigr)\; .
\end{equation}
\een
\end{theo}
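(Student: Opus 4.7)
The plan is to invoke the single-well WKB construction of \cite{ludewig} separately at each potential well $m^j$, glue the resulting local quasimodes via cut-off functions supported in the admissible neighbourhoods $\O^j$, and then verify the almost orthogonality relations by combining single-well orthonormality with the Agmon-type decay furnished by the weight $e^{-d^j/\hbar}$. Hypothesis \ref{hypomega} is precisely what makes $(d^j,\O^j)$ an admissible pair in the sense of Ludewig, so Theorem 2.7 and Corollary 2.10 of \cite{ludewig} apply at each well; the only genuinely new content is the multi-well bookkeeping.

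For the construction I would start from the conjugation identity \eqref{LocalFormHPhi} and seek $a_\alpha(\hbar),E_\alpha(\hbar)$ on $\O^j$ via a WKB ansatz in half-integer powers of $\hbar$ so that
\[
  H_{d^j,\hbar}\, a_\alpha(\hbar) = E_\alpha(\hbar)\, a_\alpha(\hbar).
\]
Matching orders produces, at lowest order, the harmonic oscillator eigenvalue problem at $m^j$ (fixing $E_j$ and an orthonormal basis $a_{\alpha,-N_j}$ of its $\ell_j$-dimensional eigenspace), and at each subsequent order a linear first-order transport equation along the vector field $2\grad d^j$ with right-hand side determined by the previous coefficients. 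Because $\O^j$ is star-shaped for the backward flow of $2\grad d^j$ by Hypothesis \ref{hypomega}(b), these transport equations integrate globally on $\O^j$, yielding smooth $a_{\alpha,\ell}$ and successively determining the corrections $E_{\alpha,s}$ through Fredholm solvability conditions against the harmonic ground states. Borel summation produces $a_\alpha(\hbar)$ satisfying \eqref{arep}, and multiplying by $\chi^j\in \Ce^\infty_c(\O^j)$ equal to $1$ on a neighbourhood of any prescribed compact $K\subset \O^j$ gives a compactly supported section on $M$.

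Part (a) then splits in two: on $K$, where $\chi^j\equiv 1$, the WKB matching truncated at order $N$ gives $e^{d^j/\hbar}(H_\hbar - E_\alpha(\hbar))\hat v_\alpha = O(\hbar^{N})$ in $L^\infty(K)$, and letting $N\to\infty$ transcribes to the pointwise bound $o(\hbar^\infty e^{-d^j/\hbar})$; away from $K$, the commutator $[H_\hbar,\chi^j]$ is supported where $d^j\geq c>0$, so combined with $e^{-d^j/\hbar}$ it contributes $o(\hbar^\infty)$ uniformly on $M$. For part (b), when $j=i$ both sections concentrate near $m^j$ and the leading inner product is $\delta_{k\ell}$ by orthonormality of the harmonic oscillator basis; absorbing higher-order corrections by Gram--Schmidt inside the finite-dimensional $E_j$-eigenspace reduces the remainder to $O(\hbar^\infty)$. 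When $j\neq i$, the integrand of $\scl\hat v_\alpha,\hat v_\beta\scr_\Eh$ carries the weight $e^{-(d^j(m)+d^i(m))/\hbar}$, and the triangle inequality for the Agmon distance gives $d^j(m)+d^i(m)\geq d(m^j,m^i) = S_{ji}$ throughout $M$, while the normalisations contribute at most $\hbar^{-(N_i+N_j+n/2)}$; combining these yields exactly the claimed bound.

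The main obstacle is the transport-equation step: one must check that the cascade of equations for $a_{\alpha,\ell}$ is globally solvable on all of $\O^j$ (not merely in a neighbourhood of $m^j$) and that the Fredholm conditions at each order determine $E_{\alpha,s}$ uniquely. Both points depend essentially on the admissibility of $(d^j,\O^j)$, which ensures that the Hamilton flow of $\tilde h_0$ stays in $\Lambda_+(\O^j)$ for all negative times and that integration along $2\grad d^j$ is a well-defined procedure with $m^j$ as unique attracting fixed point. Once this solvability is secured, the rest of the argument is careful but routine accounting of the remainder terms and cut-off commutators.
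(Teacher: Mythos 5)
The paper gives no proof of this theorem: it simply imports the single-well WKB construction from \cite{ludewig} (Theorem 2.7 and Corollary 2.10), having verified just beforehand that Hypothesis \ref{hypomega} makes each $(d^j,\O^j)$ an admissible pair, and the only new content is the cross-well estimate in \eqref{orthokont}. Your sketch is a faithful outline of exactly that construction (conjugation \eqref{LocalFormHPhi}, transport hierarchy along $2\grad d^j$ with the harmonic-oscillator spectral data at $m^j$, Borel summation, cut-off) together with the correct triangle-inequality argument $d^j+d^i\geq S_{ji}$ and power counting for the off-diagonal terms, so it matches the intended proof.
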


\begin{rem}
 With the notation given in \eqref{LocalEigenvalues}, the lowest order in $\hbar$ in the expansion of $a_\alpha$ is given by $N_{j(\alpha)} = \max_\gamma |\gamma|/2$ where $\gamma$ runs over all 
multi-indices such that $e^j_{\gamma,\ell} = E_j$ for some $\ell=1, \dots \mathrm{rk}\,\Eh$.
\end{rem}

\begin{prop}\label{prop5}
Let $I_\hbar = [0, \hbar R_0 ]$ for some $R_0>0$. Then for $j\in\mathcal{C}$ and $\hbar$ sufficiently small there is a bijection $b: \sigma\bigl(H_\hbar^{M_j}\bigr)\cap I_\hbar\rightarrow 
\sigma\bigl(H_{m^j, \hbar}\bigr)\cap I_\hbar$ and a
constant $C_0>0$ such that $|b(\lambda) - \lambda| \leq C_0 \hbar^{3/2}$. 
\end{prop}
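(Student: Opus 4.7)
\textbf{Proof proposal for Proposition \ref{prop5}.} The plan is to refine the harmonic approximation bound $O(\hbar^{6/5})$ provided by Theorem \ref{theo7} down to $O(\hbar^{3/2})$ by using the WKB quasimodes from Theorem \ref{theo4}. The key observation is that, by \eqref{Erep}, the WKB approximate eigenvalues match a harmonic-oscillator eigenvalue $\hbar E_j$ to order $O(\hbar^{3/2})$ (since $s\in \mathbb{N}^*/2$ in the expansion implies the next correction after $\hbar E_j$ is $\hbar^{3/2}E_{\alpha,1/2}$), while the quasimode error is $o(\hbar^\infty)$. This gap is exactly what yields the $\hbar^{3/2}$ bound.

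First I would apply (a single-well version of) Theorem \ref{theo7} to the Dirichlet operator $H_\hbar^{M_j}$. Since $M_j$ is compact with $m^j$ the unique well in its interior and $H_\hbar^{M_j}$ has compact resolvent (so $\sigma_{ess}=\emptyset$), the proof of Theorem \ref{theo7} carries over with obvious modifications of the partition of unity and yields, in particular, the correct eigenvalue count
\[
\#\bigl(\sigma(H_\hbar^{M_j})\cap I_\hbar\bigr)=\#\bigl(\sigma(H_{m^j,\hbar})\cap I_\hbar\bigr)
\]
for $\hbar$ small, together with a preliminary bijection with error $O(\hbar^{6/5})$. Next I would sharpen the eigenvalue matching: for each $\hbar E_j\in I_\hbar$ of multiplicity $\ell_j$, Theorem \ref{theo4} supplies $\ell_j$ approximate eigensections $\hat v_{(j,k)}$, $k=1,\ldots,\ell_j$, which are all supported in $\Omega^j\subset\mathring M_j$ (hence lie in the form domain of $H_\hbar^{M_j}$) and satisfy
\[
\bigl\|(H_\hbar^{M_j}-E_{(j,k)}(\hbar))\hat v_{(j,k)}\bigr\|_\Eh = o(\hbar^\infty),\qquad |E_{(j,k)}(\hbar)-\hbar E_j|=O(\hbar^{3/2}).
\]
By Theorem \ref{theo4}(b) with common well ($i=j$) one has $\scl\hat v_{(j,k)},\hat v_{(j,k')}\scr_\Eh=\delta_{kk'}+O(\hbar^\infty)$, so the $\hat v_{(j,k)}$ form an almost orthonormal $\ell_j$-tuple. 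The spectral theorem places each $E_{(j,k)}(\hbar)$ within $O(\hbar^\infty)$, and thus within $O(\hbar^{3/2})$ of $\hbar E_j$, of $\sigma(H_\hbar^{M_j})$; a standard mini-max argument on $\Span\{\hat v_{(j,k)}\}_{k=1}^{\ell_j}$ then produces \emph{at least} $\ell_j$ eigenvalues of $H_\hbar^{M_j}$ in $[\hbar E_j-C\hbar^{3/2},\hbar E_j+C\hbar^{3/2}]$.

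To conclude, I would perform this for every harmonic oscillator eigenvalue $\hbar E_j\in I_\hbar$ (finitely many, with bounded multiplicities) and combine with the counting from the first step to build the desired bijection $b$. The main technical point to check is that the $O(\hbar^{3/2})$-clusters associated with distinct harmonic eigenvalues do not overlap, so that the mini-max count at each $\hbar E_j$ is tight and the multiplicities are matched correctly; this is immediate because distinct eigenvalues of $H_{m^j,\hbar}$ are separated by at least a fixed positive multiple of $\hbar$, while $\hbar^{3/2}\ll\hbar$. Apart from this bookkeeping, the proof is essentially a direct combination of Theorem \ref{theo7} (for the count) and Theorem \ref{theo4} (for the sharp quasimode accuracy).
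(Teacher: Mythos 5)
Your proposal is correct and follows essentially the same route as the paper: a rough $O(\hbar^{6/5})$ localization of $\sigma(H_\hbar^{M_j})$ near the harmonic-oscillator eigenvalues, sharpened to $O(\hbar^{3/2})$ by the quasimodes and asymptotic eigenvalue expansions of Theorem \ref{theo4}. The only (harmless) variation is that the paper obtains the rough count and bijection by combining Theorem \ref{theo7} for the global operator with the exponentially accurate bijection of Theorem \ref{theo1}, whereas you rerun the harmonic-approximation argument directly for the single-well Dirichlet operator $H_\hbar^{M_j}$.
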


\begin{proof}
Combine Theorem \ref{theo7} on the harmonic approximation and Theorem \ref{theo1} on the existence of a bijection between the spectrum of $H_\hbar$ 
and the union of the spectra of $H_\hbar^{M_j}$, both restricted to a spectral interval $I$ (giving the existence of $b$ and a rough bound $O(\hbar^{6/5})$) with
Theorem \ref{theo4} on the existence of asymptotic expansions (which improves the rough bound to $O(\hbar^{3/2})$). 
\end{proof}

Now we will prove that the difference between the quasimodes of Theorem \ref{theo4} and the Dirichlet eigensection is exponentially small.

\begin{theo}\label{theo5}
Let $H_\hbar$ be given in Hypothesis \ref{setup1} and for any $j\in\mathcal{C}$, let $\O^j, M_j$ satisfy
Hypothesis \ref{hypomega}. 
Furthermore, we assume that $\hbar E_j$ denotes an eigenvalue of $H_{m^j, \hbar}$ defined in
\eqref{harmoss} with multiplicity $\ell_j$ and we set 
$I_\hbar \bigl(E_j\bigr) = \bigl[\hbar E_j - C_0 \hbar^{\frac{3}{2}}, \hbar E_j + C_0 \hbar^{\frac{3}{2}}\bigr]$
for some $C_0>0$.
For $\alpha = (j,k),\, 1\leq k \leq \ell_j,$ let $v_\alpha$ denote orthonormal eigensections of the Dirichlet operator $H_\hbar^{M_j}$ 
with eigenvalue belonging  to the spectral interval $I_\hbar\bigl(E_j\bigr)$. Let $K$ be any compact set in $\Omega^j$ and let $\hat{v}_\alpha$ (resp. $E_\alpha$) be the quasimodes (resp. the approximate eigenvalues) associated to $\hbar E_j$, 
as defined in Theorem \ref{theo4}, and denote by 
$\mathcal{J}_j$  the set (of cardinality $\ell_j$) of all such $\alpha$.

Then there is a  unitary  $\ell_j \times \ell_j$ matrix $C^j(\hbar) = \bigl(c^j_{\alpha,\beta}(\hbar)\bigr)_{\alpha, \beta \in I_j}$ - possesing a full asymptotic expansion in half-integer powers of $\hbar$ - such that for $\hbar$ sufficiently small and $\alpha \in \mathcal{J}_j$
\begin{equation}\label{theo5_0}
v_\alpha = \tilde{v}_\alpha + O(\hbar^\infty)\quad \text{where}\quad \tilde{v}_\alpha := \sum_{\beta \in I_j} c^j_{\alpha,\beta}(\hbar) \hat{v}_{\beta}\; .
\end{equation}
Moreover for any $N\in\N$ 
\begin{equation}\label{theo5_1} 
\left\| \id_K e^{\frac{d^j}{\hbar}} \left(v_\alpha-\tilde{v}_\alpha\right)
\right\|^2_\Eh + \left\| \id_K \nabla^\Eh \bigl(e^{\frac{d^j}{\hbar}} (v_\alpha - \tilde{v}_\alpha)\bigr)\right\|^2_{\otimes} = O\left(\hbar^N\right) \; , 
\qquad (\hbar\to 0) \, .
\end{equation} 

\end{theo}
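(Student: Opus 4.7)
The plan is to bridge the quasimodes $\hat v_\beta$ to the Dirichlet eigenfunctions $v_\alpha$ via the spectral subspace $\F_j := \mathbf{1}_{I_\hbar(E_j)}(H_\hbar^{M_j}) L^2(M_j,\Eh)$, to exploit that both the approximate eigenvalues $\{E_\alpha(\hbar)\}$ of Theorem \ref{theo4} and the Dirichlet eigenvalues $\{\mu_\alpha(\hbar)\}$ carry full asymptotic expansions around $\hbar E_j$ that must agree to all orders, and then to close the pointwise weighted estimate \eqref{theo5_1} via a localized Agmon inequality on $\Omega^j$.

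For the $L^2$ statement and the construction of $C^j(\hbar)$: since $\hat v_\beta$ is compactly supported in $\Omega^j\subset\subset M_j$, Theorem \ref{theo4}(a) yields $(H_\hbar^{M_j}-E_\beta)\hat v_\beta=o(\hbar^\infty)$ in $L^2$. The harmonic approximation (Theorem \ref{theo7}), combined with the spacing of distinct eigenvalues of $H_{m^j,\hbar}$, gives $\dist(I_\hbar(E_j),\sigma(H_\hbar^{M_j})\setminus I_\hbar(E_j))\ge c\hbar$ for $\hbar$ small enough, so functional calculus provides $\hat v_\beta^\circ := \Pi_{\F_j}\hat v_\beta = \hat v_\beta + O(\hbar^\infty)$ in $L^2$. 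Combined with Theorem \ref{theo4}(b), the Gram matrix of $\{\hat v_\beta^\circ\}_{\beta\in\mathcal{J}_j}$ is $\id+O(\hbar^\infty)$, so this family is an almost-orthonormal basis of the $\ell_j$-dimensional space $\F_j$. I then diagonalize the self-adjoint matrix $M_{\beta\gamma}:=\scl\hat v_\beta^\circ,H_\hbar\hat v_\gamma^\circ\scr_\Eh$ by a unitary $V(\hbar)$ whose columns are aligned with $\{v_\alpha\}$ (using the freedom of phases and degenerate-block rotations); analytic perturbation theory for a matrix with half-integer asymptotic expansion provides corresponding expansions for $V(\hbar)$ and for its eigenvalues, which are exactly the $\mu_\alpha$. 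Setting $c^j_{\alpha\beta} := V_{\alpha\beta}(\hbar)$ gives $v_\alpha = \sum_\beta c^j_{\alpha\beta}\hat v_\beta^\circ$ exactly in $\F_j$, and hence $v_\alpha = \tilde v_\alpha + O(\hbar^\infty)$ in $L^2$.

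For the weighted bound, set $w := v_\alpha - \tilde v_\alpha$ and fix $\chi\in\Ce_c^\infty(\Omega^j,\R)$ equal to $1$ on a neighborhood of $K$. I apply \eqref{prop1_eq1} to $\chi w$ with the regularized Agmon weight $\phi = d^j - B\hbar\log(d^j/\hbar)$ (as in the proof of Proposition \ref{prop2}) and $E=\mu_\alpha$: the eikonal equation and its logarithmic correction give $F\ge c\sqrt{\hbar}$ on $\supp\chi$ for $B$ sufficiently large, so the LHS dominates $c\hbar\|e^{\phi/\hbar}\chi w\|_\Eh^2+c\hbar^2\|\nabla^\Eh(e^{\phi/\hbar}\chi w)\|_\otimes^2$. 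The RHS of \eqref{prop1_eq1} involves $e^{\phi/\hbar}\chi(H_\hbar-\mu_\alpha)w$ and $e^{\phi/\hbar}[H_\hbar,\chi]w$; the commutator is localized on $\supp\dd\chi\subset\Omega^j\setminus K$ where, by choosing $\chi$ so that $d^j$ exceeds its maximum on $K$ by a fixed positive amount and using Corollary \ref{cor1} for $v_\alpha$ together with the explicit WKB form of $\hat v_\beta$, one gets an $O(\hbar^\infty)$ weighted contribution. The interior term equals $-\chi\sum_\beta c^j_{\alpha\beta}[(E_\beta-\mu_\alpha)\hat v_\beta+r_\beta]$ with $r_\beta=o(\hbar^\infty e^{-d^j/\hbar})$ on $K$ by Theorem \ref{theo4}(a); if $|E_\beta-\mu_\alpha|=o(\hbar^\infty)$ for those $\beta$ with $c^j_{\alpha\beta}\ne 0$, this is $o(\hbar^\infty)$ in weighted norm, the Agmon inequality closes, and \eqref{theo5_1} follows on noting that $e^{\phi/\hbar}$ and $e^{d^j/\hbar}$ differ by polynomial factors in $\hbar^{-1}$ on $K$.

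The main obstacle is the refined eigenvalue matching $|E_\beta-\mu_\alpha|=o(\hbar^\infty)$, since Proposition \ref{prop5} alone only delivers $O(\hbar^{3/2})$, which produces an $O(\hbar^{3/2})$ interior Agmon term and falls short of the required $O(\hbar^N)$ for every $N$. The refinement rests on identifying the formal asymptotic series determining the $E_\beta$ via the transport hierarchy of \cite{ludewig} with the one produced by Rayleigh--Schr\"odinger perturbation theory for the Dirichlet cluster, invoking uniqueness of such formal expansions and the non-degeneracy of the linear problem at each half-integer order inherited from $\nabla^2V|_{m^j}>0$. Absorbing the residual degenerate-block freedom into the choice of $V(\hbar)$ ensures that $c^j_{\alpha\beta}$ is supported precisely on index pairs for which the all-order matching holds, making the Agmon closure genuinely work.
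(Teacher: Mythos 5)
Your first two paragraphs (the $L^2$ identification $v_\alpha=\tilde v_\alpha+O(\hbar^\infty)$ and the construction of the unitary $C^j(\hbar)$) are essentially sound and follow the same route as the paper, which compares the span of the Dirichlet eigensections with the span of the quasimodes via Theorem \ref{theo4} and Proposition \ref{prop5} in the style of \cite{helffer-sjostrand-1}; in particular this comparison already yields $\mu_\alpha=\hbar E_{\beta}(\hbar)+O(\hbar^\infty)$ for the matched indices, so the detour through Rayleigh--Schr\"odinger perturbation theory and ``uniqueness of formal expansions'' in your last paragraph is unnecessary --- the all-order eigenvalue matching falls out of the subspace argument you already set up.

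The genuine gap is in your proof of \eqref{theo5_1}, at the commutator term. With the plain regularized weight $\phi=d^j-B\hbar\log(d^j/\hbar)$ one has $e^{\phi/\hbar}\asymp \hbar^{\mathrm{poly}}e^{d^j/\hbar}$ \emph{everywhere}, so on $\supp \dd\chi$ the quantity $\bigl\|e^{\phi/\hbar}[H_\hbar,\chi]w\bigr\|^2$ is controlled only by Corollary \ref{cor1} and the WKB form of $\hat v_\beta$, i.e.\ it is $O(\hbar^{-N_1})$ for some fixed $N_1$ --- not $O(\hbar^\infty)$. Your justification (``$d^j$ exceeds its maximum on $K$ by a fixed amount on $\supp\dd\chi$'') does not help: the weight in the Agmon inequality is evaluated on $\supp\dd\chi$, where $e^{d^j/\hbar}$ is \emph{larger} than on $K$ and exactly cancels the exponential decay of $w$ there; showing that $e^{d^j/\hbar}w$ is $o(\hbar^\infty)$ near $\supp\dd\chi$ is precisely the statement you are trying to prove, so the argument is circular. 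Consequently your Agmon closure only reproduces the trivial bound $O(\hbar^{-N_2})$ for the left side of \eqref{theo5_1}. The paper's proof avoids this by the Helffer--Sj\"ostrand device (their Lemma 5.7): one replaces $\Phi$ by $\Phi_N=\min\{\Phi+N\hbar\ln\hbar^{-1},\,\Psi\}$ with $\Psi(m)=\inf_{n\in\supp\dd\chi}\Phi(n)+(1-\ep)d(m,n)$, which equals $\Phi+N\hbar\ln\hbar^{-1}$ on a neighborhood of the minimal geodesics through $K$ (so the desired quantity is dominated by $\hbar^{2(N-N_0)}\|e^{\Phi_N/\hbar}\chi w\|^2$) but satisfies $\Phi_N\le\Phi$ on $\supp\dd\chi$ (so the commutator term stays merely polynomially large and is then beaten by the prefactor $\hbar^{2(N-N_0)-3}$ for $N$ large). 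Without some such $N$-dependent modification of the weight that treats $K$ and $\supp\dd\chi$ differently, the estimate \eqref{theo5_1} does not follow.
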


We remark that we can choose $c^j_{\alpha,\beta}(\hbar)=0$ if $E_{\beta}(\hbar)$ is not 
asymptotically equal to $\mu_{\alpha}(\hbar)$ (the Dirichlet eigenvalue associated to $v_{\alpha}$), thus $\bigl(c^j_{}\bigr)_{\alpha,\beta \in I_j}$ can be chosen to be the identity matrix if all 
$E_{\alpha}(\hbar), \, \alpha \in \mathcal{J}_j,$ have different expansions. \\
Note, furthermore,  that in view of standard elliptic estimates the basic estimate
\eqref{theo5_1} establishes similar bounds on all higher derivatives:  Second order derivatives of $e^{\frac{d^j}{\hbar}} \left(
v_\alpha-\tilde{v}_\alpha\right)$ can be bounded from the elliptic  equation and \eqref{theo5_1}, and mathematical induction then implies bounds on all derivatives. 
In particular, the Sobolev embedding theorem on the compact subset $\overline{\O^j}$ of $M$ (where all the standard definitions of Sobolev spaces actually coincide) give the following result: 
If $H$ denotes an oriented hypersurface in $\O^j$ and $d\sigma$ the induced surface measure, then \eqref{theo5_1}  implies
\begin{equation}  \label{surface}
\int_{H \cap K} e^{\frac{2d^j}{\hbar}} \gamma[v_\alpha-\tilde{v}_\alpha,v_\alpha-\tilde{v}_\alpha] d\sigma +
 \int_{H \cap K} |\nabla^\Eh \bigl(e^{\frac{d^j}{\hbar}} (v_\alpha - \tilde{v}_\alpha)\bigr)|_{\otimes}^2 d\sigma  = O\left(\hbar^\infty\right)  
\qquad (\hbar\to 0) \, ,
\end{equation}
where $| \cdot |_{\otimes}$ denotes the norm in the fibres of $T^*M \otimes \Eh$ induced from $\langle \cdot, \cdot \rangle_{\otimes}$.

We also remark that similar considerations apply to the Agmon estimates in Section 3, yielding in particular 

\begin{equation}
\int_{H \cap K} e^{2 d_j/\hbar} \gamma[v_\alpha,v_\alpha] d\sigma +
\int_{H \cap K} |\nabla^\Eh (e^{d_j/\hbar} v_\alpha)|_{\otimes} d\sigma = O\left(\hbar^{-N_0}\right),  
\end{equation}
for some $N_0 \in \N$, using Corollary \ref{cor1}.

\begin{proof}
Here one may follow the arguments in \cite{helffer-sjostrand-1}, Theorem 5.8 (the scalar case).
Denoting by $\tilde{\E}_j$ and $\hat{\E}_j$ the space spanned by $v_{j,k}, \, 1\leq k \leq \ell_j$ and $\hat{v}_{j,k}, \, 1\leq k \leq \ell_j$ respectively, it follows from Theorem \ref{theo4} and Proposition \ref{prop5}
(as in \cite{helffer-sjostrand-1}, using Proposition 2.5 of that paper) that 
\begin{equation}\label{theo5_2}
\vec{\dist}(\tilde{\E}_j, \hat{\E}_j) = \vec{\dist}(\hat{\E}_j, \tilde{\E}_j) = O\bigl(\hbar^\infty)\quad \text{and}\quad \mu_{j,k}= \hbar E_{j,k}(\hbar)+ O(\hbar^\infty)
\end{equation}
where $\mu_{j,k}$ denotes the eigenvalues of $H_\hbar^{M_j}$ associated to $v_{j,k}$. 
This proves \eqref{theo5_0}.\\

From Corollary \ref{cor1} and \eqref{hatvjk} it is clear that the left hand side of \eqref{theo5_1} is of order $O(\hbar^{-N_0})$ for some $N_0\in\N$.
In order to simplify the notation, we fix $\alpha=(j,k)\in\mathcal{J}$ and set  
\begin{equation}\label{theo5_3}
r:= \bigl( H_\hbar - \mu_\alpha\bigr) w\, , \qquad w:= v_\alpha- \tilde{v}_\alpha
\end{equation}
Then Theorem \ref{theo4} shows for any compact $\tilde{K} \subset \mathring{\Omega}^j$ (fixed in advance as amplified in Theorem \ref{theo5})
\begin{equation} \label{theo5_4}
 \Bigl\|\id_{\tilde{K}} e^{\frac{d^j}{\hbar}} r \Bigr\|_{\Eh} = O\bigl(\hbar^\infty\bigr)\; .
\end{equation}
Furthermore, by \eqref{theo5_0} we have 
\begin{equation}\label{theo5_5}
 \| \id_{\tilde{K}} w \|_{\Eh} = O(\hbar^\infty)\; .
\end{equation}
Let $\chi\in\Ce^\infty_0(\Omega^j)$ be a cut-off function, which is equal to one in a neighborhood of the union
$\hat{K}$ of all minimal geodesics from points in $K$ to $m^j$. For $\Phi$ defined in \eqref{prop2_p1} we set for $N\in \N$ and $\ep>0$
\begin{equation}\label{theo5_6}
\Phi_N(m) := \min \bigl\{  \Phi(m) + N\hbar \ln \tfrac{1}{\hbar},  \Psi(m) \bigr\} \quad\text{where}\; \Psi(m):=\inf_{n\in\supp \dd \chi} \Phi(n) + (1-\ep) d(m,n)\, .
\end{equation}
Then a compactness argument (see \cite{helffer-sjostrand-1}, Lemma 5.7) shows that if $U$ is a small neighborhood of $\hat{K}$ and $\ep$ is sufficiently 
small, then for each $N$ there
exists $\hbar_N>0$ such that $\Phi_N(m)=\Phi(m) + N\hbar \ln \tfrac{1}{\hbar}$ for all $\hbar<\hbar_N$ and $m\in U$. 
On the other hand, for any 
$m, m'\in \Omega^j$ we have 
$$|\Psi(m)-\Psi(m')| \leq (1-\ep) d(m, m')=(1-\ep)\sqrt{V(m)}d_g(x,x')(1+o(1))\qquad (x'\to x)$$ 
where $d_g$ denotes the distance with
respect to the Riemannian metric. Thus 
$$|\dd\Psi|^2 \leq (1-\ep) V\quad\text{and}\quad V(m) - |\dd\Psi(m)|^2\geq (2\ep-\ep^2) V(m) \geq \ep C$$ 
for some constant $C>0$ and
for $m$ in a region bounded away from $m^j$. For $\ep$ sufficiently small (independent of $N$), we therefore get for some $C_0>0$
\begin{equation}\label{theo5_7}
V(m) - |\dd\Phi_N(m)|^2 \begin{cases} = 0\quad\text{if}\quad d^j(m) < B\hbar \\ \geq \frac{B\hbar}{C_0}\quad\text{if}\quad d^j(m)\geq B\hbar\; .
\end{cases}
\end{equation}
We choose $B$ such that 
$\frac{B}{C_0} - \frac{\mu_\alpha}{\hbar} \geq 1$ and 
define $F_+$ and $F_-$ as in \eqref{prop2_p4}, replacing $\dd\Phi$ by $\dd \Phi_N$ and $E$ by $\mu_{j,k}$. 
We remark that $e^{d^j/\hbar} = O\bigl(\hbar^{-N_0}e^{\Phi/\hbar}\bigr)$ for some $N_0\in\N$ and $\Phi_N = \Phi + N\hbar\ln\hbar^{-1}$ in $K$.  
Using also that $\id_K[\nabla^\Eh, \chi] = 0$, we have for some 
$C>0$
\begin{multline}
\text{lhs}\eqref{theo5_1} \leq C\hbar^{2(N-N_0)} \Bigl(\bigl\| \nabla^\Eh\bigl( e^{\frac{\Phi_N}{\hbar}} \chi w\bigr)\bigr\|^2_\otimes + 
 \bigl\| e^{\frac{\Phi_N}{\hbar}} \chi w\bigr\|^2_\Eh\Bigr) \\
\leq C\hbar^{2(N-N_0)} \Bigl(\bigl\| \nabla^\Eh\bigl( e^{\frac{\Phi_N}{\hbar}} \chi w\bigr)\bigr\|^2_\otimes +  
\tfrac{1}{4\hbar} \bigl\| F e^{\frac{\Phi_N}{\hbar}} \chi w\bigr\|^2_\Eh - C_1 \bigl\| e^{\frac{\Phi_N}{\hbar}} \chi w\bigr\|^2_\Eh\Bigr)\label{theo5_8}
\end{multline}
where in the second step we used (the analog of ) \eqref{prop2_p7}. From \eqref{prop1_eq1} it follows that
\begin{align}
\text{rhs}\eqref{theo5_8} &\leq C\hbar^{2(N-N_0-1)} \Bigl( \bigl\| \frac{1}{F}e^{\frac{\Phi_N}{\hbar}} (H_\hbar - \mu_{j,k})  \chi w\bigr\|^2_\Eh + 
2  \bigl\| F_- e^{\frac{\Phi_N}{\hbar}} \chi w\bigr\|^2_\Eh \Bigr) \nonumber\\
&\leq C\hbar^{-2N_0-3} \bigl\| e^{\frac{\Phi}{\hbar}} \chi (H_\hbar - \mu_{j,k})  w\bigr\|^2_\Eh + 
C \hbar^ {2(N-N_0)-3} \bigl\| e^{\frac{\Phi}{\hbar}} [H,\chi] w\bigr\|^2_\Eh  \nonumber\\
&\quad +
2 C\hbar^{-2N_0-1} \bigl\| \id_{\{d^j<\hbar B\}} e^{\frac{\Phi}{\hbar}} \chi w\bigr\|^2_\Eh \label{theo5_9}
\end{align}
where for the last step we used that $e^{\frac{\Phi_N}{\hbar}} \leq e^{\frac{\Phi}{\hbar}}  \hbar^ {-N}$ for the first and third term 
and the fact that $\Phi_N\leq \Phi$ on the support of $\dd \chi$ (by the definition 
of $\Psi$) together with 
Corollary \ref{cor1}
\eqref{new2} for the second term.

Choosing $\tilde{K} = \supp \chi$, the last term on the right hand side of \eqref{theo5_9} is $O(\hbar^\infty)$ by \eqref{theo5_5}, the first term is
$O(\hbar^\infty)$ by \eqref{theo5_4}.  Since $\|e^{\Phi/\hbar}[H,\chi]w\|^2 = O(\hbar^{-N_1})$ for some $N_1\in\N$ by the definition of $\Phi$ and Corollary \ref{cor1}, this proves 
\eqref{theo5_1}.

\end{proof}

We shall now combine the approximate Dirichlet eigensections with the formula \eqref{prop4_0} (or \eqref{prop4_0a}). 
Under more special conditions, we shall refine the construction at the beginning of Section \ref{sec3}. We  start by giving appropriate  assumptions for the index-set $\mathcal{J}$ of the relevant 
set of Dirichlet eigenfunctions and derive an associated spectral interval. 

\begin{hyp}\label{hyp4}
Let $H_\hbar$ be given in Hypothesis \ref{setup1} and for any $j\in\mathcal{C}$, let $M_j, H_\hbar^{M_j}$ and $S$ satisfy
Hypothesis \ref{Mj}. 
\ben
\item[1)]
Let $\hbar E_0$ be in the spectrum of the direct sum  of the localized harmonic oscillators $H_{m^j,\hbar},\, j\in\mathcal{C},$ given in \eqref{harmoss}.
Let $\mathcal{J}$ be a maximal set of pairs $\alpha = (j,k)$ 
such that for
$\alpha\in\mathcal{J}$ all asymptotic eigenvalues $E_\alpha (\hbar)$ of $H_\hbar$ given in
Theorem \ref{theo4} with leading order $\hbar E_0$ are equal. Let $\mu_\alpha$ be the associated eigenvalues and $v_\alpha$ be the eigensections of the Dirichlet operators
$H_\hbar^{M_{j(\alpha)}}$. \\
\item[2)] Assume that \eqref{prop4_00} or \eqref{prop4_00a} holds for all 
$\alpha,\beta\in\mathcal{J}$. 
We choose $\alpha\in\mathcal{J}$, $N_0\in\N$ and $C>0$, such that the interval 
$I := [\mu_\alpha - C \hbar^{N_0}, \mu_\alpha + C\hbar^{N_0}]$ exactly includes the eigenvalues $\mu_\beta$ of Dirichlet operators with $\beta\in\mathcal{J}$
(this is possible for $N_0$ sufficiently small because of the maximality of $\mathcal{J}$).
\item[3)] For any two wells $m^{j(\alpha)}$ and $m^{j(\beta)},\, \alpha, \beta\in\mathcal{J},$ we denote the set of minimal geodesics between them by 
$\mathcal{G}_{j(\alpha), j(\beta)}$ and assume that there are open sets $\Omega^{j(\alpha)}$ and $\Omega^{j(\beta)}$, satisfying Hypothesis 
\ref{hypomega}, such that 
\ben
\item $\mathcal{G}_{j(\alpha), j(\beta)}\subset \Bigl(\Omega^{j(\alpha)}\cup \Omega^{j(\beta)}\Bigr)$.
\item there is a hypersurface $\Sigma_{j(\alpha), j(\beta)}\subset \Bigl(\Omega^{j(\alpha)}\cap \Omega^{j(\beta)}\Bigr)$ transversal to $\mathcal{G}_{j(\alpha), j(\beta)}$.
\item there is a constant $C>0$ such that for all $m\in \Sigma_{j(\alpha), j(\beta)}$ and with the notation $H_{j(\alpha),j(\beta)}:= \mathcal{G}_{j(\alpha), j(\beta)}\cap \Sigma_{j(\alpha), j(\beta)}$
\begin{equation}\label{hyp4_0}
d^{j(\alpha)}(m) + d^{j(\beta)}(m) \geq d(m^{j(\alpha)},m^{j(\beta)}) + 
C \dist (m,H_{j(\alpha),j(\beta)})^2\, .
\end{equation}
\item either $\mathcal{G}_{j(\alpha), j(\beta)}$ consists of a unique minimal 
geodesic (Case I) (in which case we set $H_{j(\alpha),j(\beta)}=: m_0$) or it is a manifold 
(possibly singular at the wells) of dimension 
$\ell+1$ with $1\leq\ell\leq n-1$ (Case II).
\een
\een
\end{hyp}

To unify our notation, we set $\ell=0$ in Case I. Estimate \eqref{hyp4_0} implies that the transverse Hessian of $d^{j(\alpha)} + d^{j(\beta)}$ 
(transverse with respect to $\mathcal{G}_{j(\alpha), j(\beta)}$)
is non-degenerate at all points of $H_{j(\alpha),j(\beta)}$ (the intersection of the geodesics with the hypersurface $\Sigma_{j(\alpha), j(\beta)}$). 
More precisely, choose near $H_{j(\alpha),j(\beta)}$ a 
tubular neighborhood $\tau$ of $\Sigma_{j(\alpha), j(\beta)}$ and commuting unit vector fields $N_1, \ldots N_n$ such that $N=N_n$ is normal to 
$\Sigma_{j(\alpha), j(\beta)}$, $N_1, \ldots N_{n-1}$ are an orthonormal base in $T\Sigma_{j(\alpha), j(\beta)}$ and
$N_{1}, \ldots N_{n-\ell-1}$ are transversal to $\mathcal{G}_{j(\alpha), j(\beta)}$. We remark that $N_n$ is not necessarily tangent to the geodesics in $\mathcal{G}_{j(\alpha), j(\beta)}$ and that the vector fields 
$N_{n-\ell}, \ldots N_{n-1}$ are possibly only locally defined on $H_{j(\alpha),j(\beta)}$ (while $N_1, \ldots N_{n-\ell-1}$ exist globally on $H_{j(\alpha),j(\beta)}$).  Then
\begin{equation}\label{transversHessian} 
 D^2_{\alpha,\beta} := D^2_{\perp, H_{j(\alpha),j(\beta)}}\bigl(d^{j(\alpha)} + d^{j(\beta)}\bigr) := \Bigl( N_s N_t (d^{j(\alpha)} + d^{j(\beta)})|_{H_{j(\alpha),j(\beta)}}\Bigr)_{1\leq s,t \leq n-\ell-1}
\end{equation}
is called the transverse Hessian of $d^{j(\alpha)} + d^{j(\beta)}$ at $H_{j(\alpha),j(\beta)}$ and \eqref{hyp4_0} implies that it is positive 
(in particular non-degenerate) for all points in $H_{j(\alpha),j(\beta)}$. 

Then, using the Morse-Lemma with parameters, the integral in \eqref{prop4_0} (or \eqref{prop4_0a}) for $w_{\alpha\beta}$ has a complete
asymptotic expansion. More precisely,

\begin{theo}\label{theo6}
Under the assumptions given in Hypothesis \ref{hyp4}, for a fixed pair $\alpha, \beta\in\mathcal{J}$, let $w_{\alpha\beta}$ be the interaction matrix element with respect to the spectral interval $I$ as given in \eqref{intact}. 
Then there is a sequence $(I_p)_{p\in \N/2}$ in $\R$ such that
\begin{equation}\label{theo6_0} 
w_{\alpha\beta} \sim \hbar^{-(N_{\alpha} + N_{\beta})} \hbar^{(1-\ell)/2} e^{-S_{j(\alpha),j(\beta)}/\hbar} \sum_{p\in \N/2} \hbar^p I_p \, . 
\end{equation}
Explicit formulae for the leading order term are slightly different in Case I and Case II (see Hypothesis \ref{hyp4}, 3d).

Partition $\mathcal{J}$  into maximal subsets $\mathcal{J}_j$ associated to one potential minimum $m_j$. For $\delta \in \mathcal{J}_j$ and $a_\delta \in C^\infty((0, \hbar_0), \Gamma_c^\infty(M, \Eh))$ given in Theorem \ref{theo4}, let 
\[ \tilde{v}_\delta =  \hbar^{-\frac{n}{4}} e^{-\frac{d^{j}}{\hbar}}  \tilde{a}_\delta\qquad \text{with}\quad \tilde{a}_{\delta} := \sum_{\beta \in \mathcal{J}_j} 
c^j_{\delta,\beta}(\hbar) a_{\beta} \]
be the approximate eigenfunctions and
$C^j(\hbar) = \bigl(c^j_{\alpha,\beta}(\hbar)\bigr)_{\alpha,\beta \in \mathcal{J}_j}$ 
be the unitary  matrix as given in Theorem \ref{theo5}.
We denote by $d\sigma$ the Riemannian surface measure on $H_{j(\alpha),j(\beta})$ induced by the Riemannian metric $g$, by $N=N_n$ the unit
normal vector field on $\Sigma_{j(\alpha), j(\beta)}$ pointing from $m^{j(\alpha)}$ to $m^{j(\beta)}$ and define the transverse Hessian by 
\eqref{transversHessian}.  Moreover, $\dd f (N) = N(f)$ denotes the normal derivative of $f\in\Ce^\infty(M, \R)$.

Then the leading order in the expansion in \eqref{theo6_0} is given by
\ben
\item[Case 1:]
\begin{equation}\label{0thm1} 
I_0 =  (2\pi)^{\frac{n-1}{2}}\left(\det D^{2}_{\alpha,\beta} \right)^{-1/2} 
\dd \bigl(d^{j(\beta)}- d^{j(\alpha)}\bigr)(N)(m_0) \gamma_{m_0} [ \tilde{a}_{\alpha, -N_{\alpha}}, \tilde{a}_{\beta, -N_\beta}]
\end{equation}
\item[Case 2:]
\begin{equation}\label{0thm2}
I_0 =  (2\pi)^{\frac{n-\ell-1}{2}} \int_{H_{j(\alpha),j(\beta)}} \bigl({\det D^{2}_{\alpha,\beta}}\bigr)^{-1/2} 
\dd \bigl( d^{j(\beta)} - d^{j(\alpha)}\bigr)(N)(m) \gamma_{m} [ \tilde{a}_{\alpha, -N_{\alpha}}, \tilde{a}_{\beta, -N_\beta}] \, d\sigma (m)\, .
\end{equation}
\een

\end{theo}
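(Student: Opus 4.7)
The plan is to start from the surface-integral representation of $w_{\alpha\beta}$ provided by Proposition~\ref{prop4} (or its limiting form for \eqref{prop4_00a}), namely
\begin{equation*}
  \tfrac{1}{\hbar^2} w_{\alpha\beta} \equiv \int_{\Sigma_{j(\alpha),j(\beta)}} \Bigl( \gamma_m[\nabla^\Eh_N v_\alpha, v_\beta] - \gamma_m[v_\alpha, \nabla^\Eh_N v_\beta] \Bigr)\, d\sigma (m),
\end{equation*}
which holds modulo an error that is already of higher exponential order than the claimed expansion. First I would replace the true Dirichlet eigensections $v_\alpha,v_\beta$ on $\Sigma_{j(\alpha),j(\beta)}$ by the WKB quasimodes $\tilde v_\alpha,\tilde v_\beta$ of Theorem~\ref{theo5}. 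Because $\Sigma_{j(\alpha),j(\beta)}\subset \Omega^{j(\alpha)}\cap \Omega^{j(\beta)}$ lies in a compact set on which the surface-trace estimate \eqref{surface} applies, and because the weight $e^{(d^{j(\alpha)}+d^{j(\beta)})/\hbar}$ times either $v-\tilde v$ or its covariant derivative is $O(\hbar^\infty)$ in $L^2(\Sigma,d\sigma)$, combining the $O(\hbar^\infty)$ bound on $\|v_\bullet -\tilde v_\bullet\|$ with the $O(\hbar^{-N_0})$ bound on $\|\tilde v_\bullet\|$ (via Cauchy--Schwarz on the surface) gives an error that is subdominant to $e^{-S_{j(\alpha),j(\beta)}/\hbar}$.

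Second, I would insert the explicit form $\tilde v_\alpha = \hbar^{-n/4} e^{-d^{j(\alpha)}/\hbar}\tilde a_\alpha$ and apply the Leibniz rule,
\begin{equation*}
  \nabla^\Eh_N \tilde v_\alpha = \hbar^{-n/4} e^{-d^{j(\alpha)}/\hbar}\Bigl(-\tfrac{1}{\hbar} \dd d^{j(\alpha)}(N)\, \tilde a_\alpha + \nabla^\Eh_N \tilde a_\alpha\Bigr),
\end{equation*}
so that the integrand factors as $\hbar^{-n/2}e^{-(d^{j(\alpha)}+d^{j(\beta)})/\hbar}\cdot A(m;\hbar)$ with
\begin{equation*}
  A(m;\hbar) = \tfrac{1}{\hbar}\bigl(\dd d^{j(\beta)}(N) - \dd d^{j(\alpha)}(N)\bigr) \gamma_m[\tilde a_\alpha,\tilde a_\beta] + \gamma_m[\nabla^\Eh_N \tilde a_\alpha,\tilde a_\beta] - \gamma_m[\tilde a_\alpha,\nabla^\Eh_N \tilde a_\beta],
\end{equation*}
and $A$ possesses a full asymptotic expansion in half-integer powers of $\hbar$ starting at order $\hbar^{-1-N_\alpha-N_\beta}$ by \eqref{arep}.

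Third, I would perform a Laplace-type asymptotic analysis of the resulting integral. By Hypothesis~\ref{hyp4}(3), the phase $d^{j(\alpha)}+d^{j(\beta)}$ is smooth on $\Sigma_{j(\alpha),j(\beta)}$, achieves its minimum value $S_{j(\alpha),j(\beta)}$ exactly on the submanifold $H_{j(\alpha),j(\beta)}$ of dimension $\ell$, and the transverse Hessian $D^2_{\alpha,\beta}$ is positive-definite by \eqref{hyp4_0}. Using a Morse lemma with parameters along $H_{j(\alpha),j(\beta)}$, I would introduce normal coordinates in the $(n-1-\ell)$ transverse directions in which the phase becomes a pure quadratic form; contributions from outside a tubular neighbourhood of $H_{j(\alpha),j(\beta)}$ are $O(\hbar^\infty)e^{-S_{j(\alpha),j(\beta)}/\hbar}$ by the strict inequality in \eqref{hyp4_0}. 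Gaussian integration then produces the factor $(2\pi\hbar)^{(n-1-\ell)/2}\bigl(\det D^2_{\alpha,\beta}\bigr)^{-1/2}$ and a full asymptotic expansion in half-integer powers of $\hbar$ whose coefficients integrate smooth data against $d\sigma$ on $H_{j(\alpha),j(\beta)}$. Collecting all prefactors, $\hbar^2\cdot \hbar^{-n/2}\cdot \hbar^{-1-N_\alpha-N_\beta}\cdot \hbar^{(n-1-\ell)/2} = \hbar^{(1-\ell)/2-N_\alpha-N_\beta}$, reproduces the claimed overall power of $\hbar$, and the leading amplitude furnishes \eqref{0thm2}; in Case~I, the transverse directions span all of $T_{m_0}\Sigma_{j(\alpha),j(\beta)}$ and the integral over $H_{j(\alpha),j(\beta)}$ collapses to pointwise evaluation at $m_0$, giving \eqref{0thm1}.

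The main obstacle will be Step three: setting up the Morse lemma with parameters on $H_{j(\alpha),j(\beta)}$ (which is, by hypothesis, only a manifold away from the wells but compactly supported inside $\Sigma_{j(\alpha),j(\beta)}$, so this is manageable) and verifying that the coordinate change correctly transforms the induced Riemannian surface measure $d\sigma$ on $\Sigma_{j(\alpha),j(\beta)}$ into the product of the intrinsic Gaussian measure (whose evaluation produces $(\det D^2_{\alpha,\beta})^{-1/2}$) and the induced measure $d\sigma$ on $H_{j(\alpha),j(\beta)}$, rather than some extraneous Jacobian. Once this geometric identification is in place, the remaining half-integer expansion follows from standard stationary-phase-with-parameters arguments applied term by term to the $\hbar$-expansion of $A$.
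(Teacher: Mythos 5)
Your proposal follows essentially the same route as the paper's proof: replacement of the Dirichlet eigensections by the WKB quasimodes on the hypersurface via Theorem \ref{theo5} and the surface-trace estimate \eqref{surface}, extraction of the phase $d^{j(\alpha)}+d^{j(\beta)}-S_{j(\alpha),j(\beta)}$ and the $\tfrac1\hbar\,\dd(d^{j(\beta)}-d^{j(\alpha)})(N)$ amplitude by the Leibniz rule, and then the tubular-neighborhood construction plus Morse lemma with parameters and stationary phase, with the identical power count $\hbar^{2-n/2}\cdot\hbar^{-1}\cdot\hbar^{(n-\ell-1)/2}=\hbar^{(1-\ell)/2}$. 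The Jacobian issue you flag as the main obstacle is resolved in the paper exactly as you anticipate, by choosing the tubular-neighborhood coordinates so that $N_m=\partial_{t_m}$ and the Jacobian of the Morse change of variables equals $1$ on $H_{j(\alpha),j(\beta)}$.
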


We remark that all $I_p = I_{p;\alpha,\beta}$ depend on $\alpha,\beta$. Moreover, the leading order term satisfies $I_{0;\alpha,\beta} = \overline{I_{0;\beta,\alpha}}$ 
(since $\gamma_m$ is Hermitian and switching $\alpha$ and $\beta$ implies switching the orientation of $N$). 
Thus $I_{0;\alpha,\beta}$ gives the leading order of the (by construction) self-adjoint matrix 
$\Bigl( \tilde{m}_{\alpha\beta} - \mu_\alpha \delta_{\alpha\beta}\Bigr)_{\alpha,\beta\in\mathcal{J}}$ (see Theorem \ref{theo3}) 
describing the interaction between the wells $m^{j(\alpha)}$ and $m^{j(\beta)}$.

Recall that by construction, the eigenvalues of $H_\ep$ exponentially close to $\mu_\alpha$ for $\alpha\in \mathcal{J}$
(given in Hypothesis \ref{hyp4}) also lie in the spectral interval $I$. 
Thus, by Corollary  \ref{cor2} specialized to the case of exactly two elements in $\mathcal{J}$, the operator $H_\ep$ has precisely two
eigenvalues $\lambda_\pm$ inside $I$. Up to errors $\expord{2\sigma}$ (for any $\sigma<S_2$), these are given by the eigenvalues of the $2\times 2$-matrix
$\begin{pmatrix} \mu_\alpha & \tilde{w}_{\alpha\beta}\\ \tilde{w}_{\alpha\beta} & \mu_\beta\end{pmatrix}$, namely
\[ \lambda_{\pm} = \frac{\mu_\alpha + \mu_\beta}{2} \pm \sqrt{\frac{1}{4}(\mu_\alpha - \mu_\beta)^2 + \tilde{w}_{\alpha\beta}^2} + \expord{2\sigma}\; .
\]
Thus in this case the eigenvalue splitting is explicitly given by 
$$\lambda_+ - \lambda_- = 2\sqrt{\frac{1}{4}(\mu_\alpha - \mu_\beta)^2 + \tilde{w}_{\alpha\beta}^2} + \expord{2\sigma}.$$
In the symmetric case with $\mu_\alpha = \mu_\beta$, the splitting is, modulo $\expord{2\sigma}$, given by the symmetric interaction term 
$\tilde{w}_{\alpha\beta} = \frac{1}{2}(w_{\alpha\beta}+w_{\beta\alpha})$.
The complete asymptotic expansion of $\tilde{w}_{\alpha\beta}$ (via expansion of both $w_{\alpha\beta}, w_{\beta\alpha})$) given in 
Theorem \ref{theo6} also gives  an asymptotic expansion of the eigenvalue splitting, if 
$\mu_\alpha - \mu_\beta = \expord{A}$, where $A > S_{j(\alpha),j(\beta)}$.

\begin{proof}
We only prove Case 2.
Theorem \ref{theo5} allows to replace modulo terms of order $O(\hbar^\infty)$ the Dirichlet eigenfunctions $v_\alpha$ and $v_\beta$ in \eqref{prop4_0b} 
or \eqref{prop4_0a} respectively by the 
associated approximate eigenfunctions $\tilde{v}_\alpha$ and $\tilde{v}_{\beta}$.

In fact, for the first term in the integrand on the rhs of equation \eqref{prop4_0b} one obtains
\begin{equation}  \label{comp1}
\gamma_m[\nabla^\Eh_N v_\alpha, v_\beta] - \gamma_m[\nabla^\Eh_N \tilde{v}_\alpha, \tilde{v}_\beta] =
\gamma_m[\nabla^\Eh_N (v_\alpha - \tilde{v}_{\alpha}, v_\beta] +
\gamma_m[\nabla^\Eh_N \tilde{v}_\alpha, v_\beta- \tilde{v}_\beta]
\end{equation}
where $N$ is the unit normal vector field on $\Sigma_{j(\alpha), j(\beta)}$ pointing from $m^{j(\alpha)}$ to $m^{j(\beta)}$.
Writing $\alpha = (j,\ell)$ and $\beta = (k, u)$, and using (for $w$ equal to 
$e^{d^j/\hbar} v_\alpha$ or $e^{d^j/\hbar} ( v_\alpha - \tilde{v}_\alpha)$)
the identity
\begin{equation} \label{comp2}
\nabla^\Eh_N e^{-d^j/\hbar} w=  e^{-d^j/\hbar} \left( -\frac{1}{\hbar} \dd d^j(N)\otimes w + \nabla^\Eh_N w \right),
\end{equation}
straightforward calculation
of $\int_{\Sigma_{jk}} \text{rhs} \eqref{comp1}\, d\sigma$ gives, by use of the estimate \eqref{surface}, Schwarz inequality and Corollary \ref{cor1}, an error of order $O(\hbar^\infty e^{-S_{j,k}/\hbar} )$. 
Treating the second  term in the integrand on the right hand side of equation \eqref{prop4_0b}  in the same way proves our claim. 

Thus, modulo $O\Bigl(\hbar^{\infty}e^{-S_{j,k}/\hbar}\Bigr)$, we  
get, using $S_{j,k} < S_0 +a$ in the case specified in \eqref{prop4_00} (or $S_{j,k}=S_0$ in the case
specified in \eqref{prop4_00a}, leading to the estimate \eqref{prop4_0a}), the representation formula

\begin{align*}
w_{\alpha\beta}& \equiv \hbar^2 \int_{\Sigma_{j,k}} \Bigl( \gamma_m[\nabla^\Eh_N \tilde{v}_\alpha, \tilde{v}_\beta] - 
\gamma_m[\tilde{v}_\alpha, \nabla^\Eh_N \tilde{v}_\beta] \Bigr)\, d\sigma (m) 
\\
&= \hbar^{2-\frac{n}{2}} \int_{\Sigma_{j,k}}  \Bigl( \gamma_m[\nabla^\Eh_N e^{-\frac{d^j}{\hbar}}\tilde{a}_\alpha(\cdot , \hbar),  e^{-\frac{d^k}{\hbar}} \tilde{a}_\beta(\cdot , \hbar)]  - 
\gamma_m[ e^{-\frac{d^j}{\hbar}}\tilde{a}_\alpha(\cdot , \hbar), \nabla^\Eh_N e^{-\frac{d^k}{\hbar}}\tilde{a}_\beta(\cdot , \hbar)] \Bigr)\, d\sigma (m) 
 \label{theo6_1}
   \end{align*}
where in the second equation we used \eqref{hatvjk}, \eqref{theo5_0} and the definition of $\tilde{a}$.
Using 
\[ \nabla^\Eh_N e^{-\frac{d^k}{\hbar}}\tilde{a}_\beta(\cdot , \hbar) = e^{-\frac{d^k}{\hbar}}\bigl(\frac{1}{\hbar} \dd d^k(N) \tilde{a}_\beta (\cdot , \hbar) + \nabla^\Eh_N \tilde{a}_\beta (\cdot , \hbar)\bigr) \]
and the notations 
\begin{align} \label{theo6_11}
\varphi_{jk} &:= d^j + d^k - S_{j,k}\\
\eta_{\alpha ,\beta}(m, \hbar) &:=  -\gamma_m [\dd d^j(N) \tilde{a}_\alpha(\cdot , \hbar),  \tilde{a}_\beta(\cdot , \hbar)] +  
 \gamma_m[\tilde{a}_\alpha(\cdot , \hbar),  \dd d^k (N) \tilde{a}_\beta(\cdot , \hbar)] \nonumber \\
 \mu_{\alpha ,\beta}(m, \hbar) &:= \gamma_m [ \nabla^\Eh_N \tilde{a}_\alpha(\cdot , \hbar),  \tilde{a}_\beta(\cdot , \hbar)] - \gamma_m[ \tilde{a}_\alpha(\cdot , \hbar),  \nabla^\Eh_N \tilde{a}_\beta(\cdot , \hbar)] \nonumber
\end{align}
we therefore get  
\begin{equation}\label{theo6_2}
 w_{\alpha \beta} \equiv \hbar^{2-\frac{n}{2}} e^{-\frac{S_{jk}}{\hbar}}   \int_{\Sigma_{j,k}}e^{-\frac{\varphi_{jk}}{\hbar} } \Big[ \frac{1}{\hbar} \eta_{\alpha,\beta}(m,\hbar) +  \mu_{\alpha,\beta}(m,\hbar) \Bigr] \, d\sigma (m) 
\, .
\end{equation}

We now use an adapted version of stationary phase. We choose vector fields on $H_{jk}$ as described above equation \eqref{transversHessian} and 
a (sufficiently small) tubular neighborhood $\tau$ of $H_{jk}$ on $\Sigma_{jk}$. 

Using the Tubular Neighborhood Theorem, there is a diffeomorphism 
\begin{equation}\label{theo6_3} 
\kappa: \tau \rightarrow H_{jk}\times (-\delta, \delta)^{n-\ell-1} \, , \quad k(x) = (s,t)
\end{equation}
 such that
for each $x\in \tau$ there exists exactly one $s\in H_{jk}$ and
$t\in (-\delta, \delta)^{n-\ell-1}$ such that 
\begin{equation}\label{theo6_4}
x= s + \sum_{m=1}^{n-\ell-1} t_{m} N_m(s) \quad\text{for}\quad \kappa(x) = (s, t) \, . 
\end{equation}
This follows from the proof of the Tubular Neighborhood Theorem, see e.g. \cite{hirsch}. It allows to continue the vector fields $N_m,\, m=1, \ldots n-\ell-1,$ from 
$H_{jk}$ to $\tau$ by setting $N_m(x):= N_m(s)$, thus
$N_m = \partial_{t_{m}}$ and the vectorfields $N_m$ commute.

For $\varphi_{jk}$ given in \eqref{theo6_11}, we define 
\[ \tilde{\varphi}_{jk}:= \varphi_{jk} \circ \kappa^{-1} : H_{jk}\times (-\delta, \delta)^{n-\ell-1} \rightarrow \R \quad\text{ with }\quad
\tilde{\varphi}_{jk}(s,t) := \varphi_{jk} \circ \kappa^{-1}(s,t) = \varphi_{jk} (x)\; .\]
Then $H_{jk}$ is given by $t=0$ and
\begin{equation}\label{theo6_5}
 \tilde{\varphi}_{jk}|_{H_{jk}} = \varphi_{jk}|_{H_{jk}} = 0 \quad\text{and} \quad \varphi_{jk}(x)>0\; \text{ for }\; x\in \tau\setminus H_{jk}\, .
 \end{equation}
Moreover, since $\varphi_{jk}$ is constant and minimal on $H_{jk}$ and since $\varphi_{jk}(x)\geq \dist (x, H_{jk})^2$ by equation \eqref{hyp4_0} in Hypothesis \ref{hyp4},
\begin{align}
 N_m \varphi_{jk}|_{H_{jk}} &= \frac{\partial}{\partial s_m} \tilde{\varphi}_{jk}|_{H_{jk}} = 0 \quad\text{for}\quad m=n-\ell, \ldots n-1 \nonumber\\
 N_m \varphi_{jk}|_{H_{jk}} &= \frac{\partial}{\partial t_m} \tilde{\varphi}_{jk}|_{H_{jk}} = 0 \quad\text{for}\quad m=1, \ldots n-\ell-1 \nonumber \\
 \Bigl( N_mN_u \varphi_{jk}|_{H_{jk}}\Bigr)_{1\leq m,u \leq n-\ell-1} & = D^2_t \tilde{\varphi}_{jk}|_{t=0} > 0\; .\label{theo6_6}
\end{align}
Now we use the following adapted version of the Morse-Lemma with parameters (see e.g. \cite{Dui}, \cite{lang}).

\begin{Lem}\label{Morse}
	Let $\phi\in\Ce^\infty \bigl(H_{jk}\times (-\delta, \delta)^{n-\ell-1}\bigr)$ be such that $\phi(s,0) = 0$, $D_t\phi (s,0) = 0$ and the
	transversal Hessian $D^2_t\phi(s,\cdot)|_{t=0} =: Q(s)$ is non-degenerate for all $s\in H_{jk}$. Then, for each $s\in H_{jk}$, there is a diffeomorphism
	$ y(s,.):  (-\delta, \delta)^{n-\ell-1} \rightarrow U$, where $U \subset\R^{n-\ell-1}$ is some neighborhood of $0$, such that 
	\begin{equation}\label{3.2thm2} 
	y(s,t) = t + O\bigl(|t|^2\bigr) \quad\text{as}\;\; |t|\to 0\quad\text{and}\quad \phi(s,t) = \frac{1}{2}\langle y(s,t),  Q(s) y(s,t)\rangle\; . 
	\end{equation}
	Furthermore, $y(s,t)$ is $\Ce^\infty$ in $s\in H_{jk}$.
\end{Lem}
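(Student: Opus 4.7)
The plan is the classical Morse lemma applied uniformly in the parameter $s$, combining Taylor's theorem with integral remainder and the parametric implicit function theorem.

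First I would Taylor-expand $\phi$ in $t$. Since $\phi(s,0)=0$ and $D_t\phi(s,0)=0$, Taylor's remainder formula gives
\[
 \phi(s,t) \;=\; \tfrac{1}{2}\langle t, A(s,t) t\rangle, \qquad A(s,t) \;:=\; 2\int_0^1 (1-\tau)\, D_t^2\phi(s,\tau t)\,\dd\tau,
\]
with $A$ smooth in $(s,t)$, symmetric matrix valued, and $A(s,0)=Q(s)$. The next step is to factor this quadratic form as $A(s,t) = B(s,t)^{\top} Q(s) B(s,t)$ with $B(s,0) = I$ and $B$ smooth in $(s,t)$; once such a $B$ is available, the function $y(s,t) := B(s,t)\,t$ is smooth in both variables, satisfies $y(s,t) = t + O(|t|^2)$ as $|t|\to 0$ and automatically reproduces $\phi(s,t) = \tfrac{1}{2}\langle y(s,t), Q(s) y(s,t)\rangle$. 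Since $D_t y(s,0) = I$, the inverse function theorem (applied for each fixed $s$) makes $y(s,\cdot)$ a diffeomorphism from a neighborhood of $0$ in $(-\delta,\delta)^{n-\ell-1}$ onto its image $U \subset \R^{n-\ell-1}$.

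To produce the factor $B$ I would apply the implicit function theorem to
\[
F(s,t,B) \;:=\; B^{\top} Q(s) B - A(s,t)
\]
around a base point $(s_0,0,I)\in H_{jk}\times\R^{n-\ell-1}\times GL(n-\ell-1)$. Its $B$-derivative at $I$ is the linear map $X \mapsto X^{\top} Q(s_0) + Q(s_0) X$, which takes values in symmetric matrices and is already surjective onto them (for any symmetric $S$, the choice $X=\tfrac12 Q(s_0)^{-1}S$ is a pre-image, using that $Q(s_0)^{-1}$ is symmetric). Restricted to symmetric $X$ it becomes the Sylvester operator $X\mapsto XQ(s_0)+Q(s_0)X$ whose eigenvalues are the pairwise sums of eigenvalues of $Q(s_0)$; as $Q(s_0)>0$ in our application, this is an isomorphism of symmetric matrices. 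The parametric implicit function theorem thus produces a unique symmetric $B(s,t)$, smooth in both arguments near $(s_0,0)$, with $B(s,0)=I$ and $F(s,t,B(s,t))=0$.

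The main technical point I anticipate is ensuring that $B$ is smooth globally in $s\in H_{jk}$, not just near each chosen base point, and that after a possible shrinking of $\delta$ the resulting $y(s,\cdot)$ is defined on the full box. The first is handled by the uniqueness of the symmetric branch in the implicit function theorem: any two locally constructed symmetric solutions agree on overlaps, so they glue into a smooth $B$ on a neighborhood of $H_{jk}\times\{0\}$; the second is absorbed into the choice of $\delta$ (the lemma being local in $t$). Smoothness in $s$ then transfers from $B$ to $y$, completing the proof.
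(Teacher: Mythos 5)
The paper does not actually prove Lemma \ref{Morse}: it is invoked as a standard ``Morse lemma with parameters'' with a pointer to Duistermaat and Lang. Your argument is the standard proof of that result and is essentially correct. The Taylor step $\phi(s,t)=\tfrac12\langle t,A(s,t)t\rangle$ with $A(s,0)=Q(s)$ is right (including the factor $2$ in the integral remainder), the factorization $A=B^{\top}Q B$ with $B(s,0)=I$ does yield $y(s,t)=B(s,t)t=t+O(|t|^2)$ with $\phi(s,t)=\tfrac12\langle y,Q(s)y\rangle$, and the diffeomorphism property and global smoothness in $s$ follow as you describe, after shrinking $\delta$ (which the surrounding text permits, since the tubular neighbourhood is taken ``sufficiently small'' and $H_{jk}$ is compact).

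One caveat on the implicit function theorem step. Restricting to \emph{symmetric} $B$ turns the linearization into the Lyapunov operator $X\mapsto XQ(s)+Q(s)X$ on symmetric matrices, whose invertibility requires that no two eigenvalues of $Q(s)$ sum to zero. This holds when $Q(s)$ is definite --- as it is in the application, by \eqref{theo6_6} --- but can fail for a merely non-degenerate $Q(s)$, which is all the lemma hypothesizes. To prove the lemma exactly as stated, seek $B=I+Q(s)^{-1}S$ with $S$ symmetric instead: on that affine slice the linearization is $S\mapsto SQ^{-1}Q+QQ^{-1}S=2S$, which is an isomorphism of the symmetric matrices for \emph{every} non-degenerate symmetric $Q(s)$, and the uniqueness of this branch supports the same gluing argument you give. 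With that modification (or with an explicit remark that the lemma is only applied with $Q(s)>0$), your proof is complete.
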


By \eqref{theo6_5} and \eqref{theo6_6}, the phase function $\tilde{\varphi}_{jk}$ satisfies the assumptions on $\phi$ given in Lemma \ref{Morse}.
We thus can define the diffeomorphism $h:= \id \times y: H_{jk}\times (-\delta, \delta)^{n-\ell-1}\rightarrow H_{jk} \times U$ for
$y$ constructed with respect to $\tilde{\varphi}_{jk}$ as in Lemma \ref{Morse}. 
Using the diffeomorphism $\kappa$ as given in \eqref{theo6_3} and \eqref{theo6_4}, 
we set $g(x) := h\circ \kappa (x) = (s, y)$ (then $g^{-1}(s,0) = s$ for $s\in H_{jk}$). Thus
\begin{equation}\label{theo6_7}
\varphi_{jk} \bigl(g^{-1}(s,y)\bigr) = \frac{1}{2} \langle  y, Q(s) y\rangle
\end{equation}
and, modulo $O\bigl(\hbar^\infty e^{-\frac{S_{jk}}{\hbar}}  \bigr)$, we obtain by \eqref{theo6_2}, setting $x:=g^{-1}(s,y)$,
\begin{equation} \label{theo6_8}
w_{\alpha\beta} \equiv \hbar^{2-\frac{n}{2}} e^{-S_{jk}/\hbar} \int_{H_{j,k}} \int_U e^{-\langle  y, Q(s) y\rangle/2\hbar} 
\Bigl[ \frac{1}{\hbar}\eta_{\alpha,\beta}(g^{-1}(s,y)) + \mu_{\alpha,\beta}(g^{-1}(s,y)) \Bigr] J(s,y) \, dy\, d\tilde{\sigma} (s)
\end{equation}
where $d\tilde{\sigma}$ is the Euclidean surface measure on $H_{jk}$ and $J(s,y)= \det D_y g^{-1}(s, y)$ denotes the Jacobi determinant  for the
diffeomorphism $g^{-1}(s, .)$ which maps $U$ into a subset of $\Span \bigl(N_{1}(s), \- \ldots, N_{n-\ell-1}(s)\bigr)$
and $Q(s)=D^2_t\tilde{\varphi}_{jk}(s,\cdot)|_{t=0}$ denotes the transversal Hessian of $\tilde{\varphi}_{jk}$ 
as given in \eqref{theo6_6}.  From the construction of $g$ and \eqref{theo6_4} it follows that $J(s,0) = 1$ for all $s\in H_{jk}$.

By the stationary phase formula with respect to $y$ in \eqref{theo6_8}, we get modulo $O\bigl(e^{-\frac{S_{jk}}{\hbar}} \hbar^\infty \bigr)$
\begin{align}\label{theo6_9}
w_{\alpha\beta} 
&\equiv  \hbar^{\frac{1- \ell}{2}} e^{-\frac{S_{jk}}{\hbar}}\bigl(2 \pi\bigr)^{\frac{n-\ell-1}{2}}\sum_{\nu=0}^\infty
\hbar^\nu  \int_{H_{j,k}}B_{\nu} (s)\, d\sigma(s) \quad\text{where}\\
B_{\nu} (s) &= \bigl(\det Q(s)\bigr)^{-\frac{1}{2}} \frac{1}{\nu !}\Bigl( \langle \partial_y, Q^{-1}(s) \partial_y\rangle^\nu 
J \bigl( \eta_{\alpha,\beta}+ \hbar \mu_{\alpha,\beta}\bigr)\circ g^{-1} \Bigr)(s,0) \, .\nonumber
\end{align}
In particular, for any $s\in H_{j,k}$, using the notation \eqref{transversHessian}, $B_{0}(s)$ is given by the leading order of
\begin{equation}\label{theo6_10}
\Bigl(\det Q(s)\Bigr)^{-\frac{1}{2}} \eta_{\alpha,\beta}\circ g^{-1}(s,0) 
=  \Bigl|\det D^2_{\alpha\beta}(s)\Bigr|^{-\frac{1}{2}} \eta_{\alpha,\beta}(s)\, ,
\end{equation}
using \eqref{theo6_6} and identifying $s\in H_{j,k}$ with a point in $M$.

We now use the definition of $\eta_{\alpha,\beta}$ in equation \eqref{theo6_11} and the expansions of $\tilde{a}_\alpha \tilde{a}_\beta$ as given in Theorem \ref{theo4} to get
\begin{align}
\eta_{\alpha,\beta}(x) &= \Bigl(\dd d^k(N) - \dd d^j (N)\Bigr)(x) \gamma_x \bigl[\tilde{a}_\alpha(\cdot , \hbar), \tilde{a}_\beta(\cdot , \hbar)\bigr] \nonumber\\
&= \hbar^{-(N_\alpha+N_\beta)}\Bigl(\dd d^k(N) - \dd d^j (N)\Bigr)(x) \gamma_x \bigl[\tilde{a}_{\alpha,N_\alpha},\tilde{a}_{\beta, N_\beta}\bigr] + O\Bigl(\hbar^{-(N_\alpha+N_\beta)+ \frac{1}{2}}\Bigr) \label{8.2thm2}
\end{align}
Combining \eqref{8.2thm2}, \eqref{theo6_10} and \eqref{theo6_9} completes the proof.
\end{proof}

\end{document}